\journal{Journal of \LaTeX\ Templates}
\newtheorem{theorem}{Theorem}[section]
\newtheorem{define}[theorem]{Definition}
\newtheorem{remark}[theorem]{Remark}
\newtheorem{prop}[theorem]{Proposition}
\newtheorem{corollary}[theorem]{Corollary}
\newtheorem{lemma}[theorem]{Lemma}
\newtheorem{example}[theorem]{Example}
\newtheorem{notation}[theorem]{Notation}
\newcommand{\Z} {\mathbb{Z}}
\newcommand{\N} {\mathbb{N}}
\newcommand{\bP} {\mathbb{P}}
\newcommand{\bV} {\mathbb{V}}
\newcommand{\tdeg}{{\rm tdeg}}
\newcommand{\bfm}{{\bf m}}
\newcommand{\bfn}{{\bf n}}
\newcommand{\bfc}{{\bf c}}
\newcommand{\bfd}{{\bf d}}
\newcommand{\bfa}{{\bf a}}
\newcommand{\bfb}{{\bf b}}
\newcommand{\bfq}{{\bf q}}
\newcommand{\bfy}{{\bf y}}
\newcommand{\frakp}{{\mathfrak p}}
\newcommand{\frakq}{{\mathfrak q}}
\newcommand{\calB}{{\mathcal B}}
\newcommand{\calS}{{\mathcal S}}
\newcommand{\calC}{{\mathcal C}}
\newcommand{\calL}{{\mathcal L}}
\newcommand{\calQ}{{\mathcal Q}}
\newcommand{\calR}{{\mathcal R}}
\newcommand{\frakP}{{\mathfrak P}}
\newcommand{\frakQ}{{\mathfrak Q}}
\newcommand{\frakL}{{\mathfrak L}}
\newcommand{\bfxi}{{\bm \xi}}
\newcommand{\ord}{{\rm ord}}
\newcommand{\divs}{{\rm div}}
\newcommand{\diag}{{\rm diag}}
\newcommand{\supp}{{\rm supp}}
\newcommand{\rank}{{\rm rank}}
\newcommand{\res}{{\rm res}}
\newcommand{\GL}{{\rm GL}}
\newcommand{\msindex}{{\rm m.s.index}}
\newcommand{\overkt}{{\overline{k(t)}}}
\begin{document}

\begin{frontmatter}

\title{Rational Solutions of First Order Algebraic Ordinary Differential Equations}
\tnotetext[mytitlenote]{This work was supported by NSFC under Grants No.11771433 and No.11688101, and by Beijing Natural Science Foundation (Z190004).}

\author{Ruyong Feng and Shuang Feng}
\address{KLMM,Academy of Mathematics and Systems Science, Chinese Academy of Sciences and School of Mathematics, University of Chinese Academy of Sciences, 100190, Beijing\\
ryfeng@amss.ac.cn, fengshuang15@mails.ucas.ac.cn}

\begin{abstract}
Let $f(t, y,y')=\sum_{i=0}^d a_i(t, y)y'^i=0$ be a first order ordinary differential equation with polynomial coefficients.
Eremenko in 1999 proved that there exists  a constant $C$ such that every rational solution of $f(t, y,y')=0$ is of degree not greater than $C$.  Examples show that this degree bound $C$ depends not only on  the degrees of $f$ in $t,y,y'$ but also on the coefficients of $f$ viewed as polynomial in $t,y,y'$.  In this paper, we show that if
$$
     \max_{i=0}^d \{ \deg(a_i,y)-2(d-i)\}>0
$$
then the degree bound $C$ only depends on the degrees of $f$, and furthermore we present an explicit expression for $C$ in terms of the degrees of $f$.
\end{abstract}

\begin{keyword}
\texttt{first order AODE, rational solution, degree bound, height}
\MSC[2010] 34A05\sep  68W30
\end{keyword}

\end{frontmatter}

\section{Introduction}
The study of first order algebraic ordinary differential equations (AODEs in short) has a long history, which can be at least tracked back to the time of Fuchs and Poincar\'e.  Fuchs presented a sufficient and necessary condition so called Fuchs' criterion for a first order AODE having no movable singularities. Roughly speaking, an AODE is said to have movable singularities if it has a solution (with arbitrary constants) whose branch points depend on arbitrary constants. For instance the solution $y=\sqrt{t+c}$ of $2yy'-1=0$ has branch points $t=-c$, where $c$ is an arbitrary constant, so $2yy'-1=0$ has movable singularities.
Based on differential algebra developed by Ritt \cite{ritt} and the theory of algebraic function field of one variable, Matsuda \cite{matsuda} reproduced many classic results of first order AODEs. In particular, he presented an algebraic definition of movable singularities. In 1999, combining Matsuda's results and height estimates of points on plane algebraic curves, Eremenko showed that rational solutions of  first order AODEs have bounded degrees. In \cite{feng-feng}, we proved that if a first order AODE has movable singularities then it has only finitely many rational solutions. As for algebraic solutions of first order AODEs, Freitag and Moosa \cite{freitag-moosa} showed that  they are of bounded heights.

On the other hand, the algorithmic aspects of computing closed form solutions of AODEs have been extensively studied in the past decades. Several algorithms have been developed for computing closed form solutions (e.g. liouvillian solutions) of linear homogeneous differential equations (see \cite{barkatou,kovacic,vanhoeij-ragot-ulmer-weil,singer1,vanderput-singer} etc). Yet, the situation is different in the nonlinear case. Existing algorithms were only valid for AODEs of special types. Based on parametrization of algebraic curves, Aroca et al \cite{aroca-cano-feng-gao,feng-gao} gave two complete methods for finding rational and algebraic solutions of first order autonomous AODEs. Their methods were generalized by Winkler and his collegues to the class of first order non-autonomous AODEs whose rational general solutions involve arbitrary constants rationally as well as some other certain classes of AODEs (see \cite{vo-grasegger-winkler1,vo-grasegger-winkler2,chau-winkler, winkler} etc). Particularly, in \cite{vo-grasegger-winkler1}, the authors introduced a class of first order AODEs called maximally comparable AODEs and presented an algorithm to compute a degree bound for rational solutions of this kind of equations as well as first order quasi-linear AODEs.   Readers are referred to \cite{winkler} for a survey of recent developments in this direction.
Theoretically, it suffices to compute a degree bound for all rational solutions of  a first order AODE to find all its rational solutions. The following example implies that the degrees of rational solutions may depend not only on the degrees of the original equation but also on its constant coefficients.
\begin{example}
Let $n$ be an integer. Then $y=t^n$ is a rational solution of $ty'-ny=0$.  The degree of $t^n$ depends on the constant coefficient $n$ of $ty'-ny$.
\end{example}
Let $f=\sum_{i=0}^d a_i(t,y)y'^i=0$ be an irreducible first order AODE.  Set
\begin{equation*}
\label{eqn:index}
     \msindex(f)=\max_{i=0}^d \{ \deg(a_i,y)-2(d-i)\}.
\end{equation*}
Fuchs' criterion (see Remark on page 14 of \cite{matsuda}) implies that $f=0$ has movable singularities if $\msindex(f)>0$. On the other hand, it was proved in \cite{eremenko} that if $f=0$ has movable singularities then it can be transferred into an AODE $g$ with positive $\msindex$. This motivates us to focus on first order AODEs with positive $\msindex$.
We prove that for an irreducible first order AODE $f=0$ with $\msindex(f)>0$ the degrees of rational solutions of $f=0$ are independent of the constant coefficients of $f$ and furthermore we present an explicit degree bound in terms of the degrees of $f$. The key step to obtain this degree bound is to estimate the heights of points on plane algebraic curves. This height estimate is a special case of the result about heights on complete nonsingular varieties (see for instance Proposition 3 on page 89 of \cite{lang}). Eremenko in \cite{eremenko} provided a simple proof for this special case based on the Riemann-Roch Theorem. We follow Eremenko's proof but present explicit bounds for each step.

The paper is organized as follows. In Section 2, we introduce some basic materials used in the later sections. In Sections 3, we estimate the degrees and heights for elements in a Riemann-Roth space. In Section 4, we present an explicit bound for the heights of points on a plane algebraic curve. Finally, in Section 5, we apply the results in Section 4 to first order AODEs.

Throughout this paper, $\Z$ stands for the ring of integers, $k, K$ for algebraically closed fields of characteristic zero, and $R$ and $\calR$ for algebraic function fields over $k$ and $K$ respectively.  $\bP^m(\cdot)$ denotes the projective space of dimension $m$ over a field and $\bV(\cdot)$ denotes the variety in a projective space defined by a set of homogeneous polynomials.

\section{Basic materials}
In this section, we will introduce some basic materials used in this paper, including differential rings, algebraic function fields of one variable and heights. Readers are referred to \cite{ritt, matsuda, chevalley, lang} for details.
\subsection{Differential fields associated to  AODEs}
In this subsection, we introduce some basic notations of differential algebra.
\begin{define}
A derivation on a ring $\calR$ is a map $\delta: \calR \rightarrow \calR$ satisfying that for all $a,b\in \calR$,
$$
   \delta(a+b)=\delta(a)+\delta(b),\,\,\delta(ab)=\delta(a)b+a\delta(b).
$$
A ring (resp. field) equipped with a derivation is called a differential ring (resp. differential field).  An ideal $I\subset \calR$ is called a differential ideal if $\delta(I)\subset I$.
\end{define}
The field $k(t)$ of rational functions in $t$ can be endowed with a structure of differential field whose derivation $\delta$ is the usual derivation with respect to $t$, i.e. $\delta=\frac{\rm d}{{\rm d} t}$. Set $y_0=y$ and denote
$$k(t)\{y\}=k(t)[y_0,y_1,\dots]$$
 where $y_0,y_1,\dots$ are indeterminates.  One can extend the derivation $\delta$ on $k(t)$ to a derivation $\delta'$ on $k(t)\{y\}$ by assigning $y_i=\delta'^i(y_0)$ so that $k(t)\{y\}$ becomes a differential ring.  For the sake of notations, we use $\delta$ in place of $\delta'$. Elements in $k(t)\{y\}$ are called differential polynomials over $k(t)$. Let $f$ be a differential polynomial not in $k(t)$. Then there is a unique integer $d$ such that
$f\in k(t)[y_0,\dots,y_d]\setminus k(t)[y_0,\dots,y_{d-1}]$. This integer is called the order of $f$. We shall use $[\cdot]$ (resp. $\langle \cdot \rangle$) to stand for the differential (resp. algebraic) ideal generated by a set of differential polynomials (resp. polynomials) respectively. Suppose that $f$ is irreducible viewed as an algebraic polynomial.  Set
 $$
   \Sigma_f=\left\{ A\in k(t)\{y\} | \,\exists\, m>0\, \,\mbox{s.t.}\, S^m A^m \in [f] \right\}
$$
where $S=\partial f/\partial y_d$ and $d$ is the order of $f$. It was proved on page 30 of \cite{ritt} that $\Sigma_f$ is a prime differential ideal and so $k(t)\{y\}/\Sigma_f$ is a differential domain.
Lemma 2.2 of \cite{feng-feng} implies that the field of  fractions of $k(t)\{y\}/\Sigma_f$ is isomorphic to that of $k(t)[y_0,\dots,y_d]/\langle f \rangle$. Under this isomorphism, the field of fractions of $k(t)[y_0,\dots,y_d]/\langle f\rangle$ can be endowed with a structure of differential field. We shall still use $\delta$, or $'$ in short,  to denote the induced derivation on the field of fractions of $k(t)[y_0,\dots,y_d]/\langle f \rangle$.

In this paper, the first order AODEs under consideration are differential equations of the following form
\begin{equation}
\label{eq:differentialeqn}
 f(y,y')=0
\end{equation}
where $f(y,y')\in k(t)[y,y']\setminus k(t)$.
\begin{define}
An element $r(t)\in k(t)$ satisfying
$f(r(t),r'(t))=0$ is called a rational solution of $f(y,y')=0$.
\end{define}
Remark that the derivation $\delta$ in $k(t)$ can be uniquely extended to a derivation in $\overkt$ which we shall still denote by $\delta$. Assume that viewed as a polynomial in $\overkt[y,y']$, $f$ is irreducible over $\overline{k(t)}$. Then the field of fractions of $\overkt[y,y']/\langle f(y,y') \rangle$ is not only an algebraic function field over $\overkt$ but also a differential field.

\subsection{Algebraic function fields of one variable}
Let $K$ be an algebraically closed field of characteristic zero and  $\calR$ an extension field of $K$.
We say $\calR$ is an {\em algebraic function field of one variable} over $K$ if $\calR$ satisfies the following conditions: there is an element $a$ of $\calR$ which is transcendental over $K$, and $\calR$ is algebraic of finite degree over $K(a)$. Assume $\calR$ is an algebraic function field of one variable over $K$. A {\em valuation ring} of $\calR$ over $K$ is a subring $V$ satisfying that
\begin{enumerate}
\item $K\subset V\neq \calR$; and
\item if $a\in \calR \setminus V$, then $a^{-1}\in V$.
\end{enumerate}
All non-invertible elements of $V$ form a maximal ideal $\frakP$ which is called a place of $\calR$, and $V$  is called the corresponding valuation ring of $\frakP$. Let $V$ be a valuation ring with $\frakP$ as place. There is an element $u\in V$, called a {\em local uniformizer} of $\frakP$ or $V$, such that $\frakP=uV$ and $\bigcap_{n=1}^\infty u^n V=\{0\}$. The factor ring $V/\frakP$ is equal to $K$ since $K$ is algebraically closed. For every valuation ring $V$ with place $\frakP$, we define a map
\[
\pi_{\frakP}: \calR \longrightarrow K\cup \{\infty\}
\]
satisfying if $a\in V$ then $\pi_{\frakP}(a)=a+\frakP\in V/\frakP=K$, otherwise $\pi_{\frakP}(a)=\infty$. It is well-known that $\calR$ admits infinitely many places, and there is one-to-one correspondence between places and valuation rings.

Let $\frakP$ be a place of $\calR$ and $V$ the corresponding valuation ring of $\frakP$. Let $u$ be a local uniformizer of $\frakP$. Then for every non-zero element $a$ of $\calR$, there is a unique integer $n$ such that
\[
a=u^nv
\]
for some invertible element $v\in V$. It is easy to see that the integer $n$ is independent of the choice of local uniformizers.
Such $n$ is called the {\em order} of $a$ at $\frakP$ and denoted by $\nu_{\frakP}(a)$. We make the convention to write $\nu_{\frakP}(0)=\infty$. Then the place $\frakP$ induces a map $\nu_{\frakP}$ from $\calR$ to $\Z$ sending $a$ to $\nu_{\frakP}(a)$. This map $\nu_{\frakP}$ is called the {\em order function} at $\frakP$. For $a,b\in \calR$, we have
$$
\nu_{\frakP}(ab)=\nu_{\frakP}(a)+\nu_{\frakP}(b),\,\,\nu_{\frakP}(a+b)\geq \min\{\nu_{\frakP}(a),\nu_{\frakP}(b)\}
$$
where the equality in the later formula holds if $\nu_{\frakP}(a)\neq \nu_{\frakP}(b)$. Let $a\in \calR$ and $\frakP$ be a place. We say $\frakP$ is a {\em zero} of $a$ if $\nu_{\frakP}(a)>0$, and a {\em pole} of $a$ if $\nu_{\frakP}(a)<0$. Every non-zero element of $\calR$ admits only finitely many zeros and poles.

A {\em divisor} in $\calR$ is  a formal sum
$$
D=\sum_{\frakP} n_{\frakP} \frakP
$$
for all the places of $\calR$, where $n_{\frakP} \in \Z$ and $n_{\frakP}=0$ for all but finitely many $\frakP$. It is easy to see that the set of divisors in $\calR$ forms an abelian group.
$D$ is {\em effective} if $n_{\frakP}\geq 0$ for all $\frakP$.
The {\em degree} of $D$, denoted by $\deg(D)$, is defined to be $\sum n_{\frakP}$ and the {\em support} of $D$, denoted by $\supp(D)$, is defined to be $\{\frakP \,|\, n_{\frakP}\neq 0\}$. For brief, we denote
$$
D^{+}=\sum_{n_\frakP>0}n_{\frakP} \frakP,\quad D^{-}=\sum_{n_{\frakP}<0}-n_{\frakP} \frakP.
$$
Let $D_1=\sum_{\frakP} n_{\frakP} \frakP$ and $D_2=\sum_{\frakP} m_{\frakP} \frakP$ be two divisors in $\calR$, we write $D_1\geq D_2 $ provided $D_1-D_2$ is effective. For every non-zero element $a$ of $\calR$, we denote
$$
 \divs(a)=\sum_{\frakP} \nu_{\frakP}(a) \frakP
$$
where $\frakP$ ranges over all places of $\calR$. Then $\divs(a)$ is a divisor of degree $0$. For a divisor $D$, we denote
$$
    \frakL(D)=\{a\in \calR\mid \divs(a)+D\geq 0\}\cup \{0\},
$$
which is called the Riemann-Roch space of $D$.
It is well-known that each Riemann-Roch space is a $K$-vector space of finite dimension.
The Riemann-Roch Theorem implies that if $D$ is a divisor whose degree is not less than the genus of $\calR$ then $\frakL(D)$ is of positive dimension.

Let $f\in K[x_0,x_1]\setminus K$ be irreducible.  One sees that  the field of fractions of $K[x_0,x_1]/\langle f \rangle$ is an algebraical function field of one variable over $K$ which is called the algebraic function field of $f$. For an irreducible homogeneous polynomial $F$ in $K[x_0,x_1,x_2]$, the corresponding algebraic function field is defined to be the algebraic function field of $F(x_0,x_1,1)$.  Remark that the algebraic function fields of $F(1,x_1,x_2), F(x_0,1,x_2)$ and $F(x_0,x_1,1)$ are all isomorphic. 

\subsection{Models of algebraic function fields of one variable}
Let $\calR$ be an algebraic function field of one variable over $K$. The set of all places of $\calR$ can be viewed as a nonsingular model of $\calR$. On the other hand, let $F$ be an irreducible homogeneous polynomial $F\in K[x_0,x_1,x_2]$ whose algebraic function field is $\calR$. Then the projective curve $F=0$ is another model of $\calR$. There is a surjective map from a nonsingular model of $\calR$  to the curve $F=0$. To describe this map precisely, let $\xi_0,\xi_1,\xi_2$ be three nonzero elements of $\calR$ satisfying that
$$\calR=K(\xi_0/\xi_2,\xi_1/\xi_2)\,\,\mbox{and}\, \,F(\xi_0,\xi_1,\xi_2)=0.$$
Set $\bfxi=(\xi_0,\xi_1,\xi_2)$. Let $\frakP$ be a place of $\calR$ with $u$ as local uniformizer. Denote by $\ell=\min_{i} \{\nu_{\frakP}(\xi_i)\}$. One sees that $\nu_{\frakP}(u^{-\ell}\xi_i)\geq 0$ and moreover not all $\pi_{\frakP}(u^{-\ell}\xi_i)$ are zero. Therefore $(\pi_{\frakP}(u^{-\ell}\xi_0), \pi_{\frakP}(u^{-\ell}\xi_1),\pi_{\frakP}(u^{-\ell}\xi_2))$ defines a point of $\bP^2(K)$. Remark that this point does not depend on the choice of $u$.
\begin{define}
 We call $(\pi_{\frakP}(u^{-\ell}\xi_0), \pi_{\frakP}(u^{-\ell}\xi_1),\pi_{\frakP}(u^{-\ell}\xi_2))$ the center of $\frakP$ with respect to $\bfxi$. Denote by $\calC(\bfxi)$ the set of centers with respect to $\bfxi$.
\end{define}
 We claim that $\calC(\bfxi)$ is the plane projective curve in $\bP^2(K)$ defined by $F$ and  the map sending $\frakP$ to the center of $\frakP$ with respect to $\bfxi$ is the required map. It is easy to verify that $F(\bfc)=0$ for all $\bfc\in \calC(\bfxi)$. Conversely, let $(c_0,c_1,c_2)$ be a point of $F=0$. Without loss of generality, we may assume that $c_0\neq 0$.  Then $F(1,c_1/c_0,c_2/c_0)=0$. Remark tha $\calR=K(\xi_1/\xi_0,\xi_2/\xi_0)$.  As $F(1,\xi_1/\xi_0,\xi_2/\xi_0)=0$, due to Corollary 2 on page 8 of \cite{chevalley}, there is a place $\frakP$ containing $\xi_1/\xi_0-c_1/c_0$ and $\xi_2/\xi_0-c_2/c_0$.  For this place, one has that
$\nu_{\frakP}(\xi_1)\geq \nu_{\frakP}(\xi_0), \nu_{\frakP}(\xi_2)\geq \nu_{\frakP}(\xi_0) $ and furthermore $\pi_{\frakP}(\xi_i/\xi_0)=c_i/c_0$. Write $\ell=\nu_{\frakP}(\xi_0)$. Then the center of $\frakP$ with respect to $\bfxi$ is
 \begin{align*}
    (\pi_{\frakP}(u^{-\ell}\xi_0),\pi_{\frakP}(u^{-\ell}\xi_1),\pi_{\frakP}(u^{-\ell}\xi_2))=\pi_{\frakP}(u^{-\ell}\xi_0)(1,c_1/c_0,c_2/c_0).
\end{align*}
 This implies that $(c_0,c_1,c_2)\in \calC(\bfxi)$.
\begin{define}
We call $\calC(\bfxi)$ or $F=0$ a plane projective model of $\calR$.
\end{define}
The plane projective models of $\calR$ usually have singularities.  Let $\calC$ be an irreducible projective curve in $\bP^2(K)$ defined by a homogeneous polynomial $F$.  A point $\bfc$ of $\calC$ is said to be of multiplicity $r$, if  all derivatives of $F$ up to and including the $(r-1)$-th  vanish at $\bfc$ but not all the $r$-th derivatives vanish at $\bfc$. Suppose that $\bfc$ is a point of $\calC$ with multiplicity $r$. If $r=1$, then $\bfc$ is called a simple point of $\calC$, otherwise a singular point of $\calC$. A point of multiplicity $r$ is ordinary if the $r$ tangents to $\calC$ at this point are distinct, otherwise it is non-ordinary. Due to Propositon on page of \cite{fulton}, $\calR$ has always a plane projective model with only ordinary singularities.

Let $\Phi=(\phi_0,\phi_1,\phi_2)$ be an invertible transformation, where $\phi_0,\phi_1,\phi_2$ are homogeneous polynomials in $K[x_0,x_1,x_2]$ of the same degree and they have no common factors.  We further assume that $\phi_i(\bfxi)\neq 0$ for all $i=0,1,2$. Then
$$\calR=K\left(\frac{\phi_0(\bfxi)}{\phi_2(\bfxi)},\frac{\phi_1(\bfxi)}{\phi_2(\bfxi)} \right).$$
\begin{prop}
\label{prop:centertransformation}
Let $\Phi,\bfxi$ be as above and $\frakP$ a place of $\calR$. Assume that $\bfc$ is the center of $\frakP$ with respect to $\bfxi$. If $\Phi(\bfc)\neq (0,0,0)$, then $\Phi(\bfc)$ is the center of $\frakP$ with respect to $\Phi(\bfxi)$.
\end{prop}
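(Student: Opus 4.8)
The plan is to compute the center of $\frakP$ with respect to $\Phi(\bfxi)$ straight from the definition and check that it equals $\Phi(\bfc)$. Fix a local uniformizer $u$ of $\frakP$, set $\ell=\min_i\{\nu_{\frakP}(\xi_i)\}$ and $\zeta_i=u^{-\ell}\xi_i$ for $i=0,1,2$, and write $\bm{\zeta}=(\zeta_0,\zeta_1,\zeta_2)$. By the construction recalled before the previous definition, each $\zeta_i$ lies in the valuation ring $V$ of $\frakP$, not all of the $\pi_{\frakP}(\zeta_i)$ vanish, and $\bfc=(\pi_{\frakP}(\zeta_0),\pi_{\frakP}(\zeta_1),\pi_{\frakP}(\zeta_2))$. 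The crucial observation is that, since the $\phi_j$ are homogeneous of one and the same degree $m$, evaluating the homogeneity identity in the field $\calR$ gives $\phi_j(\bfxi)=\phi_j(u^{\ell}\zeta_0,u^{\ell}\zeta_1,u^{\ell}\zeta_2)=u^{m\ell}\phi_j(\bm{\zeta})$ for each $j$.

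Next I would use that the restriction of $\pi_{\frakP}$ to $V$ is exactly the quotient map $V\to V/\frakP=K$, hence a $K$-algebra homomorphism. As $\zeta_0,\zeta_1,\zeta_2\in V$ and $\phi_j\in K[x_0,x_1,x_2]$, this yields $\phi_j(\bm{\zeta})\in V$ and $\pi_{\frakP}(\phi_j(\bm{\zeta}))=\phi_j(\pi_{\frakP}(\zeta_0),\pi_{\frakP}(\zeta_1),\pi_{\frakP}(\zeta_2))=\phi_j(\bfc)$; in particular $\nu_{\frakP}(\phi_j(\bm{\zeta}))\geq 0$ for every $j$. The hypothesis $\Phi(\bfc)\neq(0,0,0)$ enters here: for some $j_0$ we have $\phi_{j_0}(\bfc)\neq 0$, i.e. $\pi_{\frakP}(\phi_{j_0}(\bm{\zeta}))\neq 0$, which forces $\nu_{\frakP}(\phi_{j_0}(\bm{\zeta}))=0$. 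Therefore $\min_j\{\nu_{\frakP}(\phi_j(\bfxi))\}=m\ell+\min_j\{\nu_{\frakP}(\phi_j(\bm{\zeta}))\}=m\ell$.

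Finally, writing $\ell'=m\ell$ for this minimum, the center of $\frakP$ with respect to $\Phi(\bfxi)=(\phi_0(\bfxi),\phi_1(\bfxi),\phi_2(\bfxi))$ is
\begin{align*}
&\bigl(\pi_{\frakP}(u^{-\ell'}\phi_0(\bfxi)),\pi_{\frakP}(u^{-\ell'}\phi_1(\bfxi)),\pi_{\frakP}(u^{-\ell'}\phi_2(\bfxi))\bigr)\\
&\qquad=\bigl(\pi_{\frakP}(\phi_0(\bm{\zeta})),\pi_{\frakP}(\phi_1(\bm{\zeta})),\pi_{\frakP}(\phi_2(\bm{\zeta}))\bigr)=\Phi(\bfc),
\end{align*}
which is what we want. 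I do not expect a serious obstacle: the one point that genuinely needs care is the evaluation of $\min_j\{\nu_{\frakP}(\phi_j(\bfxi))\}$, and that is precisely where the assumption $\Phi(\bfc)\neq(0,0,0)$ is used — it is also what guarantees that the normalization by $u^{-\ell'}$ yields a bona fide point of $\bP^2(K)$ rather than the zero vector. The remaining hypotheses (that $\Phi$ be invertible and that the $\phi_j$ share no common factor) are not needed for this statement; they are only relevant to the companion remark, stated just above, that $\Phi(\bfxi)$ again generates $\calR$.
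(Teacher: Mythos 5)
Your proof is correct and follows essentially the same route as the paper's: exploit homogeneity to write $\phi_j(\bfxi)=u^{m\ell}\phi_j(u^{-\ell}\bfxi)$, use $\Phi(\bfc)\neq(0,0,0)$ to conclude $\min_j\nu_{\frakP}(\phi_j(\bfxi))=m\ell$, and then read off the center. You spell out a bit more explicitly than the paper why $\pi_\frakP$ commutes with evaluation of the $\phi_j$ (it is a $K$-algebra map on the valuation ring), and your closing observation that invertibility of $\Phi$ and coprimality of the $\phi_j$ are not actually used here is accurate.
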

 \begin{proof}
 Let $u$ be a local uniformizer of $\frakP$ and $\ell=\min_{i=0}^2 \{\nu_{\frakP}(\xi_i)\}$.
 One has that
 $$(\pi_{\frakP}(u^{-\ell}\xi_0),\pi_{\frakP}(u^{-\ell}\xi_1),\pi_{\frakP}(u^{-\ell}\xi_2))=\lambda \bfc$$
 for some nonzero $\lambda\in K$. Denote $m=\tdeg(\phi_i)$. Then
 $$\pi_{\frakP}(\Phi(u^{-\ell}\bfxi))=\Phi(\pi_\frakP(u^{-\ell}\bfxi))=\Phi(\lambda \bfc)=\lambda^m \Phi(\bfc)\neq (0,0,0).$$
 This implies that
 $\min_{i=0}^2 \{\nu_{\frakP}(\phi_i(u^{-\ell}\bfxi))\}=0.$
In other words,
 $$
     \min_{i=0}^2 \{ \nu_{\frakP}(\phi_i(\bfxi))\}=m\ell.
 $$
 Then the center of $\frakP$ with respect to $\Phi(\bfxi)$ is
 $$
     \pi_{\frakP}(u^{-m\ell}\Phi(\bfxi))=\pi_{\frakP}(\Phi(u^{-\ell}\bfxi))=\Phi(\lambda \bfc)=\lambda^m \Phi(\bfc).
 $$
 Hence $\Phi(\bfc)$ is the center of $\frakP$ with respect to $\Phi(\bfxi)$.
 \end{proof}

\subsection{Heights}
 All algebraic function fields under consideration in this subsection are finite extensions of $k(t)$. They are algebraic function fields of one variable over $k$ and the places and order functions in them are defined as the same as in the previous subsection. 

\begin{define}
\label{def:height}
All points are considered as points in some suitable projective spaces over $\overkt$.
\begin{enumerate}
\item
Given $\bfa=(a_0,\dots,a_m)\in \bP^m(\overkt)$, let $R$ be a finite extension of $k(t)$ containing all $a_i$. We define the {\em height} of $\bfa$, denoted by $T(\bfa)$,  to be
$$
  \frac{\sum_{\frakp} \max_{i=0}^m\{-\nu_{\frakp}(a_i)\}}{[R:k(t)]}
$$
where $\frakp$ ranges over all places of $R$.
\item For $A=(a_{i,j})\in \GL_3(\overkt)$,  we define
$$
    T(A)=T((a_{1,1},a_{1,2}, a_{1,3},\dots,a_{3,3})).
$$
\item For $a\in \overkt$, we define the height of $a$ to be $T((1,a))$, denoted by $T(a)$.
\item Let $F$ be a polynomial in $\overkt[x_0,\dots,x_m]$. Suppose that $F$ contains at least two terms. We define the height of $F$, denoted by $T(F)$, to be $T(\bfc)$ where $\bfc$ is the point in a suitable projective space formed by the coefficients of $F$. For convention, when $F$ only contains one term, we defined $T(F)$ to be zero.
\item Let $V$ be a hypersurface in $\bP^m(\overkt)$ defined by $F\in \overkt[x_0,\dots,x_m]$. We define the height of $V$, denoted by $T(V)$, to be $T(F)$.
\end{enumerate}
\end{define}

\begin{remark}
\label{rem:heights}
Assume that $\bfa=(a_0,\dots,a_m), \bfb=(b_0,\dots,b_m)\in \bP^m(\overkt)$.
\begin{enumerate}
\item
 One sees that $T(\bfa)$ is independent of the choice of homogeneous coordinates and the choice of $R$. Without loss of generality, we suppose $a_0=1$, then
 $$
 T(\bfa)=\frac{\sum_{\frakp} \max\{0,-\nu_\frakp(a_1),\dots,\nu_\frakp(a_m)\}}{[R:k(t)]}\geq 0.
 $$
\item Assume $R$ is a finite extension of $k(t)$ containing all $a_i$ and $b_i$. Then one sees that if
$\max_i\{-\nu_\frakp(a_i)\}\geq \max_i\{-\nu_{\frakp}(b_i)\}$ for all places $\frakp$ of $R$ then $T(\bfa)\geq T(\bfb)$.
\item Suppose that $a_0,a_1,\dots,a_m\in k[t]$  and $\gcd(a_0,\dots,a_m)=1$. Then
$$
T(\bfa)=\max \{\deg(a_0),\dots,\deg(a_m)\}.
$$
To see this, let $R=k(t)$. Then
$$
T(\bfa)=\sum_{\frakp} \max \{-\nu_{\frakp}(a_0),\dots,-\nu_\frakp(a_m)\}
$$
where $\frakp$ ranges over all the places of $R$. Note that every place of $R$ has a local uniformizer of the form $1/t$ or $t-c$ for some $c\in k$.  Suppose the place $\frakp$ has $t-c$ as a local uniformizer. Then $\nu_\frakp(a_i)>0$ if and only if $(t-c )| a_i$. Since $\gcd(a_0,\dots,a_m)=1$, there is some $i_0$ such that $\nu_\frakp(a_{i_0})=0$. This implies that
for places $\frakp$ with $t-c, c\in k$ as local uniformizers,
$$\max \{-\nu_\frakp(a_0),\dots,-\nu_\frakp(a_m)\}=0.$$
For the place with $1/t$ as local uniformizer, one has that $\nu_\frakp(a_i)=-\deg(a_i)$. So for this place,
$$\max \{-\nu_\frakp(a_0),\dots,-\nu_\frakp(a_m)\}=\max\{\deg(a_0),\dots,\deg(a_m)\}.$$
Consequently, $T(\bfa)=\max\{\deg(a_0),\dots,\deg(a_m)\}$.
\item
Let $a\in \overkt$ and $R=k(t,a)$. Let $g(t,x)$ be a nonzero irreducible polynomial over $k$ such that $g(t,a)=0$. It is clear that $T(a)=0$ if $a\in k$. Assume that $a\notin k$ and $\frakp_1,\dots,\frakp_s$ are all distinct poles of $a$ in $R$, then
\[
T(a)=\frac{-\sum_{i=1}^{s}\nu_{\frakp_i}(a)}{[R:k(t)]}=\frac{[R:k(a)]}{[R:k(t)]}=\frac{\deg(g,t)}{\deg(g,x)}.
\]
In particular, if $a\in k(t)$ then $T(a)=\deg(a)$ which is defined to be the maximun of the degrees of the denominator and numerator of $a$.
\end{enumerate}
\end{remark}

From the above remark, it is easy to see that for $a\in \overkt\setminus \{0\}$ and $i\in \Z$
$$
 T(a^i)=|i| T(a).
$$
\begin{prop}\label{prop:heightproperty}
Let $a,b\in \overkt$, $c_1,\dots,c_4\in k$ with $c_1c_4-c_2c_3\neq 0$. Then
\begin{enumerate}
\item $T\left(\frac{c_1a+c_2}{c_3a+c_4}\right)=T(a)$ if $c_3a+c_4\neq 0$;
\item $T(ab)\leq T(a)+T(b)$;
\item $T(a+\lambda b)\leq T(a)+T(b)$ for all $\lambda\in k$.
\end{enumerate}
\end{prop}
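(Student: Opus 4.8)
The plan is to reduce all three assertions to place-by-place estimates of the quantities $\max\{0,-\nu_{\frakp}(\cdot)\}$ occurring in Definition~\ref{def:height}, using repeatedly the single fact that every nonzero element of $k$ is a unit in every valuation ring of an algebraic function field over $k$, hence has order $0$ at every place. By Remark~\ref{rem:heights}(1) the height does not depend on the chosen field, so I may fix one finite extension $R$ of $k(t)$ containing both $a$ and $b$ (for instance $R=k(t,a,b)$), compute all the heights in sight over this common $R$, write $n=[R:k(t)]$, and let $\frakp$ range over the places of $R$.

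For part (2), at each place one has $\nu_{\frakp}(ab)=\nu_{\frakp}(a)+\nu_{\frakp}(b)$, whence
\[
\max\{0,-\nu_{\frakp}(ab)\}\le \max\{0,-\nu_{\frakp}(a)\}+\max\{0,-\nu_{\frakp}(b)\};
\]
summing over $\frakp$ and dividing by $n$ gives $T(ab)\le T(a)+T(b)$. For part (3) the case $\lambda=0$ is trivial, and for $\lambda\in k\setminus\{0\}$ one has $\nu_{\frakp}(\lambda b)=\nu_{\frakp}(b)$, so the ultrametric inequality gives $\nu_{\frakp}(a+\lambda b)\ge\min\{\nu_{\frakp}(a),\nu_{\frakp}(b)\}$, hence again $\max\{0,-\nu_{\frakp}(a+\lambda b)\}\le\max\{0,-\nu_{\frakp}(a)\}+\max\{0,-\nu_{\frakp}(b)\}$ at every place, and summing finishes the bound.

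For part (1) I would first record a $k$-linear invariance statement: if $(b_0,\dots,b_m)$ is obtained from $(a_0,\dots,a_m)$ by an invertible $k$-linear transformation $M\in\GL_{m+1}(k)$, then $T((b_0,\dots,b_m))=T((a_0,\dots,a_m))$. Indeed, since each entry of $M$ lies in $k$ and so has order $\ge 0$ at every $\frakp$, we get $\nu_{\frakp}(b_i)\ge\min_j\nu_{\frakp}(a_j)$ for every $i$, hence $\max_i\{-\nu_{\frakp}(b_i)\}\le\max_j\{-\nu_{\frakp}(a_j)\}$ at every place; applying the same to $M^{-1}$ yields the reverse inequality, so the two agree place by place. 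Now, because $c_3a+c_4\neq 0$, the element $\frac{c_1a+c_2}{c_3a+c_4}$ defines together with $1$ the projective point $(c_3a+c_4:c_1a+c_2)\in\bP^1(\overkt)$, which is the image of $(1:a)$ under $M=\left(\begin{smallmatrix} c_4 & c_3\\ c_2 & c_1\end{smallmatrix}\right)$, and $\det M=c_1c_4-c_2c_3\neq 0$. Hence, by the invariance just noted and Remark~\ref{rem:heights}(1) (independence of the homogeneous coordinates),
\[
T\!\left(\frac{c_1a+c_2}{c_3a+c_4}\right)=T\bigl((c_3a+c_4:c_1a+c_2)\bigr)=T\bigl((1:a)\bigr)=T(a).
\]

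The arguments are elementary and I do not expect a substantial obstacle; the only points that need care are the appeal to well-definedness of the height across different fields $R$ (granted by Remark~\ref{rem:heights}) so that $T(a)$, $T(b)$ and $T(ab)$ may legitimately be compared term by term over one common model, and, in part (1), remembering to invoke $M^{-1}$ in order to upgrade the place-wise inequality to an equality.
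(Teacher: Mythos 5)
Your proof is correct. Parts 2 and 3 are essentially identical to the paper's argument: the paper also works place by place over $R=k(t,a,b)$, using $\nu_{\frakp}(ab)=\nu_{\frakp}(a)+\nu_{\frakp}(b)$ for part 2 and $-\nu_{\frakp}(a+\lambda b)\leq\max\{-\nu_{\frakp}(a),-\nu_{\frakp}(b)\}$ for part 3, then sums and invokes Remark~\ref{rem:heights}. (The paper does not explicitly split off $\lambda=0$, which works because the ultrametric inequality still holds in that case; your splitting is harmless but unnecessary.)

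For part 1 your route genuinely differs from the paper's. The paper applies Remark~\ref{rem:heights}(4), namely the identity $T(a)=[R:k(a)]/[R:k(t)]$ for $R=k(t,a)$, together with the observation that $k(a)=k\bigl(\frac{c_1a+c_2}{c_3a+c_4}\bigr)$, so the two heights are literally the same ratio of degrees; this is a two-line algebraic argument. You instead prove a place-by-place $\GL_{m+1}(k)$-invariance of the projective height (each entry of $M$ and of $M^{-1}$ has order $\geq 0$ at every $\frakp$, so $\max_i\{-\nu_{\frakp}\}$ is unchanged), and then realize the Möbius map as the image of $(1:a)$ under an element of $\GL_2(k)$ with determinant $c_1c_4-c_2c_3$. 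Both are correct; the paper's version is shorter, while yours is more elementary (pure valuation inequalities, no field-degree counting) and yields as a byproduct a more general lemma — invariance of $T$ on $\bP^m(\overkt)$ under arbitrary $\GL_{m+1}(k)$ — which is in the spirit of, and essentially a special case of, the paper's later Lemma~\ref{lm:lineartransformation} on heights under projective changes of coordinates.
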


\begin{proof}
$1.$ If $a\in k$ then the assertion is obvious. Suppose that $a\notin k$.  Let $R=k(t,a)$. Then $R=k(t, (c_1a+c_2)/(c_3a+c_4))$. The assertion follows from Remark~\ref{rem:heights} and the fact that  $$\left[R:k\left(\frac{c_1a+c_2}{c_3a+c_4}\right)\right]=[R:k(a)].$$

$2.$ Let $R=k(t,a,b)$. For each place $\frakp$ of $R$,
    $
    -\nu_{\frakp}(ab)=-\nu_{\frakp}(a)-\nu_{\frakp}(b)
    $
    and thus
    $$
    \max\{0,-\nu_{\frakp}(ab)\}\leq \max\{0,-\nu_{\frakp}(a)\}+\max\{0,-\nu_{\frakp}(b)\}.
    $$
   The assertion then follows from Remark~\ref{rem:heights}.

$3.$ Use an argument similar to that in 2. and the fact that
    $$
    -\nu_{\frakp}(a+\lambda b)\leq \max\{-\nu_{\frakp}(a),-\nu_{\frakp}(b)\}.
    $$
  \end{proof}
The following examples show that the equalities may hold in $2$ and $3$ of Proposition~\ref{prop:heightproperty}.
\begin{example}
Let $a=t^2, b=t^3+1$ and $\lambda\in k\setminus \{0\}$. Then
$$T(ab)=5=T(a)+T(b),\,\,T(1/a+\lambda/b)=5=T(1/a)+T(1/b.)$$
Moreover both of them are greater than the maximun of $T(a), T(b)$.
\end{example}

In the following, $\bfy$ stands for the vector with indeterminates $y_1,\dots,y_s$ and $\bfy^\bfd$ denotes $\prod_{i=1}^s y_i^{d_i}$ for $\bfd=(d_1,\dots,d_s)\in \Z^s$.
\begin{prop}\label{prop:height2}
Let $f$ and $g$ be polynomials in $\overkt[\bfy]$, then
\begin{enumerate}
\item $T(fg)\leq T(f)+T(g);$
\item If both $f$ and $g$ have 1 as a coefficient, then  $T(f+g)\leq T(f)+T(g).$
\end{enumerate}
\end{prop}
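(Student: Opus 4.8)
The plan is to prove both inequalities place-by-place, in the spirit of Gauss's lemma, and then sum over all places. First I would fix a finite extension $R$ of $k(t)$ containing all the coefficients of $f$ and $g$ (hence also those of $fg$ and $f+g$); since $T$ is independent of the choice of such a field, it suffices to work inside $R$. For a nonzero polynomial $h=\sum_{\bfd}c_{h,\bfd}\bfy^{\bfd}\in R[\bfy]$ and a place $\frakp$ of $R$, set $w_\frakp(h)=\min_{\bfd}\{\nu_{\frakp}(c_{h,\bfd})\}$, the minimum taken over the $\bfd$ with $c_{h,\bfd}\neq 0$. The first thing I would record is the identity
$$
T(h)=\frac{1}{[R:k(t)]}\sum_{\frakp}\bigl(-w_{\frakp}(h)\bigr),
$$
the sum over all places of $R$. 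When $h$ has at least two terms this is immediate from Definition~\ref{def:height}, since $\max_{\bfd}\{-\nu_{\frakp}(c_{h,\bfd})\}=-w_{\frakp}(h)$; when $h=c\bfy^{\bfd}$ is a single monomial, $T(h)=0$ by convention while $\sum_{\frakp}(-w_{\frakp}(h))=-\sum_{\frakp}\nu_{\frakp}(c)=-\deg(\divs(c))=0$, so the identity still holds.

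For part 1, each coefficient of $fg$ has the form $c_{fg,\bfm}=\sum_{\bfd+\bfn=\bfm}c_{f,\bfd}c_{g,\bfn}$. Using multiplicativity of $\nu_{\frakp}$ together with the ultrametric inequality $\nu_{\frakp}(a+b)\geq\min\{\nu_{\frakp}(a),\nu_{\frakp}(b)\}$ (extended to finite sums), every term satisfies $\nu_{\frakp}(c_{f,\bfd}c_{g,\bfn})\geq w_{\frakp}(f)+w_{\frakp}(g)$, whence $\nu_{\frakp}(c_{fg,\bfm})\geq w_{\frakp}(f)+w_{\frakp}(g)$ and therefore $w_{\frakp}(fg)\geq w_{\frakp}(f)+w_{\frakp}(g)$ for every place $\frakp$. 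Multiplying by $-1$, summing over all $\frakp$, and dividing by $[R:k(t)]$ gives $T(fg)\leq T(f)+T(g)$.

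For part 2, I would apply the ultrametric inequality to the coefficients $c_{f,\bfd}+c_{g,\bfd}$ of $f+g$, obtaining $w_{\frakp}(f+g)\geq\min\{w_{\frakp}(f),w_{\frakp}(g)\}$, i.e. $-w_{\frakp}(f+g)\leq\max\{-w_{\frakp}(f),-w_{\frakp}(g)\}$. This is the point where the hypothesis that $f$ and $g$ each have $1$ among their coefficients is used: it forces $w_{\frakp}(f)\leq\nu_{\frakp}(1)=0$ and $w_{\frakp}(g)\leq 0$, so $-w_{\frakp}(f)$ and $-w_{\frakp}(g)$ are both nonnegative, and hence $\max\{-w_{\frakp}(f),-w_{\frakp}(g)\}\leq(-w_{\frakp}(f))+(-w_{\frakp}(g))$. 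Summing over $\frakp$ and dividing by $[R:k(t)]$ yields $T(f+g)\leq T(f)+T(g)$, the case $f+g=0$ being trivial.

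The argument is largely bookkeeping, and I expect the only points needing care to be exactly the two normalizations used above: passing from the projective definition of $T$ to the sum $\tfrac{-\sum_{\frakp}w_{\frakp}(h)}{[R:k(t)]}$ — where the vanishing of $\deg(\divs(c))$ is what absorbs the one-term convention — and, in part 2, the fact that having $1$ as a coefficient is precisely what makes the local bound $\max\{-w_{\frakp}(f),-w_{\frakp}(g)\}\leq -w_{\frakp}(f)-w_{\frakp}(g)$ valid. Note that only the easy (ultrametric) half of Gauss's lemma is needed; the sharper reverse inequality $w_{\frakp}(fg)\leq w_{\frakp}(f)+w_{\frakp}(g)$, though true, plays no role.
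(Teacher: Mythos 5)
Your proof is correct and follows essentially the same route as the paper: both arguments work place-by-place, bound the valuation of each coefficient of $fg$ (resp.\ $f+g$) via the ultrametric inequality, and then sum over places. Your introduction of the Gauss content $w_\frakp$ is just a convenient repackaging of the paper's $\max_\bfd\{-\nu_\frakp(\cdot)\}$; the one small improvement is that you make explicit where the hypothesis ``$1$ is a coefficient'' enters in part 2 (it forces $-w_\frakp(f),-w_\frakp(g)\geq 0$, which is what lets $\max$ be bounded by a sum), and you also note the one-term and $f+g=0$ edge cases, which the paper leaves tacit.
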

\begin{proof}
Write $f=\sum_{\bfd} a_{\bfd} \bfy^\bfd$ and $g=\sum_{\bfd} b_\bfd \bfy^\bfd$ with $a_\bfd,b_\bfd\in\overkt$. Let $R$ be a finite extension of $k(t)$ containing all $a_\bfd, b_\bfd$, and $\frakp$ a place of $R$.

1. Each coefficient of $fg$ is of the form $\sum_{i=1}^s a_{\bfd_i}b_{\bfd_i}$, where $s\geq 1$. Since
$$
-\nu_{p}\left(\sum_{i=1}^s a_{\bfd_i}b_{\bfd_i}\right)\leq \max_{i=1}^s\{-\nu_{p}(a_{\bfd_i})-\nu_{p}(b_{\bfd_i})\},
$$
we have
$$
\max_{\mbox{$c\in C$}}\{-\nu_{\frakp}(c)\}\leq \max_{\bfd}\{-\nu_{\frakp}(a_\bfd)\}+\max_{\bfd}\{-\nu_{\frakp}(b_\bfd)\}
$$
where $C$ is the set of all coefficients of $fg$.
It follows that
$
T(fg)\leq T(f)+T(g).
$

2.  The assertion follows from the fact that
\begin{align*}
    -\nu_\frakp(a_\bfd+b_\bfd) &\leq \max_\bfd \{-\nu_\frakp(a_\bfd),-\nu_\frakp(b_\bfd)\} \leq \max_{\bfd}\{0,-\nu_\frakp(a_\bfd),-\nu_\frakp(b_\bfd)\} \\
                    &\leq \max_{\bfd}\{0,-\nu_\frakp(a_\bfd)\}+\max_{\bfd}\{0,-\nu_\frakp(b_\bfd)\}.
\end{align*}
\end{proof}

\begin{prop}\label{prop:heightofzero}
Let $f=x^n+a_{n-1}x^{n-1}+\dots+a_0$ where $n>0$ and $a_i\in \overkt$. Suppose that $\alpha$ is a zero of $f$ in $\overkt$ and $R$ is a finite extension of $k(t)$ containing $\alpha$ and all $a_i$. Then for each place $\frakp$ of $R$,
$$
\max \{0, -\nu_{\frakp}(\alpha)\} \leq \max \{0,-\nu_\frakp(a_0),\dots,-\nu_\frakp(a_{n-1})\}.
$$
\end{prop}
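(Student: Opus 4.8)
The statement is local in $\frakp$, so the plan is to fix an arbitrary place $\frakp$ of $R$ and prove the inequality there. If $\nu_{\frakp}(\alpha)\geq 0$, then $\max\{0,-\nu_{\frakp}(\alpha)\}=0$ and, since $\max\{0,-\nu_\frakp(a_0),\dots,-\nu_\frakp(a_{n-1})\}\geq 0$ always, the inequality is trivial. Hence I would assume from now on that $\frakp$ is a pole of $\alpha$, that is, $\nu_{\frakp}(\alpha)<0$; in particular $\alpha\neq 0$.

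The key step is to read off information from the defining relation written as $\alpha^n=-(a_{n-1}\alpha^{n-1}+\dots+a_0)$. The left-hand side is nonzero, so at least one $a_i$ is nonzero, and applying the ultrametric inequality $\nu_{\frakp}(\sum_j c_j)\geq \min_j \nu_{\frakp}(c_j)$ (from Section~2.2) to the right-hand side yields an index $i$ with $0\leq i\leq n-1$, $a_i\neq 0$, and
$$
n\,\nu_{\frakp}(\alpha)\;=\;\nu_{\frakp}(\alpha^n)\;\geq\;\nu_{\frakp}(a_i)+i\,\nu_{\frakp}(\alpha).
$$
Rearranging gives $(n-i)\,\nu_{\frakp}(\alpha)\geq \nu_{\frakp}(a_i)$, i.e. $-\nu_{\frakp}(a_i)\geq (n-i)\bigl(-\nu_{\frakp}(\alpha)\bigr)$.

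To conclude, I would note that $i\leq n-1$ forces $n-i\geq 1$, while $-\nu_{\frakp}(\alpha)>0$ by the standing assumption; therefore $(n-i)\bigl(-\nu_{\frakp}(\alpha)\bigr)\geq -\nu_{\frakp}(\alpha)>0$. Combining the two displays gives $-\nu_{\frakp}(a_i)\geq -\nu_{\frakp}(\alpha)$, whence
$$
\max\{0,-\nu_{\frakp}(a_0),\dots,-\nu_{\frakp}(a_{n-1})\}\;\geq\;-\nu_{\frakp}(a_i)\;\geq\;-\nu_{\frakp}(\alpha)\;=\;\max\{0,-\nu_{\frakp}(\alpha)\},
$$
which is the desired inequality at $\frakp$. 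As $\frakp$ was arbitrary, this finishes the argument. There is no genuine obstacle here; the only thing to watch is the bookkeeping with negative valuations (poles), in particular that multiplying the negative integer $\nu_{\frakp}(\alpha)$ by $n-i\geq 1$ can only make it smaller, which after negation gives exactly the inequality direction we need; one should also make sure to exclude the vanishing terms when invoking the ultrametric inequality, which is why we used $\alpha^n\neq 0$ to guarantee some $a_i\neq 0$. This is the function-field analogue of the classical Cauchy-type bound on the size of a root of a monic polynomial.
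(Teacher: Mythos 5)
Your proof is correct and follows essentially the same route as the paper's: isolate $\alpha^n$, apply the ultrametric inequality to pick out an index $i$ with $n\nu_{\frakp}(\alpha)\geq \nu_{\frakp}(a_i)+i\nu_{\frakp}(\alpha)$, and then use $\nu_{\frakp}(\alpha)<0$ together with $n-i\geq 1$ to conclude $-\nu_{\frakp}(\alpha)\leq -\nu_{\frakp}(a_i)$. The small point you add (noting that some $a_i\neq 0$ so the minimum is genuinely attained at a finite valuation) is a harmless bit of extra care that the paper leaves implicit.
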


\begin{proof}
The assertion is clear if $\alpha\in k$ or $\frakp$ is not a pole of $\alpha$. Assume $\alpha \in \overkt\setminus k$ and $\frakp$ is a pole of $\alpha$. Then
$$
\nu_{\frakp}(\alpha^n)=\nu_p\left(\sum_{i=0}^{n-1}a_i \alpha^i\right)\geq \min_{i=0}^{n-1} \left\{i\nu_{\frakp}(\alpha)+\nu_{\frakp}(a_i)\right\}=  i_0 \nu_\frakp(\alpha)+\nu_\frakp(a_{i_0})
$$
for some $i_0$ with $0\leq i_0 \leq n-1$.
This together with the fact that $\nu_\frakp(\alpha)<0$ implies that
$$
\nu_{\frakp}(\alpha)\geq \frac{\nu_{\frakp}(a_{i_0})}{n-i_0}\geq \nu_{\frakp}(a_{i_0})\geq \min_{i=0}^ {n-1}\left\{ \nu_\frakp(a_i)\right\},
$$
i.e.
$$
-\nu_{\frakp}(\alpha)\leq \max_{i=0}^ {n-1} \left\{-\nu_{\frakp}(a_i)\right\}.
$$
Consequently, one has that
$$
\max\{0,-\nu_{\frakp}(\alpha)\}\leq \max\{0,-\nu_{\frakp}(a_0),\dots,-\nu_{\frakp}(a_{n-1})\}.
$$
\end{proof}
\begin{corollary}
\label{cor:heightofzero}
Let $f$ be a polynomial in $\overkt[x]$ and $\alpha$ a zero of $f$ in $\overkt$. Then $T(\alpha)\leq T(f)$.
\end{corollary}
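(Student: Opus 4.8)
Corollary \ref{cor:heightofzero}: if $f \in \overkt[x]$ and $\alpha$ is a zero of $f$ in $\overkt$, then $T(\alpha) \le T(f)$.

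The plan is to reduce to Proposition \ref{prop:heightofzero}, which already does the hard work at the level of individual places but under the normalization that $f$ is \emph{monic}. So the only real issue is to remove that normalization and handle scaling of the coefficient vector. First I would dispose of trivial cases: if $\alpha \in k$ then $T(\alpha) = 0 \le T(f)$; and if $f$ has only one term then $\alpha = 0$, so again $T(\alpha) = 0$. Also we may clearly assume $\deg(f,x) = n > 0$, since a nonzero constant polynomial has no zeros. So write $f = b_n x^n + b_{n-1}x^{n-1} + \dots + b_0$ with $b_n \neq 0$ and not all lower coefficients vanishing. Dividing through by $b_n$ does not change the zero set, and it does not change $T(f)$ either: $T(f)$ is defined as the height of the point $(b_0 : b_1 : \dots : b_n) \in \bP^n(\overkt)$, which by Remark \ref{rem:heights}(1) is independent of the choice of homogeneous coordinates, hence equals $T\bigl((b_0/b_n : \dots : b_{n-1}/b_n : 1)\bigr)$. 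So set $a_i = b_i/b_n$ and $\hat f = x^n + a_{n-1}x^{n-1} + \dots + a_0$; then $\hat f(\alpha) = 0$, $T(\hat f) = T(f)$, and $\hat f$ is monic, so Proposition \ref{prop:heightofzero} applies to $\hat f$.

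Next I would pick a finite extension $R$ of $k(t)$ containing $\alpha$ and all the $a_i$ (equivalently all the $b_i/b_n$), and apply Proposition \ref{prop:heightofzero} at each place $\frakp$ of $R$ to get
$$
\max\{0, -\nu_\frakp(\alpha)\} \le \max\{0, -\nu_\frakp(a_0), \dots, -\nu_\frakp(a_{n-1})\}.
$$
Summing over all places $\frakp$ of $R$ and dividing by $[R:k(t)]$, the left side becomes $T(\alpha)$ (using $\alpha \notin k$, or the trivial case already handled). For the right side, the point $(a_0 : \dots : a_{n-1} : 1)$ is the same as $(b_0 : \dots : b_{n-1} : b_n)$ up to scaling, so by Remark \ref{rem:heights}(1)--(2) its height is $T(f)$; and including the extra coordinate $a_n = 1$ (with $\nu_\frakp(1) = 0$) into the max only enlarges it, which is consistent since $\max\{0, -\nu_\frakp(a_0),\dots,-\nu_\frakp(a_{n-1})\} = \max\{-\nu_\frakp(a_0),\dots,-\nu_\frakp(a_{n-1}),-\nu_\frakp(1)\}$. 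Hence the sum of the right-hand terms divided by $[R:k(t)]$ is exactly $T(f)$, giving $T(\alpha) \le T(f)$.

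I do not anticipate a genuine obstacle here; the mathematical content is entirely in Proposition \ref{prop:heightofzero}, and this corollary is the bookkeeping step that (a) normalizes to the monic case without changing heights and (b) converts the place-by-place inequality into the height inequality by summing. The one point that needs a little care is making sure the normalization $f \mapsto \hat f = f/b_n$ really leaves $T(f)$ invariant and that the field $R$ can be chosen to contain simultaneously $\alpha$ and all coefficients of $\hat f$ — both are immediate from Remark \ref{rem:heights}(1) and the definition of height as a sum over places of any sufficiently large $R$.
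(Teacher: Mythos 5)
Your proof is correct and follows exactly the route the paper intends: the corollary is stated without proof as an immediate consequence of Proposition~\ref{prop:heightofzero}, and your argument — normalize $f$ to be monic (which leaves $T(f)$ unchanged by projective invariance of the height), then sum the place-by-place inequality of the proposition over all places of a suitable $R$ and divide by $[R:k(t)]$ — is the obvious and intended way to pass from that proposition to the corollary. The careful handling of the degenerate cases ($\alpha\in k$, $f$ a monomial or constant) and of the scaling $f\mapsto f/b_n$ is exactly the bookkeeping one needs.
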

The equality in Corollary~\ref{cor:heightofzero} may hold as shown in the following example.
 \begin{example}
 Let
$$
f=x^2-(t^3+1)x+t^3=(x-t^3)(x-1),
$$
then $T(t^3)=T(f)=3$.
\end{example}

\begin{lemma}\label{lm:factor}
Assume $f,g\in \overkt[x]$ and $g$ is a factor of $f$. Then
$
T(g)\leq T(f).
$
\end{lemma}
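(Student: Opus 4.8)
The plan is to reduce the lemma to the function‑field version of Gauss's lemma: multiplicativity of the ``content order'' at every place. Write $f=gh$ with $h\in\overkt[x]$ and fix a finite extension $R$ of $k(t)$ containing every coefficient of $f$, $g$ and $h$; by Remark~\ref{rem:heights} the heights below are independent of this choice, so enlarging $R$ is harmless. For a place $\frakp$ of $R$ with order function $\nu_\frakp$ and a nonzero $p=\sum_i c_i x^i\in R[x]$, set $w_\frakp(p)=\min_i\{\nu_\frakp(c_i)\}$. Unwinding Definition~\ref{def:height}, one has $T(p)=\frac{1}{[R:k(t)]}\sum_\frakp\bigl(-w_\frakp(p)\bigr)$ whenever $p$ has at least two terms, because $\max_i\{-\nu_\frakp(c_i)\}=-w_\frakp(p)$; the same identity holds for a monomial, both sides then being zero (the left by convention, the right because a principal divisor has degree zero, Section 2.2). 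Granting the Gauss identity $w_\frakp(f)=w_\frakp(g)+w_\frakp(h)$ at every $\frakp$, summing over all places and dividing by $[R:k(t)]$ yields $T(f)=T(g)+T(h)$; since $T(h)\ge 0$ by Remark~\ref{rem:heights}, this gives $T(g)\le T(f)$, as wanted.

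So the core of the argument is the Gauss identity at a fixed place $\frakp$, with valuation ring $V$ and local uniformizer $u$. The inequality $w_\frakp(gh)\ge w_\frakp(g)+w_\frakp(h)$ is immediate from $\nu_\frakp(a+b)\ge\min\{\nu_\frakp(a),\nu_\frakp(b)\}$ together with $\nu_\frakp(ab)=\nu_\frakp(a)+\nu_\frakp(b)$. For the reverse inequality I would normalize: put $g^{*}=u^{-w_\frakp(g)}g$ and $h^{*}=u^{-w_\frakp(h)}h$. By the definition of $w_\frakp$ their coefficients lie in $V$, and at least one coefficient of each is a unit of $V$; reducing coefficients modulo $\frakp$ and using $V/\frakp=K$ from Section 2.2, the images $\overline{g^{*}},\overline{h^{*}}\in K[x]$ are nonzero. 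As $K[x]$ is an integral domain, $\overline{g^{*}h^{*}}=\overline{g^{*}}\,\overline{h^{*}}\ne 0$, hence $w_\frakp(g^{*}h^{*})=0$, i.e. $w_\frakp(gh)=w_\frakp(g)+w_\frakp(h)$.

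The only step needing care is this reduction argument: one must check that $p\mapsto p\bmod\frakp$ is a well‑defined ring homomorphism $V[x]\to K[x]$ and that a polynomial having a coefficient in $V^{\times}$ cannot reduce to $0$ — both routine once the identification $V/\frakp=K$ is in hand. Everything else is bookkeeping; in particular the degenerate cases ($g$ or $h$ a monomial, $g$ constant) require no separate treatment, since the argument nowhere assumed that the polynomials involved have more than one term.
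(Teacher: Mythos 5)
Your proof is correct, and it takes a genuinely different route from the paper's. You prove the per-place Gauss identity $w_\frakp(gh)=w_\frakp(g)+w_\frakp(h)$, where $w_\frakp$ is the minimum of the $\nu_\frakp$-orders of the coefficients, and then sum over places to get the sharper statement $T(f)=T(g)+T(h)$; the lemma then follows from $T(h)\geq 0$. Your normalization argument (dividing out $u^{w_\frakp(\cdot)}$, reducing coefficients through $V/\frakP\cong K$, and using that $K[x]$ is a domain) is the standard proof of Gauss's lemma over a valuation ring, and the translation of Definition~\ref{def:height} into the formula $T(p)=\frac{1}{[R:k(t)]}\sum_\frakp(-w_\frakp(p))$ — including the degenerate monomial case via the degree-zero property of principal divisors — is handled correctly.

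The paper proceeds quite differently: after reducing to monic $f,g$, it first shows that peeling off a single linear factor $x+\alpha$ from $f=gh$ yields $T(h)\leq T(f)$, via a coefficient-by-coefficient descent argument proving $\max\{0,-\nu_\frakp(b_j)\}\leq N_\frakp:=\max\{0,-\nu_\frakp(a_0),\dots,-\nu_\frakp(a_{n-1})\}$ for each place (the key step is a contradiction obtained by propagating a hypothetical violation down to $b_0$ and using $\alpha b_0=a_0$). It then inducts on $\deg f$, repeatedly removing linear factors not in $g$. Your approach is shorter and more conceptual, and it actually establishes additivity of heights under multiplication of univariate polynomials, which strengthens the inequality $T(fg)\leq T(f)+T(g)$ in Proposition~\ref{prop:height2}(1) to an equality in the univariate case; the paper's proof yields only the one-sided bound it needs but uses nothing beyond the elementary valuation inequalities already in Section~2. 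Both are valid; yours buys a cleaner, stronger statement at the cost of invoking the quotient map $V[x]\to K[x]$, which the paper deliberately avoids by keeping the computation at the level of order estimates.
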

\begin{proof}
Without loss of generality, we may assume both $f$ and $g$ are monic. Write
$$
f=x^n+\sum_{i=0}^{n-1}a_i x^i, \,\,a_i\in \overkt.
$$
We first show that if $f=gh$ and $\deg g=1$ then $T(h)\leq T(f)$. Suppose that
$$h=x^{n-1}+\sum_{j=0}^{n-2}b_jx^j,\,\,g=x+\alpha
$$
where $\alpha, b_j\in \overkt$. Let $R=k(t,\alpha,b_0,\dots,b_{n-2})$. For each place $\frakp$ of $R$, denote
$$
 N_\frakp=\max \{0,-\nu_\frakp(a_0),\dots,-\nu_p(a_{n-1})\}.
$$
Then $\max\{0,-\nu_\frakp(a_i)\}\leq N_\frakp$ for all $i=0,\dots,n-1$ and all places $\frakp$ of $R$.  From $f=gh$, one has that
$$
   b_{n-2}+\alpha=a_{n-1},\,\, b_0\alpha=a_0,\,\, b_i\alpha+b_{i-1}=a_i, \,\,i=1,\dots,n-2.
$$
We claim that $\max\{0,-\nu_\frakp(b_j)\}\leq N_\frakp$ for all $j=0,\dots,n-2$ and all places $\frakp$ of $R$. For $j=n-2$, one has that
$$
  \nu_\frakp(b_{n-2})=\nu_\frakp(a_{n-1}-\alpha)\geq \min\{\nu_\frakp(a_{n-1}),\nu_\frakp(\alpha)\}.
$$
This implies that $\max\{0,-\nu_\frakp(b_{n-2})\}\leq \max \{0,-\nu_\frakp(a_{n-1}), -\nu_\frakp(\alpha)\}$. By Proposition~\ref{prop:heightofzero}, we have that
$\max\{0,-\nu_\frakp(\alpha)\}\leq N_\frakp$. Hence $\max\{0,-\nu_\frakp(b_{n-2})\}\leq N_\frakp$ for all places $\frakp$ of $R$. Now assume that there is a place $\frakq$ of $R$ and $j_0$ with $0\leq j_0<n-2$ such that $\max\{0,-\nu_\frakq(b_{j_0})\}>N_\frakq$ but $\max\{0,-\nu_\frakq(b_{j_0+1})\}\leq N_\frakq$. Then one has that $\nu_\frakq(b_{j_0})<0$ and $\nu_\frakq(b_{j_0})<\nu_\frakq(a_i)$ for all $i=0,\dots,n-1$. On the other hand,  since $\max\{0,-\nu_\frakq(b_{j_0+1})\}\leq N_\frakq$, $-\nu_\frakq(b_{j_0})>N_\frakq\geq -\nu_\frakq(b_{j_0+1})$ i.e. $\nu_\frakq(b_{j_0+1})<\nu_{\frakq}(b_{j_0})<0$.
 From $\alpha b_{j_0+1}=a_{j_0+1}-b_{j_0}$,  one has that
 $$\nu_\frakq(\alpha b_{j_0+1})=\nu_\frakq(\alpha)+\nu_\frakq(b_{j_0+1})=\min\{\nu_\frakq(a_{j_0+1}),\nu_\frakq(b_{j_0})\}=\nu_{\frakq}(b_{j_0}).$$
 The last equality holds because $\nu_{\frakq}(b_{j_0})<\nu_\frakq(a_i)$ for all $i$. Thus $\nu_\frakq(\alpha)<0$ which implies that
 $$
    \nu_\frakq(\alpha b_{j_0})=\nu_\frakq(\alpha)+\nu_\frakq(b_{j_0})<\nu_\frakq(a_i)
 $$
 for all $i=0,\dots,n-1$.
 As $b_{j_0-1}=a_{j_0}-\alpha b_{j_0}$, one has that
$
     \nu_q(b_{j_0-1})=\nu_q(\alpha)+\nu_q(b_{j_0})<0
$
 and furthermore $\nu_q(b_{j_0-1})<\nu_q(a_i)$
 for all $i=0,\dots,n-1$.
In other words, $\max\{0,-\nu_q(b_{j_0-1})\}> N_q$. Applying a similar argument to the equalities $b_j=a_{j+1}-\alpha b_{j+1}$ for $j=j_0-2,\dots,0$ successively yields that $\nu_\frakq(b_j)<0$ and $\max\{0,-\nu_\frakq(b_j)\}>N_\frakq$ for all $j=j_0-2,\dots,0$. However, one has that $\alpha b_0=a_0$. This implies that $\nu_\frakq(a_0)<\nu_\frakq(b_0)<0$ and thus $N_\frakq\geq -\nu_\frakq(a_0)>-\nu_\frakq(b_0)>0$. That is to say, $N_\frakq\geq \max\{0,-\nu_\frakq(b_0)\}$, a contradiction. This proves the claim. This claim and Remark~\ref{rem:heights} imply that $T(h)\leq T(f)$.

Now we prove the assertion by induction on $\deg(f)$. The base case $\deg(f)=1$ is obvious. Suppose that the assertion holds for $\deg(f)\leq n$. Consider the case $\deg(f)=n+1$. If $f=g$ then there is nothing to prove. Suppose that $f\neq g$. Then there is $\beta\in \overkt$ such that $(x+\beta)g$ divides $f$. Let $f=(x+\beta)\tilde{f}$ for some $\tilde{f}\in \overkt[x]$. Then $T(\tilde{f})\leq T(f)$ and $g$ divides $\tilde{f}$. By induction hypothesis, one has that $T(g)\leq T(\tilde{f})\leq T(f)$.
\end{proof}

\begin{corollary}
\label{cor:factor}
Assume $f(x,y),g(x,y)\in \overkt[x,y]$ and $g(x,y)$ is a factor of $f(x,y)$. Then
$
T(g(x,y))\leq T(f(x,y)).
$
\end{corollary}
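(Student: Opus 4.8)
The plan is to reduce this bivariate statement to the univariate Lemma~\ref{lm:factor} by means of a Kronecker-type substitution $y\mapsto x^N$ that converts polynomials in $\overkt[x,y]$ into polynomials in $\overkt[x]$ without altering their multisets of coefficients, and hence without altering their heights.

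First I would dispose of the trivial case: if $f$ consists of a single term then $T(f)=0$, and since $\overkt[x,y]$ is a UFD any divisor of a monomial is again a monomial up to a constant factor, so $g$ also has a single term and $T(g)=0=T(f)$. So assume $f$ has at least two terms. Write $f=\sum_{i,j}a_{ij}x^iy^j$ and set $N=\deg(f,x)+1$. Since $g\mid f$ we have $\deg(g,x)\le\deg(f,x)<N$, so the map $(i,j)\mapsto i+jN$ is injective on the exponent set of $f$ and also on that of $g$. Consequently $f(x,x^N)$ and $g(x,x^N)$ are polynomials in $\overkt[x]$ whose nonzero coefficients are exactly those of $f$ and of $g$ respectively, each occurring once. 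Because the height of a point is unchanged under permuting its coordinates and under adjoining zero coordinates (immediate from Definition~\ref{def:height} and Remark~\ref{rem:heights}, as $-\nu_{\frakp}(0)=-\infty$ affects no maximum), this yields $T(f(x,x^N))=T(f(x,y))$ and $T(g(x,x^N))=T(g(x,y))$.

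Finally, writing $f=gh$ with $h\in\overkt[x,y]$ gives $f(x,x^N)=g(x,x^N)\,h(x,x^N)$ in $\overkt[x]$, so $g(x,x^N)$ is a factor of $f(x,x^N)$ there; Lemma~\ref{lm:factor} then gives $T(g(x,x^N))\le T(f(x,x^N))$, and chaining the three (in)equalities gives $T(g(x,y))\le T(f(x,y))$. There is no real analytic difficulty here; the only point requiring care is that the substitution exponent $N$ must be large enough to prevent distinct monomials $x^iy^j$ from collapsing onto the same power of $x$, and it must be chosen large enough to do so \emph{simultaneously} for $f$ and for $g$ — which is precisely why one takes $N=\deg(f,x)+1$ rather than a bound depending on $g$. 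Once coefficient collisions are excluded, the identity $T(F(x,x^N))=T(F(x,y))$ for $F\in\{f,g\}$ is automatic and the corollary follows from Lemma~\ref{lm:factor}.
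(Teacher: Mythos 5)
Your proof is correct and follows essentially the same route as the paper: both reduce to the univariate Lemma~\ref{lm:factor} via a Kronecker substitution $y\mapsto x^N$ chosen to avoid exponent collisions, and both observe that this substitution preserves the multiset of nonzero coefficients and hence the height. The only difference is cosmetic — you take $N=\deg(f,x)+1$ (noting $\deg(g,x)\le\deg(f,x)$ so the same $N$ serves both polynomials), while the paper takes $d>\tdeg f$; both choices make the map $(i,j)\mapsto i+jN$ injective on the relevant exponent sets.
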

\begin{proof}
Suppose that $f(x,y)=\sum_{i,j} c_{i,j}x^i y^j$ where $c_{i,j}\in \overkt$. Let $d$ be an integer greater than $\tdeg(f(x,y))$. One has that
$$
    f(x,x^d)=\sum_{i,j} c_{i,j} x^{i+jd}.
$$
Note that for $0\leq i,j,l,m<d$, $i+jd=l+md$ if and only if $(i,j)=(l,m)$. This implies that the set of the coefficients of $f(x,y)$ coincides with that of $f(x,x^d)$. Hence $T(f(x,y))=T(f(x,x^d))$. Similarly, $T(g(x,y))=T(g(x,x^d))$. It is clear that $g(x,x^d)$ is still a factor of $f(x,x^d)$. By Lemma~\ref{lm:factor}, $T(g(x,x^d))\leq T(f(x,x^d))$. Thus $T(g(x,y))\leq T(f(x,y))$.
\end{proof}

\begin{prop}\label{prop:resultant}
Assume $f,g\in \overkt[\bfy,z]$, then
$$
T({\rm res}_{z}(f,g))\leq \deg(g,z)T(f)+\deg(f,z)T(g)
$$
where ${\rm res}_{z}(f,g)$ is the resultant of $f$ and $g$ with respect to $z$.
\end{prop}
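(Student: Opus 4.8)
The plan is to estimate the order functions $\nu_\frakp$ evaluated at the individual coefficients of $\res_z(f,g)$ directly, rather than to assemble the bound out of Propositions~\ref{prop:heightproperty} and~\ref{prop:height2}. The point is that the Sylvester expansion of the resultant is a sum of many terms, and iterating $T(a+\lambda b)\le T(a)+T(b)$ over all of them would introduce a spurious factor equal to the number of summands; by contrast the ultrametric inequality $\nu_\frakp\!\left(\sum_i x_i\right)\ge \min_i\nu_\frakp(x_i)$ loses nothing, which is exactly what makes the claimed clean bound possible.

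First I would record the algebraic shape of the resultant. Put $m=\deg(f,z)$ and $n=\deg(g,z)$, and write $f=\sum_{i=0}^m a_iz^i$, $g=\sum_{j=0}^n b_jz^j$ with $a_i,b_j\in\overkt[\bfy]$. Introducing indeterminates $u_0,\dots,u_m,v_0,\dots,v_n$ and setting $P=\res_z\!\left(\sum_i u_iz^i,\,\sum_j v_jz^j\right)$, the Sylvester determinant shows $P\in\Z[u_0,\dots,u_m,v_0,\dots,v_n]$, and, since that determinant is built from $n$ rows involving only the $u_i$ and $m$ rows involving only the $v_j$, every monomial occurring in $P$ has total degree $n$ in the $u_i$ and total degree $m$ in the $v_j$. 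Specializing gives $\res_z(f,g)=P(a_0,\dots,a_m,b_0,\dots,b_n)\in\overkt[\bfy]$; expanding the $a_i,b_j$ into monomials in $\bfy$ and collecting terms, one sees that every coefficient in $\overkt$ of the polynomial $\res_z(f,g)$ is an integer linear combination of products of exactly $n$ coefficients of $f$ and exactly $m$ coefficients of $g$.

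Now I would fix a finite extension $R$ of $k(t)$ containing all coefficients of $f$ and of $g$ (hence all coefficients of $\res_z(f,g)$), fix a place $\frakp$ of $R$, and set $\alpha_\frakp=\min\{\nu_\frakp(c')\}$ over the $\overkt$-coefficients $c'$ of $f$ and $\beta_\frakp$ analogously for $g$. For any product $\mu$ of $n$ coefficients of $f$ and $m$ coefficients of $g$ one has $\nu_\frakp(\mu)\ge n\alpha_\frakp+m\beta_\frakp$; since nonzero integers are units at $\frakp$, the ultrametric inequality gives $-\nu_\frakp(c)\le -n\alpha_\frakp-m\beta_\frakp$ for every coefficient $c$ of $\res_z(f,g)$, that is, $\max_c\{-\nu_\frakp(c)\}\le n\max_{c'}\{-\nu_\frakp(c')\}+m\max_{c''}\{-\nu_\frakp(c'')\}$ with $c',c''$ ranging over the coefficients of $f$, resp. of $g$. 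Summing over all places of $R$, dividing by $[R:k(t)]$, and using that the height of a point is computed by the same recipe over any field containing its coordinates (Remark~\ref{rem:heights}), I get $T(\res_z(f,g))\le nT(f)+mT(g)=\deg(g,z)T(f)+\deg(f,z)T(g)$; the degenerate cases where $\res_z(f,g)$ vanishes or is a nonzero constant are immediate. The only step that needs genuine care is the combinatorial bookkeeping behind the description of the coefficients of $\res_z(f,g)$ as integer combinations of products of precisely $n$ coefficients of $f$ and $m$ of $g$, which rests on the bihomogeneity of $P$; once that is in hand the valuation estimate and the passage to heights are routine.
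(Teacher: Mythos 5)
Your proof is correct and follows essentially the same route as the paper's: both extract the bihomogeneity of the Sylvester determinant (each monomial being a product of $\deg(g,z)$ coefficients of $f$ and $\deg(f,z)$ coefficients of $g$), apply the ultrametric inequality placewise to avoid any loss proportional to the number of summands, and then sum over places. The only cosmetic difference is that you package the bihomogeneity via the universal resultant polynomial $P(u,v)$ before specializing, whereas the paper reads it off directly from the determinant expansion.
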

\begin{proof}
The assertion is clear that if ${\rm res}_{z}(f,g)=0$. Consider the case ${\rm res}_{z}(f,g)\neq 0$.
Assume $\deg(f,z)=n,\deg(g,z)=m$. Write
$$
f=\sum_{i=0}^n a_i(\bfy)z^i,\quad g=\sum_{i=0}^m b_i(\bfy)z^i
$$
where $a_i(\bfy),b_i(\bfy)\in \overkt[\bfy]$.  Denote further by $C_1, C_2$ the sets of the coefficients in $\bfy,z$ of $f, g$  respectively.
Then
$$
{\rm res}_z(f,g)=
\begin{vmatrix}
a_n   & a_{n-1}  &\cdots  & a_0   \\
         & \ddots    &  \ddots       &   & \ddots   \\
         &           &  a_n    & a_{n-1}   &\cdots  & a_0 \\
b_m   & b_{m-1}    &\cdots & b_0   \\
         & \ddots    &    \ddots      &  & \ddots     \\
         &           &  b_m    & b_{m-1}   &\cdots  & b_0 \\
\end{vmatrix}.
$$
By the definitions of determinant, we can write
$$
{\rm res}_z(f,g)=\sum_{\bfd} \left(\sum_{j=1}^{\ell_\bfd}\beta_{\bfd,j} \bfm_{\bfd,j}\bfn_{\bfd,j} \right)\bfy^{\bfd}
$$
where $\beta_{\bfd,j}, \ell_\bfd\in \Z, \ell_\bfd\geq 0$, $\bfm_{\bfd,j}$ is a monomial in $C_1$ with total degree $m$ and $\bfn_{\bfd,j}$ is a monomial in $C_2$ with total degree $n$. Let $R=k(t,C_1,C_2)$.
 For each place $\frakp$ of $R$, we have
\begin{align*}
-\nu_{\frakp}\left(\sum_j \beta_{\bfd,j} \bfm_{\bfd,j}\bfn_{\bfd,j}\right)
&\leq \max_j \{-\nu_{\frakp}( \bfm_{\bfd,j}\bfn_{\bfd,j})\}\\
&\leq m\max_{c\in C_1}\{-\nu_{\frakp}(c)\}+n\max_{c\in C_2}\{-\nu_{\frakp}(c)\}.
\end{align*}
Therefore by Remark~\ref{rem:heights},
$$
T({\rm res}_{z}(f,g))\leq mT(f)+nT(g).
$$
\end{proof}

\section{Degrees and Heights on Riemann-Roth spaces}
Throughout this section, $\calR$ denotes an algebraic function field of one variable over $\overkt$.
Let $\bfxi=(\xi_0,\xi_1,\xi_2)\in \calR^3$ be such that $\calC(\bfxi)$ a plane projective model of $\calR$, i.e. $\calR=\overkt(\xi_0/\xi_2,\xi_1/\xi_2)$. Each $h\in \calR$ can be presented by $G(\bfxi)/H(\bfxi)$ where $G, H$ are two homogeneous polynomials in $\overkt[x_0,x_1,x_2]$ of the same degree and having no common factors, and $H(\bfxi)\neq 0$. We call $(G,H)$ a representation of $h$. This section shall focus on determining the degrees and heights of representations of elements in Riemann-Roch spaces.
 There are several algorithms for computing the bases of Riemann-Roth spaces (see for example \cite{hess,huang-ierardi}). However no existing algorithm provided explicit bounds for the degrees and heights of $G$ and $H$ where $(G,H)$ represents an element in these bases. These bounds play an essential role in estimating the heights of points on a plane algebraic curve. In this section, we shall follow the algorithm developed in \cite{huang-ierardi}  to obtain these bounds.  For this purpose, we need to resolve singularities of a given plane algebraic curve to  obtain the one with only ordinary singularities. This can be done by a sequence of quadratic transformations.

In this section, unless otherwise stated,  $F$ always stands for an irreducible homogeneous polynomial in $\overkt[x_0,x_1,x_2]$ of degree $n>0$ which defines a plane projective model of $\calR$, i.e. there is $\bfxi=(\xi_0,\xi_1,\xi_2)\in \calR^3$ such that $\calR=\overkt(\xi_0/\xi_2,\xi_1/\xi_2)$ and $F(\bfxi)=0$.

\subsection{Quadratic transformations}\label{subsec:quadratic}
Let $D$ be a divisor in $\calR$. Due to Proposition on page of \cite{fulton}, there is a birational transformation $\calB$ such that the transformation of $F=0$ under $\calB$ is a plane projective curve with only ordinary singularties, moreover $\calB$ can be chosen to be the composition of a sequence of suitable quadratic transformations. In this subsection, we shall investigate the degree and height of the transformation of $F=0$ under a quadratic transformation.
\begin{define}
\begin{enumerate}
\item
$\calL$ stands for a projective change of coordinates on $\bP^2(\overkt)$ that is defined as
$
  \calL(\bfc)=\bfc M_\calL
$
where $M_\calL\in \GL_3(\overkt)$, and $\calQ$  denotes  the standard quadratic transformation that  is defined as
$$
\calQ(\bfc)=(c_1c_2,c_0c_2,c_0c_1)
$$
where $\bfc=(c_0,c_1,c_2)$.
\item The height of $\calL$, denoted by $T(\calL)$, is defined as $T(M_\calL)$.
\end{enumerate}
\end{define}
\begin{notation}
\begin{enumerate}
\item
$F^\calL$ stands for $F((x_0,x_1,x_2)M_\calL)$.
\item
$F^\calQ$ stands for the irreducible polynomial  $\tilde{F}$ satisfying
$$
     F(x_1x_2,x_0x_2,x_0x_1)=x_0^{d_0}x_1^{d_1}x_2^{d_2}\tilde{F}
$$
where $d_i\geq 0$.
\end{enumerate}
One sees that $\bV(F^\calL)$ (resp. $\bV(F^\calQ)$) is the variaty $\calL^{-1}(\bV(F))$ (resp. the projective closure of $\calQ^{-1}(\bV(F)\setminus \bV(x_0x_1x_2)$).
\end{notation}
\begin{remark}
\label{rem:quadratictransformation}
$\calQ$ is bijective on $\bP^2(\overkt)\setminus \bV(x_0x_1x_2)$ and $\calQ^{-1}=\calQ$.
\end{remark}
Let us first bound the heights of the common points of two algebraic curves in $\bP^2(\overkt)$.
\begin{prop}\label{prop:intersection}
Let $F,G$ be two homogenenous polynomials in $\overkt[x_0,x_1,x_2]$ of degree $n,m$ respectively. Suppose that $F$ and $G$ have no common factor, and $\bfc\in \bP^2(\overkt)$ is a common point of $F=0$ and $G=0$. Then
$$
T(\bfa)\leq 2(mT(F)+nT(G)).
$$
Furthermore, if $G=c_0x_1+c_1x_1+c_2x_2$ with $c_i\in k$ then $T(\bfa)\leq T(F)$.
\end{prop}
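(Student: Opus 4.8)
The plan is to bound the height of a common point $\bfc$ of $F=0$ and $G=0$ by reducing to the univariate situation already handled in Proposition~\ref{prop:resultant} and Corollary~\ref{cor:heightofzero}, after a generic linear change of coordinates. First I would choose coordinates so that $\bfc=(1,c_1,c_2)$ with $c_1,c_2\in\overkt$ and so that none of the relevant leading coefficients vanish; since the statement is about heights this normalization costs nothing, but I must be slightly careful because an honest linear change of coordinates would itself contribute to the height, so instead I would work directly with the affine equations $f(x_1,x_2)=F(1,x_1,x_2)$ and $g(x_1,x_2)=G(1,x_1,x_2)$ and the point $(c_1,c_2)$, keeping in mind $T(F)=T(f)$ (the coefficients are the same) and likewise for $G$.

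Next I would form the two resultants $R_1(x_2)={\rm res}_{x_1}(f,g)$ and $R_2(x_1)={\rm res}_{x_2}(f,g)$. Because $F$ and $G$ have no common factor, these resultants are nonzero, and the elimination property gives $R_1(c_2)=0$ and $R_2(c_1)=0$ (this requires that $(c_1,c_2)$ is genuinely a common affine zero and that degrees in the eliminated variable behave; the generic-position normalization is what guarantees this, and I would need to spell out that if $\bfc$ lies on the line at infinity one first applies a coordinate permutation among $x_0,x_1,x_2$, which only permutes the hypotheses). By Proposition~\ref{prop:resultant}, $T(R_1)\le \deg(g,x_1)T(f)+\deg(f,x_1)T(g)\le mT(F)+nT(G)$, and symmetrically for $R_2$. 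Then Corollary~\ref{cor:heightofzero} applied to $R_1$ and $R_2$ yields $T(c_2)\le T(R_1)\le mT(F)+nT(G)$ and $T(c_1)\le mT(F)+nT(G)$. Finally, since $T(\bfc)=T((1,c_1,c_2))\le T(c_1)+T(c_2)$ by the subadditivity in Remark~\ref{rem:heights}/Proposition~\ref{prop:heightproperty}, I get $T(\bfc)\le 2(mT(F)+nT(G))$. The factor $2$ is exactly the price of combining the two one-variable bounds.

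For the second assertion, when $G=c_0x_0+c_1x_1+c_2x_2$ with $c_i\in k$, the linear equation lets me solve for one coordinate in terms of the others with coefficients in $k$: after a permutation assume $c_2\ne0$, so on $G=0$ we have $x_2=-(c_0x_0+c_1x_1)/c_2$. Substituting into $F$ and using that the $c_i$ lie in $k$ (hence contribute nothing to heights, by Proposition~\ref{prop:heightproperty}(1) and the fact that scaling by constants in $k$ does not change heights), the point $\bfc$ becomes a common zero of $F$ restricted to this line, i.e. essentially a root of a binary form obtained from $F$ by a $k$-linear substitution, whose coefficients are $k$-linear combinations of those of $F$; by Proposition~\ref{prop:height2}(1)–(2) this binary form still has height $\le T(F)$, and then Corollary~\ref{cor:heightofzero} gives $T(\bfc)\le T(F)$.

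The main obstacle I expect is not the height arithmetic, which is routine given the earlier propositions, but the bookkeeping around \emph{degenerate configurations}: making sure the resultant really vanishes at the coordinates of $\bfc$ (leading coefficients in the eliminated variable must not all vanish), handling the case where $\bfc$ is a point at infinity, and handling the case where $f$ or $g$ drops degree or becomes independent of a variable after dehomogenizing. Each of these is resolved by an appropriate permutation of the three coordinates $x_0,x_1,x_2$ — a symmetry of the hypotheses that costs nothing in height — but one has to check that some choice of dehomogenizing coordinate always works simultaneously for the vanishing of both resultants, which is where a short case analysis is needed.
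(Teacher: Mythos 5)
Your proposal is correct and follows essentially the same route as the paper: form resultants eliminating each of the two affine coordinates, bound their heights via Proposition~\ref{prop:resultant}, bound the coordinates of the common point via Corollary~\ref{cor:heightofzero}, then combine with the elementary estimate $T((1,a_1,a_2))\le T(a_1)+T(a_2)$; for the linear case, solve for one coordinate and substitute into $F$. The only cosmetic difference is that the paper takes the resultants $H_i={\rm res}_{x_i}(F,G)$ of the \emph{homogeneous} forms directly (so each $H_i$ is a nonzero homogeneous bivariate polynomial and the vanishing $H_i(\bfa)=0$ follows at once from $H_i\in\langle F,G\rangle$), which quietly sidesteps the degree-drop and leading-coefficient bookkeeping you flag as the main obstacle; your dehomogenize-first variant works too, but the homogeneous version is the cleaner way to make the ``resultant vanishes at the common point'' step immune to degenerate configurations.
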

\begin{proof}
Let
$H_i(x_j,x_l)={\rm res}_{x_i}(F,G)$ where $ \{i,j,l\}=\{0,1,2\}$.
Proposition \ref{prop:resultant} implies that
$
T(H_i)\leq mT(F)+nT(G).
$
Without loss of generality, suppose $\bfa=(1,a_1,a_2)$. Since $\bfa$ is a common point of $F=0$ and $G=0$,
$$
H_2(1,a_1)=H_1(1,a_2)=0.
$$
It follows from Corollary \ref{cor:heightofzero} that
$
T(a_i)\leq mT(F)+nT(G)
$ for all $i=1,2$.
Let $R=k(t,a_1,a_2)$ and $\frakp$ a place of $R$.  Then
$$
\max\{0,-\nu_{\frakp}(a_1),-\nu_{\frakp}(a_2)\}\leq
\max\{0,-\nu_{\frakp}(a_1)\}+\max\{0,-\nu_{\frakp}(a_2)\}.
$$
Whence
$$
T(\bfa)\leq T(a_1)+T(a_2)\leq 2(mT(F)+nT(G)).
$$
It remains to pove the second assertion. Since $a_0=1\neq 0$, not all $c_1,c_2$ are zero. Without loss of generality, assume that $c_1\neq 0$.  Substituting $a_1=-(c_0+a_2c_2)/c_1$ into $F=0$ yields that
$$F(1,-(c_0+a_2c_2)/c_1,a_2)=0.$$
This implies that $T(a_2)\leq T(F)$. On the other hand, one sees that
$$
    \nu_\frakp(a_1)=\nu_\frakp(-(c_0+a_2c_2)/c_1)\geq \min\{\nu_\frakp(c_0),\nu_\frakp(a_2)\}.
$$
So
$$
   \max\{0, -\nu_\frakp(a_1),-\nu_\frakp(a_2)\}\leq \max\{0, -\nu_\frakp(a_2)\}.
$$
which results in $T(\bfa)\leq T(a_2)\leq T(F)$.
\end{proof}
\begin{corollary}\label{cor:singularity}
If $\bfa\in \bP^2(\overkt)$ is a singular point of $F=0$, then
$
T(\bfa)\leq 2(2n-1)T(F).
$
\end{corollary}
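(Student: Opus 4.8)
The statement to prove is Corollary~\ref{cor:singularity}: if $\bfa\in\bP^2(\overkt)$ is a singular point of $F=0$, where $F$ is irreducible homogeneous of degree $n$, then $T(\bfa)\le 2(2n-1)T(F)$. The plan is to realize $\bfa$ as a common point of two curves of controlled degree and height, and then invoke Proposition~\ref{prop:intersection}.

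First I would recall the defining property of a singular point: $\bfa$ lies on $F=0$ and all first partial derivatives vanish there, i.e. $F_{x_0}(\bfa)=F_{x_1}(\bfa)=F_{x_2}(\bfa)=0$. Each $F_{x_i}$ is homogeneous of degree $n-1$. Since $F$ has at least two terms (it is irreducible of positive degree, so it cannot be a monomial, as a monomial is reducible once $n>1$; the case $n=1$ has no singular points and is vacuous), and differentiation only kills or rescales coefficients, one has $T(F_{x_i})\le T(F)$ — this follows from Remark~\ref{rem:heights}(2), since for every place $\frakp$ the set of valuations of coefficients of $F_{x_i}$ is a subset of (scalar multiples of, hence valuation-equal to) those of $F$. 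So I would pick two of the three partials, say $F_{x_i}$ and $F_{x_j}$, that have no common factor.

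The one genuine subtlety: two of the partials might share a common factor, so I cannot blindly apply Proposition~\ref{prop:intersection} to a fixed pair. The clean fix is to pair $F$ itself with one partial derivative. Since $F$ is irreducible of degree $n\ge 2$ and $F_{x_i}$ has degree $n-1<n$, $F$ cannot divide $F_{x_i}$; and $F_{x_i}$ is not identically zero for at least one $i$ (else $F$ would be constant), so $F$ and that $F_{x_i}$ have no common factor. Now $\bfa$ is a common point of $F=0$ and $F_{x_i}=0$, with $\deg F=n$, $\deg F_{x_i}=n-1$, $T(F_{x_i})\le T(F)$. Proposition~\ref{prop:intersection} then gives
$$
T(\bfa)\le 2\bigl((n-1)T(F)+n\,T(F_{x_i})\bigr)\le 2\bigl((n-1)+n\bigr)T(F)=2(2n-1)T(F),
$$
which is exactly the claimed bound. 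I would also note the edge case $n=1$ separately: a line has no singular points, so the statement holds vacuously, and indeed $2(2n-1)T(F)=2T(F)\ge 0$ poses no issue.

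I expect no real obstacle here; the corollary is essentially a direct specialization of Proposition~\ref{prop:intersection}, and the only care needed is the bookkeeping that $T(F_{x_i})\le T(F)$ and the choice of a coprime pair, both of which are routine. If one preferred to use two partials rather than $F$ and a partial, one would need the additional remark that not all pairs $F_{x_i},F_{x_j}$ can be pairwise non-coprime in a way that obstructs the argument — but going through $F$ and $F_{x_i}$ sidesteps that entirely and gives the cleaner proof.
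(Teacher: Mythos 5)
Your proof is correct and is exactly the intended derivation: the paper states this as an immediate corollary of Proposition~\ref{prop:intersection} without supplying a proof, and your route --- apply that proposition to $F$ and a nonzero partial $F_{x_i}$, noting $\deg F_{x_i}=n-1$, $T(F_{x_i})\le T(F)$ (by Remark~\ref{rem:heights}, since differentiation in characteristic zero only rescales a subset of the coefficients by nonzero integers), and that irreducibility of $F$ with $\deg F_{x_i}<\deg F$ forces coprimality --- is the natural one. The choice to pair $F$ with a partial rather than two partials is the right way to sidestep the coprimality issue, and the $n=1$ edge case is correctly dismissed as vacuous.
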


\begin{lemma}
\label{lm:lineartransformation}
Suppose that $\calL$ is a projective change of coordinates. Then
\begin{enumerate}
\item
$T(F^\calL)\leq T(F)+\deg(F)T(\calL)$;
\item
  for each $\bfc\in \bP^2(\overkt)$, $T(\calL(\bfc))\leq T(\bfc)+T(\calL)$.
\end{enumerate}
\end{lemma}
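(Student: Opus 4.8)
The plan is to prove the two estimates separately, working directly from the definitions of the height of a polynomial, the height of a linear transformation, and the height of a point in $\bP^2(\overkt)$.

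For part 1, write $F=\sum_{\bfd} c_{\bfd} x_0^{d_0}x_1^{d_1}x_2^{d_2}$ with $c_{\bfd}\in\overkt$, the sum ranging over $\bfd=(d_0,d_1,d_2)$ with $d_0+d_1+d_2=n=\deg(F)$. By definition $F^\calL=F((x_0,x_1,x_2)M_\calL)$, so each new variable $x_i$ is replaced by a $k$-linear form $\ell_i(x_0,x_1,x_2)=\sum_j (M_\calL)_{i,j} x_j$ whose coefficients are entries of $M_\calL$. Expanding, the coefficient of a given monomial $x_0^{e_0}x_1^{e_1}x_2^{e_2}$ in $F^\calL$ is a sum of terms, each of the shape $c_{\bfd}$ times a product of $n$ entries of $M_\calL$. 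First I would fix a finite extension $R$ of $k(t)$ containing all the $c_{\bfd}$ and all entries of $M_\calL$, and for each place $\frakp$ of $R$ bound $-\nu_\frakp$ of each coefficient of $F^\calL$ by $\max_{\bfd}\{-\nu_\frakp(c_{\bfd})\}+n\max_{(i,j)}\{-\nu_\frakp((M_\calL)_{i,j})\}$, using the ultrametric inequality for sums and additivity of $\nu_\frakp$ on products (exactly as in the proofs of Proposition~\ref{prop:height2} and Proposition~\ref{prop:resultant}). Summing over $\frakp$ and dividing by $[R:k(t)]$, and invoking Remark~\ref{rem:heights}(2), gives $T(F^\calL)\le T(F)+nT(\calL)$. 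One should note that $F^\calL$ genuinely contains at least two terms (since $\calL$ is invertible and $F\notin\overkt$), so its height is the height of its coefficient vector and no degenerate case arises.

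For part 2, let $\bfc=(c_0,c_1,c_2)\in\bP^2(\overkt)$; then $\calL(\bfc)=\bfc M_\calL$ has coordinates $b_i=\sum_j c_j (M_\calL)_{j,i}$. Working in a finite extension $R$ of $k(t)$ containing all $c_j$ and all entries of $M_\calL$, for each place $\frakp$ I would estimate $-\nu_\frakp(b_i)\le \max_j\{-\nu_\frakp(c_j)\}+\max_{j,i}\{-\nu_\frakp((M_\calL)_{j,i})\}$ by the ultrametric inequality, hence $\max_i\{-\nu_\frakp(b_i)\}\le \max_j\{-\nu_\frakp(c_j)\}+\max_{j,i}\{-\nu_\frakp((M_\calL)_{j,i})\}$; summing over $\frakp$ and normalizing gives $T(\calL(\bfc))\le T(\bfc)+T(\calL)$. (If $\calL(\bfc)=(0,0,0)$ the statement is vacuous, but invertibility of $M_\calL$ rules this out for $\bfc\ne(0,0,0)$.)

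I do not expect any serious obstacle here; the only points requiring a little care are bookkeeping ones: choosing a single $R$ large enough to contain all the relevant elements so that all places can be compared simultaneously, checking that the height of a polynomial is indeed insensitive to which coefficient is normalized to $1$ (Remark~\ref{rem:heights}(1)), and making sure the combinatorial count of how many entries of $M_\calL$ enter each coefficient of $F^\calL$ does not inflate the bound — it does not, because the extra terms only contribute through the $\max$ over places via the ultrametric inequality, never through their number. The mild subtlety, if any, is confirming that $F^\calL$ is not a monomial so that clause~4 of Definition~\ref{def:height} applies in its generic form rather than its convention.
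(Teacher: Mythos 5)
Your proposal is correct and follows essentially the same argument as the paper: expand the coefficients of $F^\calL$ as $k$-linear combinations of $c_\bfd$ times degree-$n$ monomials in the entries of $M_\calL$, bound $-\nu_\frakp$ of each coefficient by the ultrametric inequality and additivity on products, then sum over places via Remark~\ref{rem:heights}; part 2 is the same estimate on the linear forms $b_i=\sum_j c_{j}(M_\calL)_{j,i}$. The only cosmetic difference is your remark about $F^\calL$ possibly being a monomial, which the paper omits; in that case $T(F^\calL)=0$ by convention and the bound holds trivially, so nothing is lost.
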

\begin{proof}
Suppose that $M_\calL=(a_{i,j})$ and
$$F=\sum_{i=0}^n \sum_{j=0}^{n-i} c_{i,j}x_0^ix_1^j x_2^{n-i-j}$$
where $n=\deg(F)$ and $c_{i,j}\in \overkt$.

1. One has that
$$
  F^\calL=\sum_{i=0}^n \sum_{j=0}^{n-i} c_{i,j} \left(\sum_{l=1}^3 a_{l,1}x_{l-1}\right)^i  \left(\sum_{l=1}^3 a_{l,2}x_{l-1}\right)^j  \left(\sum_{l=1}^3 a_{l,3}x_{l-1}\right)^{n-i-j}.
$$
Let $\rho$ be a coefficient of $F^\calL$ viewed as polynomial in $x_0,x_1,x_2$. Then $\rho$ is a $k$-linear combination of monomials
$ c_{i,j}a_{1,1}^{e_{1,1}} \dots a_{3,3}^{e_{3,3}}$ with $\sum e_{i,j}=n$. Let $R$ be a finite extension of $k(t)$ containing all $c_{i,j}$ and $a_{i,j}$. Suppose that $\frakp$ is a place of $R$. Then one has that
\begin{align*}
 \nu_{\frakp}(\rho) &\geq \min_{i,j, i',j'}\left\{\nu_{\frakp}(c_{i,j})+\sum_{i',j'} e_{i',j'}\nu_\frakp(a_{i',j'})\right\}\\
 &\geq \min_{i,j} \{\nu_{\frakp}(c_{i,j})\}+\min_{i,j} \left\{\sum_{i,j} e_{i,j}\nu_\frakp(a_{i,j})\right\},
\end{align*}
i.e.
\begin{align*}
   -\nu_{\frakp}(\rho)& \leq \max_{i,j}\{-\nu_{\frakp}(c_{i,j})\}+\max_{i,j} \left\{-\sum_{i,j} e_{i,j}\nu_\frakp(a_{i,j})\right\}\\
   & \leq \max_{i,j}\{-\nu_\frakp(c_{i,j})\}+n\max_{i,j}\{-\nu_\frakp(a_{i,j})\}.
\end{align*}
Therefore $T(F^\calL)\leq T(F)+nT(\calL)$ due to Remark~\ref{rem:heights}.

2. Suppose that $\bfc=(c_0,c_1,c_2)$ and $\calL(\bfc)=(b_0,b_1,b_2)$.  Then $b_i=\sum_{j=1}^3 a_{j,i}c_{j-1}$. Let $R$ be a finite extension of $k(t)$ containing all $c_i$ and $a_{i,j}$, and $\frakp$ a place of $R$. Then
$$
   \nu_\frakp(b_i)=\nu_\frakp\left(\sum_{j=1}^3 a_{j,i}c_{j-1}\right)\geq \min_j \{\nu_\frakp(a_{j,i})+\nu_\frakp(c_{j-1})\}
$$
i.e.
$$
    -\nu_\frakp(b_i)\leq \max_j \{-\nu_\frakp(a_{j,i})-\nu_\frakp(c_{j-1})\}\leq \max_j \{-\nu_\frakp(a_{j,i})\}+\max_j \{-\nu_\frakp(c_{j-1})\}.
$$
So $T(\calL(\bfc))\leq T(\bfc)+T(\calL)$.
\end{proof}

\begin{corollary}
\label{cor:lineartransformation}
Suppose that $\bfc=(c_0,c_1,1)\in \bP^2(\overkt)$. Let $\calL$ be a projective change of coordinates with
\begin{equation}
\label{eq:lineartransformation}
   M_\calL=\begin{pmatrix}
   a_1 & a_2 & a_3 \\
   a_4 & a_5 & a_6 \\
   c_0  & c_1 & 1
   \end{pmatrix}
\end{equation}
where $a_i\in k$. Then
\begin{enumerate}
\item $\calL((0,0,1))=\bfc$;
\item $T(F^{\calL}), T(F^{\calL^{-1}})\leq T(F)+\deg(F)T(\bfc)$;
\item for each $\bfb\in \bP^2(\overkt)$, $T(\calL(\bfb)), T(\calL^{-1}(\bfb))\leq T(\bfb)+T(\bfc)$.
\end{enumerate}
\end{corollary}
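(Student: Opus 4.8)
The plan is to derive all three statements from Lemma~\ref{lm:lineartransformation}, once the heights $T(\calL)$ and $T(\calL^{-1})$ have been controlled by $T(\bfc)$. Part~1 is immediate from the definition: since $\calL$ acts on $\bP^2(\overkt)$ by right multiplication by $M_\calL$, the point $\calL((0,0,1))=(0,0,1)M_\calL$ is the bottom row $(c_0,c_1,1)=\bfc$ of $M_\calL$.

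Next I would compute $T(\calL)=T(\bfc)$ directly. View $M_\calL$ as the point $(a_1,\dots,a_6,c_0,c_1,1)\in\bP^8(\overkt)$ and let $R=k(t,c_0,c_1)$, which contains every $a_i$ because $a_i\in k$. At each place $\frakp$ of $R$ we have $\nu_\frakp(a_i)\ge 0$ (with the convention $\nu_\frakp(0)=\infty$), so the coordinate $1$ forces the maximum of the $-\nu_\frakp$ of the nine coordinates of $M_\calL$ to equal $\max\{-\nu_\frakp(c_0),-\nu_\frakp(c_1),-\nu_\frakp(1)\}$, which is precisely the corresponding quantity for the point $\bfc=(c_0,c_1,1)$; summing over $\frakp$ and dividing by $[R:k(t)]$ gives $T(\calL)=T(\bfc)$. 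Lemma~\ref{lm:lineartransformation}(1) then yields $T(F^\calL)\le T(F)+\deg(F)T(\bfc)$, and Lemma~\ref{lm:lineartransformation}(2) yields $T(\calL(\bfb))\le T(\bfb)+T(\bfc)$ for every $\bfb\in\bP^2(\overkt)$.

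For $\calL^{-1}$ I would use $M_{\calL^{-1}}=M_\calL^{-1}=\tfrac{1}{\det M_\calL}\,\mathrm{adj}(M_\calL)$; since $\det M_\calL$ is a nonzero element of $\overkt$ and heights depend only on the underlying projective point, $T(\calL^{-1})=T(\mathrm{adj}(M_\calL))$. The nine entries of $\mathrm{adj}(M_\calL)$ are, up to sign, the $2\times 2$ minors of $M_\calL$: three of these involve only the $a_i$ and hence lie in $k$, while the remaining six have the shape $a_i-a_jc_l$ or $a_ic_l-a_jc_m$. Using $\nu_\frakp(a_i)\ge 0$ together with $\nu_\frakp(u+v)\ge\min\{\nu_\frakp(u),\nu_\frakp(v)\}$, each such minor $e$ satisfies $-\nu_\frakp(e)\le\max\{-\nu_\frakp(c_0),-\nu_\frakp(c_1),-\nu_\frakp(1)\}$ at every place $\frakp$ of $R$; hence by Remark~\ref{rem:heights}(2) we get $T(\calL^{-1})=T(\mathrm{adj}(M_\calL))\le T((1,c_0,c_1))=T(\bfc)$. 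Applying Lemma~\ref{lm:lineartransformation} to $\calL^{-1}$ now gives $T(F^{\calL^{-1}})\le T(F)+\deg(F)T(\bfc)$ and $T(\calL^{-1}(\bfb))\le T(\bfb)+T(\bfc)$, finishing parts~2 and~3. The only mildly laborious step is enumerating the nine $2\times 2$ minors and verifying the place-wise inequality for each; because every $a_i$ is a constant this is purely mechanical, so I do not anticipate any genuine difficulty.
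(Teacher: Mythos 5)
Your proposal is correct and follows essentially the same route as the paper: Part 1 by direct matrix multiplication, and Parts 2–3 by invoking Lemma~\ref{lm:lineartransformation} once $T(M_\calL)=T(\bfc)$ and $T(M_{\calL^{-1}})\leq T(\bfc)$ are established. The paper simply asserts these two height identities; your computation of $T(\calL)$ place-by-place and the analysis of the $2\times 2$ minors of $\mathrm{adj}(M_\calL)$ (using $a_i\in k$ so $\nu_\frakp(a_i)\geq 0$, and projective invariance to drop $\det M_\calL$) correctly fills in the omitted verification.
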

\begin{proof}
The first assertion is obvious. The second and third assertions follows from Lemma~\ref{lm:lineartransformation} and the fact that $T(M_\calL)=T(\bfc)$ and $T(M_{\calL^{-1}})\leq T(\bfc)$.
\end{proof}

\begin{define}
\label{def:excellentposition}
\begin{enumerate}
\item The points $(1,0,0), (0,1,0), (0,0,1)\in \bP^2(\overkt)$ are called fundamental points.
\item Assume $(0,0,1)$ is a singular point of $F=0$ with multiplicity $r$. $F=0$ is said to be in excellent position if  it satisfies the following two conditions:
\begin{enumerate}
\item the line $x_2=0$ intersects $F=0$ in $n$ distinct non-fundamental points;
\item the lines $x_0=0, x_1=0$ intersect $F=0$ in $n-r$ distinct points other than fundamental points.
\end{enumerate}
\end{enumerate}
\end{define}

\begin{lemma}
\label{lm:lineartransformation2}
Suppose that $\bfc=(c_0,c_1,c_2)$ is a singular point of $F=0$ of multiplicity $r$. There is a projective change of coordinates $\calL$ with $M_\calL$ having the form (\ref{eq:lineartransformation}) such that
$\calL((0,0,1))=\bfc$ and $F^\calL=0$ is in excellent position.
\end{lemma}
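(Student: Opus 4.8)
The plan is to reduce the construction of the required $\calL$ to three nested linear moves, each enforcing one of the geometric conditions in Definition~\ref{def:excellentposition}, and then to verify that the composite still has the shape (\ref{eq:lineartransformation}). First I would use Corollary~\ref{cor:lineartransformation}(1): choose any $\calL_1$ with $M_{\calL_1}$ of the form (\ref{eq:lineartransformation}) (entries $a_i\in k$, bottom row $(c_0,c_1,c_2)$ after rescaling so that $c_2=1$) so that $\calL_1((0,0,1))=\bfc$. Then $F^{\calL_1}=0$ has $(0,0,1)$ as a singular point of the same multiplicity $r$, because a projective change of coordinates preserves multiplicities. The remaining task is to post-compose with a linear map $\calL_2$ fixing $(0,0,1)$ (so that the property $\calL((0,0,1))=\bfc$ is retained, with $\calL=\calL_1\circ\calL_2$ in the appropriate order) and moving the three coordinate lines into general position with respect to the curve.

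The key steps, in order: (i) observe that the maps fixing $(0,0,1)$ are exactly those whose matrix has last column $(0,0,*)^{T}$, equivalently the affine-linear changes of the $(x_0,x_1)$-plane together with a shear; composing such a map on the right of an $M$ of the form (\ref{eq:lineartransformation}) again yields a matrix of the form (\ref{eq:lineartransformation}) (the bottom row structure $(c_0,c_1,1)$ is preserved up to an overall scalar, which does not affect the curve), so the shape constraint is automatically maintained. (ii) Count constraints: the line $x_2=0$ should meet $F=0$ in $n$ distinct points none of which is fundamental — this excludes the choices of that line lying tangent to $F$, passing through a singular point, or through one of the (finitely many) images of fundamental points; each is a proper Zariski-closed condition on the pencil of lines not through $(0,0,1)$, hence avoidable over the infinite field $\overkt$. (iii) Similarly, after $x_2=0$ is fixed, the lines $x_0=0$ and $x_1=0$ are required to pass through $(0,0,1)$ and to meet $F=0$ in $n-r$ further distinct non-fundamental points (the multiplicity $r$ at $(0,0,1)$ accounts for the drop from $n$); generic lines through $(0,0,1)$ do this, again because the bad configurations form a proper closed subset. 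Assembling $\calL_2$ from these generic choices and setting $\calL=\calL_1\calL_2$ gives the claim.

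The main obstacle I expect is bookkeeping rather than depth: one must check that the three genericity conditions can be satisfied simultaneously by a single linear map whose matrix still has the prescribed form (\ref{eq:lineartransformation}), i.e. that imposing condition (a) on $x_2=0$ does not obstruct condition (b) on $x_0=0,x_1=0$, and that none of the finitely many excluded lines coincides with the ones we are forced to use. Since each condition removes only a proper closed subset from an irreducible parameter space (a product of projective spaces of line-choices), and $\overkt$ is algebraically closed hence infinite, a simultaneous good choice exists. A secondary point is to make precise that ``$F^{\calL}=0$ is in excellent position'' requires $(0,0,1)$ to remain singular of multiplicity $r$; this is immediate since $\calL_2$ fixes $(0,0,1)$ and projective changes of coordinates preserve multiplicity. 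Finally I would note that the statement asserts only existence, so no control on $T(\calL)$ is needed here — the height estimate is supplied separately by Corollary~\ref{cor:lineartransformation}(2)--(3) once the $a_i$ are taken in $k$.
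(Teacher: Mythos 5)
The paper's proof does not factor $\calL$ at all: it directly parameterizes the \emph{entire} family of matrices with bottom row $(c_0,c_1,c_2)$ by letting the six entries of the top two rows be coordinates $\bfy\in\overkt^6$, observes that the ``good'' parameters form a nonempty Zariski-open set $S\subset\overkt^6$ (nonempty by Fulton's Lemma~1 on resolution of singularities by quadratic transformations), and then notes $S\cap k^6\neq\emptyset$ because $k^6$ is Zariski-dense in $\overkt^6$. Your proof takes a genuinely different route (factor $\calL=\calL_1\circ\calL_2$ and move the coordinate lines by a second map $\calL_2$ fixing $(0,0,1)$), but the key step fails.

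The problem is step~(i). In the paper's convention $\calL(\bfc)=\bfc M_\calL$, if $M_\calL=M_{\calL_2}M_{\calL_1}$ with $M_{\calL_1}$ of form (\ref{eq:lineartransformation}) (bottom row $(c_0,c_1,1)$, top rows in $k$) and $M_{\calL_2}$ the matrix of a map fixing $(0,0,1)$ with top entries $b_1,\dots,b_6\in k$ and bottom row $(0,0,\lambda)$, then the bottom row of $M_\calL$ is indeed $\lambda(c_0,c_1,1)$, but row $i\in\{1,2\}$ of $M_\calL$ equals
$$
b_{3i-2}(a_1,a_2,a_3)+b_{3i-1}(a_4,a_5,a_6)+b_{3i}(c_0,c_1,1),
$$
which lies in $k^3$ \emph{only if} $b_{3i}=0$, since in general $c_0,c_1\in\overkt\setminus k$. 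So, contrary to what you claim, post-composing with a general map fixing $(0,0,1)$ does \emph{not} preserve the shape (\ref{eq:lineartransformation}); one is forced into the block form $M_{\calL_2}=\mathrm{diag}(B,\lambda)$ with $B\in\GL_2(k)$. But that restricted family fixes the line $x_2=0$ globally, so it cannot be used to realize condition (a) of Definition~\ref{def:excellentposition}: the position of $x_2=0$ relative to $F=0$ is frozen by the initial arbitrary choice of $\calL_1$, and your claimed pencil of choices for that line is not available. Your ``bookkeeping'' step~(ii) therefore has nothing to vary over. A secondary issue: the assertion that each bad configuration is a \emph{proper} Zariski-closed condition is exactly the content of the cited Fulton lemma, not something that can be taken for granted, and the passage from a good parameter in $\overkt^6$ to one in $k^6$ (which is what gives $a_i\in k$) must be made explicit via Zariski density of $k^6$ in $\overkt^6$; you gesture at this but do not close it. The fix is to drop the factorization entirely and parameterize the family directly as the paper does.
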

\begin{proof}
 Denote $\bfy=(y_1,\dots,y_6)$. Let $\calL'_\bfy$ be the projective change of coordinates with $M_{\calL'_\bfy}$ of the form
\[
        \begin{pmatrix}
              y_1 & y_2 & y_3\\
              y_4 & y_5 & y_6 \\
              c_0 & c_1 &  c_2
        \end{pmatrix}.
 \]
One sees that there are polynomials $f_1,\dots,f_s\in \overkt[y_1,\dots,y_6]$ such that  $\calL'_\bfb$ with $\bfb \in \overkt^6$ satisfies the required conditions if and only if
$$
     \bfb \in S=\left\{ \bfb \in \overkt^6 \mid \forall\,i=1,\dots,s, \,f_i(\bfb)\neq 0\right\}.
$$
Note that if $\calL$ is a projective change of coordinates such that $\calL(\bfc)=(0,0,1)$ then $\calL=\calL'_\bfb$ for some $\bfb\in \overkt^6$.
 Due to Lemma 1 on page of \cite{fulton}, there are projective changes of coordinates satisfying the above conditions. In other words, $S\neq \emptyset$.  Therefore $S\cap k^6 \neq \emptyset$. For  every $\bfb \in S\cap k^6$, $\calL'_\bfb$ is as required.
\end{proof}

\begin{lemma}
\label{lm:standardtransformation}
\begin{enumerate}
\item
    $T(F^{\calQ})=T(F)$;
\item
    For each $\bfa=(a_0,a_1,a_2)\in \bP^2(\overkt)$, $T(\calQ(\bfa))\leq 2T(\bfa)$.
\end{enumerate}
\end{lemma}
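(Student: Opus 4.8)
The plan is to prove both parts directly from the definitions, since each reduces to an elementary computation.

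For part~1, I would first unwind the quadratic substitution on monomials. Writing $F=\sum_{i+j+k=n} c_{i,j,k}\,x_0^ix_1^jx_2^k$ with $n=\deg(F)$, one has
\[
   F(x_1x_2,\,x_0x_2,\,x_0x_1)=\sum_{i+j+k=n} c_{i,j,k}\,x_0^{\,j+k}x_1^{\,i+k}x_2^{\,i+j}=\sum_{i+j+k=n} c_{i,j,k}\,x_0^{\,n-i}x_1^{\,n-j}x_2^{\,n-k},
\]
and the map $(i,j,k)\mapsto(n-i,n-j,n-k)$ is a bijection of $\{(i,j,k)\in\N^3\mid i+j+k=n\}$ onto itself. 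Hence the substituted polynomial has exactly the same multiset of coefficients as $F$ (only its monomials are reindexed), and it is homogeneous of degree $2n$. Dividing out the monomial factor $x_0^{d_0}x_1^{d_1}x_2^{d_2}$ to obtain the irreducible $\tilde F=F^{\calQ}$ does not alter this multiset of coefficients (up to the unavoidable scalar ambiguity in $\tilde F$, which is immaterial for the height). Since, by Definition~\ref{def:height}, the height of a polynomial is the height of the projective point whose coordinates are its coefficients, and since $T(\cdot)$ on a projective space is invariant under permuting coordinates (the quantity $\sum_\frakp\max_i\{-\nu_\frakp(a_i)\}$ is symmetric in the $a_i$), I conclude $T(F^{\calQ})=T(F)$.

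For part~2, I may assume $\calQ(\bfa)\neq(0,0,0)$ (otherwise $\bfa$ is a fundamental point and $\calQ(\bfa)$ is not a point of $\bP^2(\overkt)$, so there is nothing to prove). Let $R$ be a finite extension of $k(t)$ containing $a_0,a_1,a_2$ and let $\frakp$ be a place of $R$. Since $\nu_\frakp$ is additive on products, for any two distinct indices $i,j\in\{0,1,2\}$,
\[
   -\nu_\frakp(a_ia_j)=-\nu_\frakp(a_i)-\nu_\frakp(a_j)\le 2\max_{l=0}^2\{-\nu_\frakp(a_l)\}.
\]
The three coordinates $a_1a_2,\ a_0a_2,\ a_0a_1$ of $\calQ(\bfa)$ are all of this form, so $\max\{-\nu_\frakp(a_1a_2),-\nu_\frakp(a_0a_2),-\nu_\frakp(a_0a_1)\}\le 2\max_{l=0}^2\{-\nu_\frakp(a_l)\}$. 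Summing over all places $\frakp$ of $R$ and dividing by $[R:k(t)]$ yields $T(\calQ(\bfa))\le 2T(\bfa)$.

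Neither part presents a genuine obstacle. The only points demanding care are, in part~1, the verification that the quadratic substitution merely permutes the monomials of $F$ (so that passing to the irreducible strict transform $F^{\calQ}$ preserves the coefficient multiset, hence the height), and, in part~2, the routine bookkeeping that a maximum of pairwise sums of the $-\nu_\frakp(a_l)$ is bounded by twice their maximum; both follow immediately from additivity of valuations and the symmetry of the height formula.
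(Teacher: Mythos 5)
Your proof is correct and matches the paper's argument essentially step for step: part~1 by observing that the quadratic substitution merely reindexes the monomials of $F$ (you make this explicit via the bijection $(i,j,k)\mapsto(n-i,n-j,n-k)$, and you also explicitly note that stripping the monomial factor $x_0^{d_0}x_1^{d_1}x_2^{d_2}$ to pass to $F^{\calQ}$ leaves the coefficient multiset intact, a point the paper glosses over), and part~2 by the same valuation computation $-\nu_{\frakp}(a_ia_j)\le 2\max_l\{-\nu_{\frakp}(a_l)\}$. No gaps.
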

\begin{proof}
$1.$ Assume that $F=\sum_{i=0}^n \sum_{j=0}^{n-i} c_{i,j}x_0^i x_1^j x_2^{n-i-j}$ where $c_{i,j}\in \overkt$. Then
\begin{align*}
   F(x_1x_2,x_0x_2,x_0x_1)&=\sum_{i=0}^n \sum_{j=0}^{n-i} c_{i,j} (x_1x_2)^i (x_0x_2)^j (x_0x_1)^{n-i-j}\\
   &=\sum_{i=0}^n \sum_{j=0}^{n-i} c_{i,j}  x_0^{n-i}x_1^{n-j}x_2^{i+j}.
\end{align*}
From this, one sees that the set of coefficients of $F$ is equal to that of $F^{\calQ}$. Hence $T(F)=T(F^{\calQ})$.

$2.$ One has that $\calQ(\bfa)=(a_1a_2,a_0a_2,a_0a_1)$. Let $R$ be a finite exntesion of $k(t)$ containing all $a_i$ and $\frakp$ a place of $R$. Note that
$$
     \nu_{\frakp}(a_ia_j)=\nu_{\frakp}(a_i)+\nu_{\frakp}(a_j) \geq 2\min\{\nu_{\frakp}(a_0),\nu_{\frakp}(a_1),\nu_{\frakp}(a_2)\},
$$
i.e. $-\nu_{\frakp}(a_ia_j)\leq 2\max\{-\nu_{\frakp}(a_0),-\nu_{\frakp}(a_1),-\nu_{\frakp}(a_2)\}$. So $T(\calQ(\bfa))\leq 2T(\bfa)$.
\end{proof}

 \begin{define}
\begin{enumerate}
\item
We call a projective change of coordinates in Lemma~\ref{lm:lineartransformation2} a projective change of coordinates centered at $\bfc$.
\item
\label{def:qtransformation} Let $\calL_\bfc$ be a projective change of coordinates centered at $\bfc$ and $\calQ$ the stand quadratic transformation. We call $\calQ_\bfc=\calL_\bfc\circ \calQ$, the composition of $\calQ$ and $\calL_\bfc$, a quadratic transformation centered at $\bfc$.
\end{enumerate}
\end{define}
\begin{notation}
Let $\calQ_\bfc$ be a quadratic transformation centered at $\bfc$. We shall denote $F^{\calQ_\bfc}=(F^{\calL_\bfc})^{\calQ}$.
\end{notation}
\begin{corollary}
\label{cor:qtransformation}
Let $\bfc$ be a singular point of $F=0$ and $\calQ_\bfc$ a quadratic transformation centered at $\bfc$. Then
\begin{enumerate}
\item
    $T(F^{\calQ_\bfc})\leq T(F)+\deg(F)T(\bfc)$;
\item
    for $\bfa \in \bP^2(\overkt)$, $T(\calQ_\bfc^{-1}(\bfa))\leq 2(T(\bfc)+T(\bfa))$;
\end{enumerate}
\end{corollary}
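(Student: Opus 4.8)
The plan is to read off the bound from the two elementary pieces out of which $\calQ_\bfc$ is built, namely the projective change of coordinates $\calL_\bfc$ and the standard quadratic transformation $\calQ$, using the identities $\calQ_\bfc=\calL_\bfc\circ\calQ$ and $F^{\calQ_\bfc}=(F^{\calL_\bfc})^{\calQ}$.

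For item (1), I would first apply Lemma~\ref{lm:standardtransformation}(1), which says that the standard quadratic transformation does not change the height of a homogeneous polynomial; this gives $T(F^{\calQ_\bfc})=T\big((F^{\calL_\bfc})^{\calQ}\big)=T(F^{\calL_\bfc})$. It then remains to bound $T(F^{\calL_\bfc})$. Since $\calL_\bfc$ is a projective change of coordinates centered at $\bfc$, its matrix $M_{\calL_\bfc}$ has the normalized shape (\ref{eq:lineartransformation}), all of whose entries lie in $k$ except for the last row $(c_0,c_1,1)$; hence $T(\calL_\bfc)=T(M_{\calL_\bfc})=T(\bfc)$. Corollary~\ref{cor:lineartransformation}(2) (or directly Lemma~\ref{lm:lineartransformation}(1)) then yields $T(F^{\calL_\bfc})\le T(F)+\deg(F)\,T(\bfc)$, and combining the two estimates gives the claim.

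For item (2), I would use $\calQ^{-1}=\calQ$ from Remark~\ref{rem:quadratictransformation} to rewrite $\calQ_\bfc^{-1}=(\calL_\bfc\circ\calQ)^{-1}=\calQ^{-1}\circ\calL_\bfc^{-1}=\calQ\circ\calL_\bfc^{-1}$, so that $\calQ_\bfc^{-1}(\bfa)=\calQ\big(\calL_\bfc^{-1}(\bfa)\big)$ for every $\bfa\in\bP^2(\overkt)$ in general position. Then Corollary~\ref{cor:lineartransformation}(3) gives $T\big(\calL_\bfc^{-1}(\bfa)\big)\le T(\bfa)+T(\bfc)$, and Lemma~\ref{lm:standardtransformation}(2) gives $T(\calQ(\bfb))\le 2T(\bfb)$; composing these two inequalities produces $T\big(\calQ_\bfc^{-1}(\bfa)\big)\le 2\big(T(\bfa)+T(\bfc)\big)$.

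There is no real obstacle here: the corollary is essentially a bookkeeping consequence of the lemmas of this subsection. The only point needing a line of justification is the identity $T(\calL_\bfc)=T(\bfc)$, which rests on the specific normalized form (\ref{eq:lineartransformation}) — by Lemma~\ref{lm:lineartransformation2} the six free entries can be chosen in $k$ and therefore contribute nothing to the height. A minor, purely cosmetic, caveat is that $\calQ$ and the changes of coordinates are birational rather than everywhere defined, so in (2) one tacitly restricts to points $\bfa$ whose preimage $\calL_\bfc^{-1}(\bfa)$ avoids the fundamental locus where $\calQ$ would send a point to $(0,0,0)$; this does not affect any of the height inequalities.
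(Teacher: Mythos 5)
Your proof is correct and matches the paper's approach: the paper's proof just cites the identity $\calQ^{-1}=\calQ$ together with Lemma~\ref{lm:lineartransformation} and Lemma~\ref{lm:standardtransformation}, which is exactly the decomposition you carry out explicitly. Your additional remarks on $T(\calL_\bfc)=T(\bfc)$ and on the birational caveat are sound clarifications of steps the paper leaves implicit.
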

\begin{proof}
$1$ and $2$ follow from the fact that $\calQ^{-1}=\calQ$ and Lemmas~\ref{lm:lineartransformation} and~\ref{lm:standardtransformation}.
\end{proof}
\begin{prop}
\label{prop:centers}
Let $\calC(\bfxi)$ be a plane projective model of $\calR$ defined by $F$ and $\frakP$ a place of $\calR$. Let $\bfa$ be the center of $\frakP$ with respect to $\bfxi$. Assume that $\calQ_\bfc$ is a quadratic transformation centered at $\bfc$ for some singular point $\bfc$ of $F=0$ and $\bfa'$ is the center of $\frakP$ with respect to $\calQ_\bfc^{-1}(\bfxi)$.
Then
$$T(\bfa')\leq \max\{2(T(\bfc)+T(\bfa)), T(F)+\deg(F)T(\bfc)\}.$$
\end{prop}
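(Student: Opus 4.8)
The plan is to factor the centered quadratic transformation as $\calQ_\bfc=\calL_\bfc\circ\calQ$, so that $\calQ_\bfc^{-1}=\calQ\circ\calL_\bfc^{-1}$ by Remark~\ref{rem:quadratictransformation}, and then to track the center of $\frakP$ through the two stages $\calL_\bfc^{-1}$ and then $\calQ$. Write $\calL_\bfc^{-1}(\bfxi)=(\eta_0,\eta_1,\eta_2)$; since $\calL_\bfc^{-1}$ is an invertible linear change of coordinates this still defines a plane model of $\calR$, and $\calQ_\bfc^{-1}(\bfxi)=\calQ(\eta_0,\eta_1,\eta_2)=(\eta_1\eta_2,\eta_0\eta_2,\eta_0\eta_1)$. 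Because $\bfa\neq(0,0,0)$, Proposition~\ref{prop:centertransformation} (with $\Phi=\calL_\bfc^{-1}$) identifies the center of $\frakP$ with respect to $(\eta_0,\eta_1,\eta_2)$ as $\bfb:=\calL_\bfc^{-1}(\bfa)$, and since $M_{\calL_\bfc}$ has the shape \eqref{eq:lineartransformation} with last row $\bfc$, Corollary~\ref{cor:lineartransformation}(3) gives $T(\bfb)\le T(\bfa)+T(\bfc)$. One small point: applying Proposition~\ref{prop:centertransformation} here, and applying it to $\calQ$ below, requires the relevant coordinate forms not to vanish at $\bfxi$ resp.\ at $(\eta_0,\eta_1,\eta_2)$; since $\calL_\bfc$ may be chosen from a Zariski-dense set of admissible transformations (Lemma~\ref{lm:lineartransformation2}), we may assume this, and in particular that all $\eta_i\neq 0$.

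Now distinguish two cases according to whether $\calQ_\bfc^{-1}(\bfa)=\calQ(\bfb)$ is the zero vector. If $\calQ(\bfb)\neq(0,0,0)$, then $\bfb$ is not a fundamental point, the tuple $\calQ(\eta_0,\eta_1,\eta_2)$ has all three coordinates nonzero, and Proposition~\ref{prop:centertransformation} applied with $\Phi=\calQ$ to $(\eta_0,\eta_1,\eta_2)$ yields $\bfa'=\calQ(\bfb)=\calQ_\bfc^{-1}(\bfa)$. Then Corollary~\ref{cor:qtransformation}(2) gives $T(\bfa')=T(\calQ_\bfc^{-1}(\bfa))\le 2\bigl(T(\bfc)+T(\bfa)\bigr)$, which is at most the first entry of the maximum.

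Suppose instead $\calQ(\bfb)=(0,0,0)$, so at least two coordinates of $\bfb$ vanish and $\bfb$ is one of the fundamental points. Treat the case $\bfb=(0,0,1)$; the other two are obtained by permuting coordinates, which $\calQ$ respects. Then $\nu_\frakP(\eta_0)>\nu_\frakP(\eta_2)$ and $\nu_\frakP(\eta_1)>\nu_\frakP(\eta_2)$, so among the three components $\eta_1\eta_2$, $\eta_0\eta_2$, $\eta_0\eta_1$ of $\calQ_\bfc^{-1}(\bfxi)$ the last has strictly the largest $\nu_\frakP$; hence the third coordinate of the center $\bfa'$ of $\frakP$ with respect to $\calQ_\bfc^{-1}(\bfxi)$ vanishes, i.e.\ $\bfa'$ lies on the line $x_2=0$. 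In the remaining two cases $\bfa'$ lies on $x_0=0$ or on $x_1=0$. In every case $\bfa'$ also lies on $F^{\calQ_\bfc}=0$, since $F^{\calL_\bfc}(\eta_0,\eta_1,\eta_2)=F(\bfxi)=0$ forces $F^{\calQ_\bfc}\bigl(\calQ_\bfc^{-1}(\bfxi)\bigr)=0$ via the defining identity of $F^{\calQ_\bfc}$. As $\bfc$ is a singular point, $\deg F\ge 3$, and in excellent position $\deg F^{\calQ_\bfc}=2\deg F-(\text{multiplicity of }\bfc)\ge \deg F+1$, so the irreducible polynomial $F^{\calQ_\bfc}$ has no common factor with any of $x_0,x_1,x_2$, each of which has coefficients in $k$. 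The second assertion of Proposition~\ref{prop:intersection} then gives $T(\bfa')\le T(F^{\calQ_\bfc})$, and Corollary~\ref{cor:qtransformation}(1) bounds the right-hand side by $T(F)+\deg(F)T(\bfc)$, the second entry of the maximum. Combining the two cases proves the claim.

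The delicate step is the second case, where $\calL_\bfc^{-1}(\bfa)$ is a fundamental point of the standard quadratic transformation: there $\calQ$ is undefined at $\bfb$, so one cannot simply push $\bfa$ forward, and one must instead show by an order-of-vanishing argument that the true center $\bfa'$ is forced onto one of the exceptional lines $x_i=0$, and then cash this in through the ``$G$ linear over $k$'' half of Proposition~\ref{prop:intersection} together with the height bound for $F^{\calQ_\bfc}$. A secondary technicality is ensuring the non-vanishing conditions under which Proposition~\ref{prop:centertransformation} may legitimately be invoked at each of the two stages.
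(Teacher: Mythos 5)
Your proof is correct and follows essentially the same route as the paper: factor $\calQ_\bfc^{-1}=\calQ\circ\calL_\bfc^{-1}$, push the center through the linear part via Proposition~\ref{prop:centertransformation}, split on whether the image is a fundamental point of $\calQ$, and in the generic case apply Corollary~\ref{cor:qtransformation}(2), in the exceptional case land on one of the lines $x_i=0$ and invoke the linear-$G$ half of Proposition~\ref{prop:intersection}. One small improvement worth noting: the paper's treatment of the $\bfa=\bfc$ branch splits further into the sub-case $\ell_0=\ell_1$ (center on $x_2=0$, intersection bound) and the sub-case $\ell_0\neq\ell_1$ (center is a fundamental point, $T=0$), whereas you observe that in \emph{both} sub-cases the third coordinate vanishes, so the single bound $T(\bfa')\le T(F^{\calQ_\bfc})$ from Proposition~\ref{prop:intersection} applies uniformly; this is a genuine, if minor, streamlining. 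Two cosmetic remarks: the non-vanishing conditions $\eta_i\neq0$ needed to invoke Proposition~\ref{prop:centertransformation} hold automatically (not by a choice of $\calL_\bfc$), because $\eta_i=0$ in $\calR$ would force a linear form to divide the irreducible $F^{\calL_\bfc}$ of degree $\geq 3$; and since $F^{\calL_\bfc}$ is in excellent position, $(1,0,0)$ and $(0,1,0)$ cannot lie on $F^{\calL_\bfc}=0$, so only the sub-case $\bfb=(0,0,1)$ can actually occur (this is the content of the paper's initial claim, and it shows your ``permute coordinates'' cases are vacuous).
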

\begin{proof}
We first claim that if $\bfa \neq \bfc$ then $\calQ_\bfc^{-1}(\bfa)\neq (0,0,0)$. Otherwise assume that $\calQ_\bfc^{-1}(\bfa)=\calQ^{-1}\calL_\bfc^{-1}(\bfa)=(0,0,0)$. Then $\calL_\bfc^{-1}(\bfa)$ is a fundamental point of $F^{\calL_\bfc}=0$. Since neither $(1,0,0)$ nor $(0,1,0)$ is a point of $F^{\calL_\bfc}=0$. One has that $\calL_\bfc^{-1}(\bfa)=(0,0,1)$. Hence $\bfa=\bfc$. This proves our claim.

Suppose that $\bfa \neq \bfc$. Then $\calQ_\bfc^{-1}(\bfa)\neq (0,0,0)$ and thus $\bfa'=\calQ_\bfc^{-1}(\bfa)$ by Proposition~\ref{prop:centertransformation}. Corollary~\ref{cor:qtransformation} then implies that  $T(\bfa')\leq 2(T(\bfa)+T(\bfc)).$
Now suppose that $\bfa=\bfc$. Denote $\bfxi'=\calL_\bfc^{-1}(\bfxi)=(\xi_0',\xi_1',\xi_2')$. By Proposition~\ref{prop:centertransformation} again, $(0,0,1)$ is the center of $\frakP$ with respect to $\bfxi'$. Suppose that $u$ is a local uniformizer of $\frakP$ and $\ell_i=\nu_{\frakP}(\xi_i')$. From the definition of center, one sees that $\ell_i>\ell_2$ for all $i=0,1$.
 Write $\xi_i'=u^{\ell_i}(c_i+u \eta_i)$
 where $c_i\in \overkt\setminus\{0\}$, $\eta_i\in \calR$ with $\nu_{\frakP}(\eta_i)\geq 0$.
 One then  has that
 \begin{align*}
 \calQ^{-1}(\bfxi')&=(\xi_1'\xi_2',\xi_0'\xi_2',\xi_0'\xi_1')\\
 &=\left(u^{\ell_1+\ell_2}(c_1c_2+u\tilde{\eta_0}), u^{\ell_0+\ell_2}(c_0c_2+u\tilde{\eta_1}), u^{\ell_1+\ell_0}(c_0c_1+u\tilde{\eta_2})\right)
 \end{align*}
 where $\tilde{\eta_i}\in \calR$ with $\nu_{\frakP}(\tilde{\eta_i})\geq 0$. Set
 $$
   \mu= \min \{\nu_{\frakP}(\xi_1'\xi_2'), \nu_{\frakP}(\xi_0'\xi_2'),\nu_{\frakP}(\xi_1'\xi_0')\}.
 $$
 Since both $\ell_0$ and $\ell_1$ are greater than $\ell_2$, $\mu=\ell_0+\ell_2$ or $\ell_1+\ell_2$.
 In the case that $\mu=\ell_0+\ell_2=\ell_1+\ell_2$, one has that $\bfa'=(c_1c_2,c_0c_2,0)=c_2(c_1,c_0,0)$. So $\bfa' \in \bV(F^{\calQ_\bfc})\cap \bV(x_2)$. By Propositon~\ref{prop:intersection}, $T(\bfa')\leq T(F^{\calQ_\bfc})\leq T(F)+\deg(F)T(\bfc)$. In other cases, one sees that $\bfa'$ is a fundamental point and so $T(\bfa')=0$.
Hence, in each case, one has that
 $$
    T(\bfa')\leq \max\{2(T(\bfc)+T(\bfa)), T(F)+\deg(F)T(\bfc)\}.
 $$
\end{proof}

\subsection{Degrees and Heights for Riemann-Roch Spaces}\label{subsec:ordinary}
Let $D=\sum_{i=1}^m n_i \frakP_i$ be a disivor in $\calR$ where $n_i\neq 0$. Let $\calC(\bfxi)$ defined by $F$ be a plane projective model of $\calR$. Suppose that $h\in \frakL(D)$ and $h=G(\bfxi)/H(\bfxi)$ where $G,H$ are two homogeneous polynomials of the same degree in $\overkt[x_0,x_1,x_2]$. In this subsection, we shall estimate $\deg(G)$ and $T(G), T(H)$ in terms of $\deg(F)$ and $T(F)$. For this, we introduce the following notations and definitions.
\begin{notation}
Let  $\calC(\bfxi)$ be a plane projective model of $\calR$ and $D$ a divisor.
\begin{enumerate}
\item
    $\calS_\bfxi(D):=\{\mbox{the centers of places in $\supp(D)$ with respect to $\bfxi$}\}$.
\item
  $T_\bfxi(D):=\max\left\{T(\bfc) \mid \bfc\in \calS_\bfxi(D)\right\}$.
\end{enumerate}
\end{notation}
\begin{define}
\label{def:intersection}
Let $G,H$ be two homogeneous polynomials in $\overkt[x_0,x_1,x_2]$ of the same degree. Write $\bfxi=(\xi_0,\xi_1,\xi_2)$.
\begin{itemize}
\item [$(1).$]
Define
$$
     \ord_{\frakP}(G(\bfxi))=\nu_{\frakP}\left(G(\bfxi)\right)-\deg(G)\min_{i=0}^2\{\nu_{\frakP}(\xi_i)\}.
$$
\item [$(2).$]
Define
$$
     \divs_\bfxi(G)=\sum \ord_{\frakP}(G(\bfxi))\frakP
$$
where the sum ranges over all places of $\calR$. Furthermore, define $$\divs_\bfxi(G/H)=\divs_\bfxi(G)-\divs_\bfxi(H).$$
\end{itemize}
\end{define}
It is easy to see that $\ord_{\frakP}(G(\bfxi))\geq 0$ and $\ord_\frakP(G(\bfxi))>0$ if and only if $G(\bfc)=0$ where $\bfc$ is the center of $\frakP$ with respect to $\bfxi$. Futhermore $\ord_{\frakP}(G(\lambda \bfxi))=\ord_{\frakP}(G(\bfxi))$ for all nonzero $\lambda \in \calR$.
\begin{remark}
\label{rm:intersectioncycle}
On page 182 of \cite{fulton}, $\ord_{\frakP}(G)$ is defined to be the order at $\frakP$ of the image of $G$  in the valutaion ring of  $\frakP$. Remark that
$$\ord_\frakP(G(\bfxi))=\nu_{\frakP}(G(\bfxi)/\xi_{i_0}^d)$$
where $\xi_{i_0}$ satisfies that $\nu_{\frakP}(\xi_{i_0})=\min_{i=0}^{2}\{\nu_\frakP(\xi_i)\}$. Under the map sending $x_j$ to $\xi_j/\xi_{i_0}$ for all $j=0,1,2$, $G$ is sent to $G(\bfxi)/\xi_{i_0}^d$ which lies in the valuation ring of $\frakP$. Therefore $\ord_\frakP$ given in Definition~\ref{def:intersection} concides with the one given in \cite{fulton} and $\divs_\bfxi(G)$ is nothing else but the intersection cycle of $G$ and $H$ (see page 119 of \cite{fulton}).
\end{remark}
The lemma below follows easily from the definition.
\begin{lemma}
\label{lm:cycle}
Suppose that $G,H\in \overkt[x_0,x_1,x_2]$ are two homogeneous polynomials  of the same degree. Then
\begin{enumerate}
\item
    $\divs_\bfxi\left(\frac{G}{H}\right)=\divs_\bfxi(G)-\divs_\bfxi(H)=\divs\left(\frac{G(\bfxi)}{H(\bfxi)}\right)$; and
\item
    $\deg(\divs_\bfxi(G))=\deg(G)\deg(F)$.
\end{enumerate}
\end{lemma}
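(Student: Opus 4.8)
The plan is to handle part~(1) by unwinding the definitions and part~(2) by recognizing $\divs_\bfxi(G)$ as an intersection cycle. The first equality in part~(1), $\divs_\bfxi(G/H)=\divs_\bfxi(G)-\divs_\bfxi(H)$, is simply the definition given in Definition~\ref{def:intersection}, so nothing is needed there. For the second equality I would fix a place $\frakP$ of $\calR$, write $d=\deg(G)=\deg(H)$, and subtract the defining expressions for $\ord_{\frakP}(G(\bfxi))$ and $\ord_{\frakP}(H(\bfxi))$; since $\deg(G)=\deg(H)$ the two copies of $d\min_{i}\{\nu_{\frakP}(\xi_i)\}$ cancel, leaving
$$
\ord_{\frakP}(G(\bfxi))-\ord_{\frakP}(H(\bfxi))=\nu_{\frakP}(G(\bfxi))-\nu_{\frakP}(H(\bfxi))=\nu_{\frakP}\!\left(\frac{G(\bfxi)}{H(\bfxi)}\right).
$$
Summing this identity over all places of $\calR$ gives $\divs_\bfxi(G)-\divs_\bfxi(H)=\divs\!\left(G(\bfxi)/H(\bfxi)\right)$, which is part~(1). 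This tacitly uses $G(\bfxi)\neq 0$ and $H(\bfxi)\neq 0$, i.e. $F\nmid G$ and $F\nmid H$, so that every order involved is finite and the divisors are genuinely defined; I would state this nondegeneracy as a standing hypothesis.

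For part~(2), I would first note that $\divs_\bfxi(G)$ is effective, since each $\ord_{\frakP}(G(\bfxi))\geq 0$. By Remark~\ref{rm:intersectioncycle}, $\divs_\bfxi(G)$ is exactly the intersection cycle of the curve $F=0$ with the curve $G=0$, transported to the nonsingular model of $\calR$; in particular $\deg(\divs_\bfxi(G))$ is the total intersection number of these two curves. Since $F$ is irreducible of degree $n>0$ and $F\nmid G$, the curves $F=0$ and $G=0$ have no common component, so B\'ezout's theorem (see \cite{fulton}) gives $\deg(\divs_\bfxi(G))=\deg(G)\deg(F)$, as claimed.

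If one wished to avoid invoking B\'ezout at full strength, an alternative is to apply the computation of part~(1) with $H=L^{\deg(G)}$ for a linear form $L$ with $L(\bfxi)\neq 0$: the element $G(\bfxi)/L(\bfxi)^{\deg(G)}$ lies in $\calR$ and so has divisor of degree $0$, whence $\deg(\divs_\bfxi(G))=\deg(G)\,\deg(\divs_\bfxi(L))$, and one is reduced to showing $\deg(\divs_\bfxi(L))=n$ for a single, suitably generic, linear form $L$, whose zero line meets $F=0$ transversally in $n$ distinct simple points, each contributing one place with $\ord_{\frakP}(L(\bfxi))=1$. I do not expect any genuine obstacle in this lemma: part~(1) is a one-line cancellation and the only real content in part~(2) is the identification with an intersection cycle, which Remark~\ref{rm:intersectioncycle} already records; the one point requiring mild care is the nondegeneracy hypothesis $F\nmid G$, without which $\divs_\bfxi(G)$ fails to be a divisor.
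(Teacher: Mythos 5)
Your proposal is correct and follows exactly the route the paper anticipates: the paper states only that the lemma "follows easily from the definition," but Remark~\ref{rm:intersectioncycle}, placed immediately before it, identifies $\divs_\bfxi(G)$ with the intersection cycle of $\bV(F)$ and $\bV(G)$ from \cite{fulton}, so part~(2) is B\'ezout, and part~(1) is the cancellation of the $\deg(G)\min_i\{\nu_\frakP(\xi_i)\}$ terms that you carry out. Your flagging of the implicit nondegeneracy hypothesis $F\nmid G$, $F\nmid H$ and the elementary alternative for~(2) via $H=L^{\deg G}$ are both sound, if slightly more than the paper bothers to record.
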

\begin{lemma}
\label{lm:generallines}
Suppose that $\bfc=(c_0,c_1,1)$ is an ordinary singular point of $F=0$ of multiplicity $r$, $n=\deg(F)$ and $S$ is a finite set of points of $F=0$.
\begin{enumerate}
\item  Let $L_\lambda=x_0-c_0x_2-\lambda(x_1-c_1x_2)$. Then for all but a finite number of $\lambda$, $L_\lambda=0$ intersects $F=0$ in $n-r$ distinct points other than the points in $\{\bfc\}\cup  S$.
\item Write
$$
    F=F_r(x_0-c_0x_2, x_1-c_1x_2)x_2^{n-r}+\dots+ F_n(x_0-c_0x_2,x_1-c_1x_2)
$$
where $F_i(y_0,y_1)$ is a homogeneous polynomial in $y_0,y_1$ of degree $i$. Assume that $F_r(1,0)F_r(0,1)F_n(1,0)F_n(0,1)\neq 0$. Let
$$G_\lambda=\alpha (x_1-c_1x_2)x_2-(x_0-c_0x_2)x_2-\lambda (x_0-c_0x_2)(x_1-c_1x_2)$$
 where $\alpha\in \overkt\setminus\{0\}$ satisfies that $F_r(\alpha,1)=0$. Then for all but a finite number of $\lambda$, $G_\lambda=0$ intersects $F=0$ in $2n-r-1$ distinct points other than the points in $\{\bfc\}\cup S$.
\end{enumerate}
\end{lemma}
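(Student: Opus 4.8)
The plan is to obtain both parts from B\'ezout's theorem applied to $F$ together with the relevant $1$-parameter family of auxiliary curves, the only real work being to show that at most finitely many values of $\lambda$ have to be discarded. I work throughout in the coordinates $u=x_0-c_0x_2$, $v=x_1-c_1x_2$, $w=x_2$ (an invertible linear change under which $\bfc=(0,0,1)$ and $F$ becomes $\sum_{i=r}^{n}F_i(u,v)\,w^{n-i}$, so that its tangent cone at $\bfc$ is $F_r(u,v)=0$). Since $F$ is irreducible and $\bfc$ is singular we have $2\le r\le n$; in fact $n\ge 3$ and $n>r$, because a degree-$2$ curve with a double point, and a curve whose equation is a form in $u,v$ alone, are both reducible. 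Hence $F$ shares no component with the line $L_\lambda$ (for any $\lambda$), nor with the conic $G_\lambda$ when the latter is irreducible, and B\'ezout applies. I shall use two classical facts: $(i)$ as $\bfc$ is an \emph{ordinary} $r$-fold point, exactly $r$ of the lines through $\bfc$ are tangent to $F$ at $\bfc$, and for every other line $L$ through $\bfc$ one has $I(\bfc,F\cap L)=r$ by the usual properties of intersection multiplicity \cite{fulton}; $(ii)$ only finitely many lines through a fixed point are tangent to $F$ at a smooth point of $F$, since in characteristic zero the dual curve of the irreducible curve $F$ is again a curve (of degree $\ge 2$) and hence contains no line, in particular is not the line of $\check{\bP}^2(\overkt)$ parametrising the lines through $\bfc$.

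For part $(1)$: each $L_\lambda$ passes through $\bfc$. Discard the at most $r$ values of $\lambda$ for which $L_\lambda$ is tangent to $F$ at $\bfc$; the finitely many values (by $(ii)$) for which $L_\lambda$ is tangent to $F$ at a smooth point; and the finitely many values for which $L_\lambda$ meets one of the finitely many other singular points of $F$ or a point of $S\setminus\{\bfc\}$ (each such point imposes one linear condition on $\lambda$). For every remaining $\lambda$ one has $I(\bfc,F\cap L_\lambda)=r$, while at each other common point $P$ the curve $F$ is smooth and $L_\lambda$ is non-tangent there, so $I(P,F\cap L_\lambda)=1$; B\'ezout then gives $\sum_P I(P,F\cap L_\lambda)=n$, so there are exactly $n-r$ such points $P$, all distinct and none in $\{\bfc\}\cup S$.

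For part $(2)$: write $G_\lambda=G_0-\lambda\,uv$ with $G_0=w(\alpha v-u)$, so $\{G_\lambda\}$ is the pencil of conics whose base locus is exactly $\{(1,0,0),(0,1,0),(0,0,1)\}$ in the $u,v,w$-coordinates; the hypotheses $F_n(1,0)\ne 0\ne F_n(0,1)$ say precisely that the first two of these base points do not lie on $F$, so $\bfc$ is the only base point on $F$. A rank computation of the associated symmetric matrix shows that $G_\lambda$ is a smooth (hence irreducible) conic for every $\lambda\ne 0$, with tangent line $u=\alpha v$ at $\bfc$; since $F_r(\alpha,1)=0$, this is one of the $r$ distinct tangent lines of $F$ at $\bfc$, so $F$ and $G_\lambda$ share a tangent there and $I(\bfc,F\cap G_\lambda)\ge r+1$. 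The crucial point is that equality holds for all but one $\lambda$: parametrising the branch of $G_\lambda$ at $\bfc$ by $v=t$, $u=\alpha t/(1+\lambda t)=\alpha t-\alpha\lambda t^2+\cdots$ and substituting into $F=\sum_{i\ge r}F_i(u,v)$ yields
$$
F\big|_{\mathrm{branch}}=\Bigl(F_{r+1}(\alpha,1)-\alpha\lambda\,\frac{\partial F_r}{\partial y_0}(\alpha,1)\Bigr)\,t^{r+1}+O(t^{r+2}),
$$
where $\partial F_r/\partial y_0(\alpha,1)\ne 0$ because, $\bfc$ being ordinary and $F_r(1,0)\ne 0$, $\alpha$ is a simple root of the squarefree polynomial $F_r(y_0,1)$ (also $\alpha\ne 0$ as $F_r(0,1)\ne 0$); hence the coefficient of $t^{r+1}$ is a non-constant linear polynomial in $\lambda$, vanishing for at most one $\lambda$. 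Discard that value, together with $\lambda=0$, the finitely many $\lambda$ for which $G_\lambda$ meets another singular point of $F$ or a point of $S\setminus\{\bfc\}$ (one linear condition each, never satisfied identically since these points are not base points), and the finitely many $\lambda$ for which $G_\lambda$ is tangent to $F$ at a smooth point other than $\bfc$. For the remaining $\lambda$, B\'ezout gives $\sum_P I(P,F\cap G_\lambda)=2n$ with $I(\bfc,F\cap G_\lambda)=r+1$ and $I(P,F\cap G_\lambda)=1$ at every other common point, so there are exactly $2n-r-1$ such points, all distinct and none in $\{\bfc\}\cup S$.

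The step I expect to be the main obstacle is the finiteness, in part $(2)$, of the set of $\lambda$ for which $G_\lambda$ is tangent to $F$ at a smooth point other than $\bfc$; unlike for a pencil of lines this is not a finite union of one-point conditions. I would treat it by considering the two morphisms from the normalisation $\tilde F$ of $F$ to $\check{\bP}^2(\overkt)$ given, off a finite set, by $P\mapsto T_PF$ (the Gauss map) and $P\mapsto T_PG_{\lambda(P)}$, where $\lambda(P)$ is the unique $\lambda$ with $P\in G_\lambda$: tangency of $G_\lambda$ to $F$ at a smooth point $P$ (forcing $\lambda=\lambda(P)$) is exactly the agreement of these two maps at $P$, so if infinitely many $\lambda$ were bad the morphisms would coincide identically, which one checks cannot happen (e.g.\ by a dimension count). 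An alternative, closer to the quadratic-transformation machinery developed earlier in this section, is to note that $G_\lambda$ is the pull-back under the standard quadratic transformation of a line varying in the pencil through a point that is a \emph{smooth} point of the quadratic transform of $F$ at $\bfc$, thereby reducing the conic case to the line case (and, incidentally, deducing $I(\bfc,F\cap G_\lambda)=r+1$ from B\'ezout on that transform instead of from the local expansion above). Making either route fully rigorous is the only genuinely delicate point; everything else is B\'ezout accounting and elementary genericity.
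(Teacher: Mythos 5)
Your part~(1) is correct and close in spirit to the paper's, though you argue the finiteness of bad $\lambda$ via the dual curve (the Gauss image is an irreducible non-line, so only finitely many tangent lines of $F$ pass through $\bfc$), whereas the paper does it more elementarily: substituting $x_0=\lambda x_1$ into $F$ gives $x_1^r H_\lambda(x_2/x_1)$ with $H_\lambda$ of degree $n-r$, and a multiple root of $H_\lambda$ forces a common zero of $F$ and $\partial F/\partial x_2$ on $L_\lambda$, of which there are only finitely many in total. Both are fine.

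For part~(2), your ``alternative'' \emph{is} the paper's proof: apply the standard quadratic transformation $\calQ$, observe that $G_\lambda^\calQ=\alpha x_0-x_1-\lambda x_2$ is a pencil of \emph{lines} through the point $(1,\alpha,0)$, which (using ordinariness of $\bfc$ and the nonvanishing hypotheses on $F_r,F_n$) is a \emph{simple} point of $F^\calQ$, a curve of degree $2n-r$; then part~(1), applied with $r=1$ after one more linear change of coordinates, hands you $2n-r-1$ distinct extra points and the local multiplicity $r+1$ at $\bfc$ for free. Your \emph{primary} route (Bézout plus the branch expansion giving $I(\bfc,F\cap G_\lambda)=r+1$ for all but one $\lambda$) is a genuinely different and in principle workable argument --- the local computation is correct, including the use of $F_r(0,1)\ne0$ to get $\alpha\ne0$ and of ordinariness to get $\partial F_r/\partial y_0(\alpha,1)\ne0$ --- but it carries exactly the gap you flag: the finiteness of $\lambda$ for which $G_\lambda$ is tangent to $F$ at some smooth point other than $\bfc$. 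The Gauss-map coincidence argument you sketch needs a real contradiction (showing the two morphisms $P\mapsto T_PF$ and $P\mapsto T_PG_{\lambda(P)}$ cannot agree identically on $\tilde F$), and you do not supply one; a degree count on these maps is plausible but not immediate. The point of the paper's quadratic-transformation reduction is precisely that it replaces this delicate conic-pencil tangency question with the line-pencil one already settled in part~(1), so no separate finiteness argument is needed. If you want to keep the Bézout route, the cleanest fix is probably still to borrow that reduction for the finiteness step alone.
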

\begin{proof}
$1.$ Under the projective change of coordinates $\calL$ with $\calL(x_0)=x_0+c_0x_2, \calL(x_1)=x_1+c_1x_2$ and $\calL(x_2)=x_2$, we may assume that $\bfc=(0,0,1)$. Set $L_\lambda=x_0-\lambda x_1$ where $\lambda\in \overkt$. Substituting $x_0=\lambda x_1$ into $F$ yields that
$$
     x_1^r\left(F_r(\lambda, 1)x_2^{n-r}+F_{r+1}(\lambda, 1)x_1x_2^{n-r-1}+\dots+F_n(\lambda, 1)x_1^{n-r}\right).
$$
Set $H_\lambda(t)=F_\lambda(\lambda,1)t^{n-r}+\dots+F_n(\lambda,1)$. For every root $\gamma$ of $H_\lambda(t)=0$, one sees that $(\lambda, 1,\gamma)$ is a common point of $L_\lambda=0$ and $F=0$ other than $\bfc$. Moreover if $H_\lambda(t)=0$ has $n-r$ distinct roots then $L_\lambda=0$ intersects $F=0$ in $n-r$ distinct points other than $\bfc$. So it suffices to prove that for all but a finite number of $\lambda$, $H_\lambda(t)=0$ has $n-r$ distinct roots. Note tht substituting $x_0=\lambda x_1$ into $\partial F/\partial x_2$ yields that
$$
      x_1^r \sum_{i=r}^n (n-i) F_i(\lambda,1) x_2^{n-i-1} x_1^{i-r}.
$$
From this, one sees that if $\gamma$ is a common root of $H_\lambda(t)=0$ and $\partial H_\lambda(t)/\partial t=0$ then $(\lambda,1,\gamma)$ is a common point of $F=0$ and $\partial F/\partial x_2=0$. Since $F=0$ and $\partial F/\partial x_2=0$ have only finitely many common points, there are only finitely many $\lambda$ such that $H_\lambda(t)=0$ has multiple roots. In other words, there is only a finite number of $\lambda$ such that $L_\lambda=0$ intersects $F=0$ in less than $n-r$ distinct points other than $\bfc$. It remains to show that there are only finitely many $\lambda$ such that $(S\setminus \{\bfc\})\cap \bV(L_\lambda)\neq \emptyset$.
 Assume that $(a_0,a_1,a_2)\in S\setminus\{\bfc\}$ which lies in the line $L_\lambda=0$. If $a_1-c_1a_2\neq 0$ then
 $$\lambda=(a_0-c_2a_2)/(a_1-c_1a_2).$$  If $a_1-c_1a_2=0$ then $a_0-c_0a_2=0$ and thus $(a_0,a_1,a_2)=a_2(c_0,c_1,1)$. In other words, $(a_0,a_1,a_2)=(0,0,0)$ or $\bfc$, which is impossible.

$2.$ Similarly, we may assume that $\bfc=(0,0,1)$. Then
\begin{align*}
   G_\lambda & =\alpha x_1x_2-x_0x_2 -\lambda x_0x_1,\\
   F&=F_r(x_0,x_1)x_2^{n-r}+\dots+F_n(x_0,x_1).
\end{align*}
Applying the standard quadratic transformation $\calQ$ to $G_\lambda$ and $F$, one obtains that
\begin{align*}
G_\lambda^{\calQ}&=\alpha x_0-x_1-\lambda x_2,\\
F(x_1x_2,x_0x_2,x_0x_1)&=x_2^r\left(F_r(x_1,x_0)(x_0x_1)^{n-r}+\dots+F_n(x_1,x_0)x_2^{n-r}\right).
\end{align*}
Since $\bfc$ is an ordinary singular point and $F_r(1,0)F_r(0,1)\neq 0$, by on page of \cite{fulton} $(1,\alpha,0)$ is a simple point of $F^{\calQ}=0$. Moveover, as $F_n(1,0)F_n(0,1)\neq 0$, neither $(1,0,0)$ nor $(0,1,0)$ is a point of $F^{\calQ}=0$ and so $\deg(F^{\calQ})=2n-r$. Thus
$$
     F^\calQ=F_r(x_1,x_0)(x_0x_1)^{n-r}+\dots+F_n(x_1,x_0)x_2^{n-r}.
$$
For every common point $(\gamma_0,\gamma_1,\gamma_2)$ of $G_\lambda^{\calQ}=0$ and $F^{\calQ}=0$ with $\lambda\gamma_2\neq 0$,
$(\gamma_1\gamma_2,\gamma_0\gamma_2,\gamma_0\gamma_1)\neq (0,0,0)$ and then it is a common point of $G_\lambda=0$ and $F=0$ other than $\bfc$. Therefore it suffices to show that for all but a finite number of $\lambda$, $G_\lambda^{\calQ}=0$ and $F^{\calQ}=0$ have $2n-r-1$ distinct common points $(\gamma_0,\gamma_1,\gamma_2)$ with $\gamma_2\neq 0$. Let $\calL$ be the projective change of coordinates such that $\calL(x_0)=(x_0+x_1)/\alpha,\calL(x_1)=x_1,\calL(x_2)=x_2$. Then
$(G_\lambda^{\calQ})^\calL=x_0-\lambda x_2$. Note that $\calL^{-1}((1,\alpha,0))=(0,\alpha,0)$ which is a simple point of $(F^{\calQ})^\calL=0$. Thus
$$
     (F^{\calQ})^\calL=\tilde{F}_1(x_0,x_2)x_1^{2n-r-1}+\tilde{F}_2(x_0,x_2)x_1^{2n-r-2}+\dots+\tilde{F}_{2n-r}(x_0,x_2).
$$
By $(1)$, for all but a finite number of $\lambda$, $(G_\lambda^{\calQ})^\calL=0$ intersects $(F^{\calQ})^\calL=0$ in $2n-r-1$ distinct points $(\gamma_0',\gamma_1',\gamma_2')$ with $\gamma_2'\neq 0$. Remark that if $(\gamma'_0,\gamma_1',\gamma_2')$ is a common point of $(G_\lambda^{\calQ})^\calL=0$ and $(F^{\calQ})^\calL=0$ with $\gamma_2'\neq 0$ then $(\gamma_0'+\beta/\alpha \gamma_1',\gamma_1',\gamma_2')$ is a common point of $G_\lambda^{\calQ}=0$ and $F^{\calQ}=0$ with $\gamma_2'\neq 0$. These imply that for all but a finite number of $\lambda$, $G_\lambda^{\calQ}=0$ intersects $F^{\calQ}=0$ in $2n-r-1$ distinct points $(\gamma_0,\gamma_1,\gamma_2)$ with $\gamma_2\neq 0$.

Finally, we need to prove that there are only finitely many $\lambda$ such that $S\setminus \{\bfc\}\cap \bV(G_\lambda)\neq \emptyset$. Assume that $\bfa=(a_0,a_1,a_2)\in S\setminus\{\bfc\}$ which lies in $G_\lambda=0$. We claim that $(a_0-c_0a_2)(a_1-c_1a_2)\neq 0$.  Suppose on the contrary that $(a_0-c_0a_2)(a_1-c_1a_2)=0$. Then by $G_\lambda(\bfa)=0$, one sees that either $a_2=0$ or both $a_0-c_0a_2$ and $a_1-c_1a_2$ are zero. This implies that $\bfa$ must be one of three points $(1,0,0), (0,1,0), a_2(c_0,c_1,1)$.  This is impossible and then our claim holds. It follows from the claim that $\lambda$ is uniquely determined by $\bfa$.
\end{proof}

\begin{prop}\label{prop:simplification}
Suppose that $F=0$ has only ordinary singularities and $D$ is an effective divisor in $\calR$. Let $D'$ be a divisor in $\calR$.
\begin{enumerate}
\item
Assume further that $D=\sum_{i=1}^r \frakP_i$ where  all $\frakP_i$ have the same center which is a point of $F=0$ with multiplicity $r$.
Then there is a linear homogeneous polynomial $G$ in $\overkt[x_0,x_1,x_2]$ such that
$$
\divs_\bfxi(G)=D+A
$$
where $A$ is a very simple and effective divisor of degree $n-r$, $\supp(A)\cap (\supp(D')\cup\{\frakP_1,\dots,\frakP_r\})=\emptyset$, and
$$
T(G)\leq T_\bfxi (D), \,\,T_\bfxi(A)\leq 2(T(F)+nT_\bfxi(D)).$$

\item Assume that $D=\frakP$ where the center of $\frakP$ is a singular point of $F=0$. Then there are two homogeneous polynomials $G,H\in \overkt[x_0,x_1,x_2]$ of degree two such that
$$
    \divs_\bfxi(G/H)=D+A
$$
where $A$ is a very simple divisor, $\supp(A)\cap (\supp(D')\cup \{\frakP\})=\emptyset$ and
$$T(G), T(H)\leq T(F)+nT_\bfxi(D),\,\,T_\bfxi(A)\leq (2n+4)T(F)+2n^2T_\bfxi(D).$$
\end{enumerate}
\end{prop}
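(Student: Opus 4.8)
\emph{Overall strategy.} The plan is to realize $D$ — plus a harmless ``very simple'' remainder $A$ — inside an intersection cycle $\divs_\bfxi(\,\cdot\,)$ of $F$ with a curve of degree one (for part (1)) or two (for part (2)) through the common center $\bfc$ of the places in $\supp(D)$. Lemma~\ref{lm:generallines} produces such curves meeting $F=0$ only in generic simple points outside a prescribed finite set $S$ (we always take $S\supseteq\calS_\bfxi(D')\cup\{\text{singular points of }F=0\}$); Lemma~\ref{lm:cycle} controls the degree of the cycle via B\'ezout and records $\divs_\bfxi(G/H)=\divs_\bfxi(G)-\divs_\bfxi(H)$; and the height lemmas of Section~2 together with Proposition~\ref{prop:intersection} bound $T$ of the polynomials and of the residual points.

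\emph{Part (1).} All places of $\supp(D)$ have center $\bfc$, so $T_\bfxi(D)=T(\bfc)$; permuting coordinates, assume $\bfc=(c_0,c_1,1)$. By Lemma~\ref{lm:generallines}(1) pick $\lambda\in k$ so that $L_\lambda=x_0-c_0x_2-\lambda(x_1-c_1x_2)$ meets $F=0$ in $n-r$ distinct points off $\{\bfc\}\cup S$, and set $G=L_\lambda$. By Lemma~\ref{lm:cycle}(2), $\divs_\bfxi(G)$ has degree $n$, so exactly $r$ of it lies over $\bfc$; ordinarity of $\bfc$ forces this part to be $\frakP_1+\dots+\frakP_r=D$ (each branch contributes $1$), and the remainder $A$ is effective of degree $n-r$, supported on the $n-r$ distinct simple points of the lemma — hence very simple — with $\supp(A)$ disjoint from $\supp(D')\cup\{\frakP_1,\dots,\frakP_r\}$ since the centers of the latter lie in $\{\bfc\}\cup S$. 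Writing $G=x_0-\lambda x_1+(\lambda c_1-c_0)x_2$ with $\lambda\in k$, Remark~\ref{rem:heights} gives $T(G)\le T(\bfc)=T_\bfxi(D)$; and each center of a place of $\supp(A)$ lies on $\bV(F)\cap\bV(G)$, so Proposition~\ref{prop:intersection} yields $T_\bfxi(A)\le 2(\deg(F)T(G)+\deg(G)T(F))\le 2(T(F)+nT_\bfxi(D))$.

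\emph{Part (2).} Now $D=\frakP$ and its center $\bfc$ is singular, so there are $r\ge2$ ordinary branches $\frakP=\frakP_1,\frakP_2,\dots,\frakP_r$ over $\bfc$; we may assume $n\ge3$. First, by a projective change of coordinates of height $\le T(\bfc)$ (Lemma~\ref{lm:lineartransformation2}, Corollary~\ref{cor:lineartransformation}), bring $F$ into the position required by Lemma~\ref{lm:generallines}(2) at $\bfc$, tracking via Lemma~\ref{lm:lineartransformation} that all heights stay within the stated bounds, and relabel the data. Let $\alpha\in\overkt$ be the root of $F_r(y,1)=0$ giving the tangent slope of $\frakP$ (well defined and nonzero since $F_r(1,0)F_r(0,1)\ne0$). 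By Lemma~\ref{lm:generallines}(2), for generic $\lambda\in k$ the conic
$$G=\alpha(x_1-c_1x_2)x_2-(x_0-c_0x_2)x_2-\lambda(x_0-c_0x_2)(x_1-c_1x_2)$$
meets $F=0$ in $2n-r-1$ distinct simple points off $\{\bfc\}\cup S$; as $G$ is smooth at $\bfc$ with tangent line equal to that of $\frakP$, for generic $\lambda$ we have $\ord_\frakP(G(\bfxi))=2$ and $\ord_{\frakP_i}(G(\bfxi))=1$ for $i\ge2$, so with $\deg\divs_\bfxi(G)=2n$,
$$\divs_\bfxi(G)=2\frakP+\frakP_2+\dots+\frakP_r+B$$
with $B$ very simple and effective of degree $2n-r-1$, $\supp(B)$ missing $\supp(D')$ and the places over $\bfc$. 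For the denominator, take $H$ of degree two passing through $\bfc$ transversally to every branch there — e.g.\ $H=L_1L_2$ with $L_1$ a generic line through $\bfc$ and $L_2$ a generic line, or a conic of the same shape as $G$ with a non-tangent direction (transferred through a quadratic transformation centered at $\bfc$ as in the proof of Lemma~\ref{lm:generallines}(2)) — chosen so that its $2n-r$ residual intersection points with $F=0$ are distinct simple points off $\{\bfc\}\cup S\cup\{\text{centers of }\supp(B)\}$. Then $\divs_\bfxi(H)=\frakP+\frakP_2+\dots+\frakP_r+B'$ with $B'$ very simple and effective of degree $2n-r$, $\supp(B')$ missing $\supp(B)$, $\supp(D')$ and the places over $\bfc$, and by Lemma~\ref{lm:cycle}
$$\divs_\bfxi(G/H)=\frakP+(B-B')=D+A,\qquad A:=B-B',$$
which is very simple (disjoint supports) with $\supp(A)\cap(\supp(D')\cup\{\frakP\})=\emptyset$. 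Expanding $G$ and $H$ and using Propositions~\ref{prop:heightproperty} and~\ref{prop:height2} and Corollary~\ref{cor:heightofzero} — in particular $T(\alpha)\le T(F_r(y,1))\le T(F)+nT_\bfxi(D)$, the remaining coefficients involving only elements of $k$ and $c_0,c_1$ with $T(c_i)\le T_\bfxi(D)$ — one obtains $T(G),T(H)\le T(F)+nT_\bfxi(D)$, whence Proposition~\ref{prop:intersection} gives
$$T_\bfxi(A)\le 2\bigl(n\max\{T(G),T(H)\}+2T(F)\bigr)\le(2n+4)T(F)+2n^2T_\bfxi(D).$$

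\emph{Main obstacle.} The crux is the local analysis at $\bfc$: one must show that $G$ touches the distinguished branch $\frakP$ with intersection multiplicity exactly $2$ and each other branch with multiplicity exactly $1$ (and that $H$ is transversal to all of them), so that the ``bad'' part of $\divs_\bfxi(G)$ is precisely $2\frakP+\frakP_2+\dots+\frakP_r$; this is where ordinarity of $\bfc$ and the exclusion of the finitely many degenerate $\lambda$ are essential, and one must separately verify that every residual intersection point is genuinely distinct, simple, and avoids the forbidden places. In parallel, propagating the height estimates through the preliminary change of coordinates and the passage to the tangent cone without exceeding the stated constants is delicate: it forces the change of coordinates to be chosen with height $\le T(\bfc)$ and a careful, if routine, bookkeeping of $T(F)$, $T(\alpha)$ and the residual points via Lemma~\ref{lm:lineartransformation}, Corollary~\ref{cor:lineartransformation} and Proposition~\ref{prop:intersection}.
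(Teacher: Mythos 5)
Your proposal follows the paper's own construction almost line for line: part~(1) is the paper's argument verbatim, and in part~(2) you choose the same tangent conic $G$ from Lemma~\ref{lm:generallines}(2), deduce $\divs_\bfxi(G)=2\frakP+\frakP_2+\dots+\frakP_r+B$ by the local-expansion/B\'ezout argument, and take $H$ to be a product of two generic lines, exactly as the paper does. The structure, the disjointness bookkeeping for $A$, and the use of Proposition~\ref{prop:intersection} to control $T_\bfxi(A)$ all match.

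There is, however, a genuine gap in the height bound for $T(G)$ in part~(2). You bound $T(\alpha)\le T(F_r(y,1))\le T(F)+nT_\bfxi(D)$ and then ``obtain'' $T(G)\le T(F)+nT_\bfxi(D)$. But expanding the conic $G$, its (projective) coefficient vector is a $k$-linear combination of products of one of $\alpha_0,\alpha_1$ with up to \emph{two} of $c_0,c_1$, so the best one can conclude from your ingredients is $T(G)\le T((\alpha_0,\alpha_1))+2T(\bfc)\le T(F)+(n+2)T_\bfxi(D)$, which overshoots the claimed $T(F)+nT_\bfxi(D)$ and hence the claimed bound on $T_\bfxi(A)$. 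The paper closes this gap with a sharper estimate on the tangent cone: since $\bfc$ has multiplicity $r\ge2$, the coefficients of $F_r$ (written in the shifted variables $x_0-c_0x_2$, $x_1-c_1x_2$) involve $c_0,c_1$ only to total degree at most $n-r\le n-2$, so $T(F_r)\le T(F)+(n-2)T(\bfc)$; this is exactly what produces $T(G)\le 2T(\bfc)+T(F_r)\le T(F)+nT(\bfc)$. Without this refinement your bookkeeping does not reach the stated constant. Relatedly, you describe the preliminary normalization as ``a projective change of coordinates of height $\le T(\bfc)$''; the paper instead keeps that normalization of the form $\diag(B,1)$ with $B\in\GL_2(k)$, so it has height $0$ and does not inflate $T(F)$ -- otherwise $T(F^\calL)$ would already pick up an extra $nT(\bfc)$ and the constants would worsen further. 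Both points are repairable but as written the stated inequalities for $T(G)$ and $T_\bfxi(A)$ are not justified.
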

\begin{proof}
$1.$
Suppose $\bfc=(c_0,c_1,c_2)$ is the center of $\frakP_i$ with respect to $\bfxi$. Without loss of generality, we assume that $c_2\neq 0$ and $\bfc=(c_0,c_1,1)$. Set $L_\lambda=x_0-c_0x_2-\lambda(x_1-c_1x_2)$.
Due to Lemma~\ref{lm:generallines}, for all but a finite number of $\lambda$, $L_\lambda=0$ intersects $F=0$ in $n-r$ distincet points other than the points in $\{\bfc\}\cup \calS_\bfxi(D')$. Let $\lambda'\in k$ be such that $L_{\lambda'}=0$ satisfies the above condition. Then
$$
\divs_{\bfxi}(L_{\lambda'})=\sum_{i=1}^{r} \frakP_{i}+A
$$
where $A$ is an effective divisor of degree $n-r$ and $\supp(A)\cap (\supp(D')\cup\{\frakP_1,\dots,\frakP_r\})=\emptyset$. It is clear that $A$ is very simple since $L_{\lambda'}=0$ intersects $F$ in $n-r$ distinct points other than $\bfc$. Finally, one easily sees that $T(L_{\lambda'})\leq T(\bfc)=T_\bfxi(D)$.  As the points in $T_\bfxi(A)$ are the intersection points of $F=0$ and $L_{\lambda'}=0$, $T_\bfxi(A)\leq 2(T(F)+nT_\bfxi(D))$ by Proposition~\ref{prop:intersection}.

$2.$ Suppose that $\bfc=(c_0,c_1,c_2)$ is the center of $\frakP$ with respect to $\bfxi$, and $\bfc$ is of multiplicity $r>0$. Since $\bfc$ is an ordinary singular point, there are exactly $r$ places of $\calR$ with $\bfc$ as the center with respect to $\bfxi$. Denote these $r$ places by $\frakP_1=\frakP, \dots, \frakP_r$. Without loss of generality, we may assume that $c_2\neq 0$ and $\bfc=(c_0,c_1,1)$.
Write
$$
    F=F_r(x_0-c_0x_2, x_1-c_1x_2)x_2^{n-r}+\dots+ F_n(x_0-c_0x_2,x_1-c_1x_2)
$$
where $F_i(y_0,y_1)$ is a homogeneous polynomial of degree $i$. Choose a projective change of coordinates $\calL$ with $M_\calL=\diag(B,1), B\in \GL_2(k)$
such that
$$
    F^{\calL}=\tilde{F}_r(x_0-\tilde{c}_0x_2,x_1-\tilde{c}_1x_2)x_2^{n-r}+\dots+\tilde{F}_n(x_0-\tilde{c}_0x_2,x_1-\tilde{c}_1x_2)
$$
satisfies that $\tilde{F}_r(1,0) \tilde{F}_r(0,1)\tilde{F}_n(1,0)\tilde{F}_n(0,1)\neq 0$, where $\tilde{F}_i=F_i((y_0,y_1)B)$ and $(\tilde{c}_0,\tilde{c}_1)=(c_0,c_1)B^{-1}$. By Lemma~\ref{lm:lineartransformation}, $T(F^\calL)\leq T(F)$.
 Denote
 $$\tilde{\bfxi}=(\tilde{\xi}_0,\tilde{\xi}_1,\tilde{\xi}_2)=\bfxi M_\calL^{-1}.$$
 Then  $\tilde{\bfc}=(\tilde{c}_0,\tilde{c}_1,1)=\bfc M_\calL^{-1}$ is the center of $\frakP_1$ with respect to $\tilde{\bfxi}$.  For $i=0,1$, write
 $$\tilde{\xi}_i/\tilde{\xi}_2=\tilde{c}_i+\alpha_i u^d+u^{d+1}\eta_i$$
  where $u$ is a local uniformizer of $\frakP_1$, $d\geq 1$, $\alpha_i\in \overkt$ not all zero,  and $\nu_{\frakP}(\eta_i)\geq 0$. Furthermore,
$$
   0=F^\calL(\tilde{\xi}_0/\tilde{\xi}_2,\tilde{\xi}_1/\tilde{\xi}_2,1)=u^{dr} \tilde{F}_r(\alpha_0,\alpha_1)+u^{dr+1}\beta
$$
 where $\nu_{\frakP_1}(\beta)\geq 0$. This implies that $\tilde{F}_r(\alpha_0,\alpha_1)=0$. Since $\tilde{F}_r(0,1)\tilde{F}_r(1,0)\neq 0$, $\alpha_0\alpha_1\neq 0$.
 Set $\bar{\alpha}=\alpha_0/\alpha_1$ and
$$\tilde{G}_\lambda=\bar{\alpha} (x_1-\tilde{c}_1x_2)x_2-(x_0-\tilde{c}_0x_2)x_2-\lambda (x_0-\tilde{c}_0x_2)(x_1-\tilde{c}_1x_2).$$
 Due to Lemma~\ref{lm:generallines}, for all but a finite number of $\lambda$, $\tilde{G}_\lambda$ intersects $F^\calL=0$ in $2n-r-1$ distinct points other than the points $\{\tilde{\bfc}\}\cup \calS_{\tilde{\bfxi}}(D')$. Let $A_\lambda$ be the very simple divisor consisting of the $2n-r-1$ places  whose centers with respect to $\tilde{\bfxi}$ are the intersection points of $\tilde{G}_\lambda=0$ and $F^\calL=0$ other than $\tilde{\bfc}$ respectively. Then $\supp(A_\lambda) \cap (\supp(D')\cup \{\frakP_1,\dots,\frakP_r\})=\emptyset$.
 We claim that for the above $\tilde{G}_\lambda$,
 $$
     \divs_{\tilde{\bfxi}}(\tilde{G}_\lambda)=\frakP_1+\sum_{i=1}^r \frakP_i +A_\lambda
 $$
 Note that
\begin{align*}
  \frac{ \tilde{G}_\lambda(\tilde{\bfxi})}{\tilde{\xi}_2^2}&=\bar{\alpha}\left(\alpha_1 u^d+u^{d+1} \eta_1\right)-(\alpha_0 u^d+u^{d+1}\eta_0)\\
  &-\lambda (\alpha_0 u^d+u^{d+1}\eta_0)(\alpha_1 u^d+u^{d+1} \eta_1)=u^{d+1}\gamma
\end{align*}
where $\nu_{\frakP_1}(\gamma)\geq 0$. This implies that $\ord_{\frakP_1}(\tilde{G}_\lambda(\tilde{\bfxi}))\geq d+1\geq 2$. Hence
$$\divs_{\tilde{\bfxi}}(\tilde{G}_\lambda)\geq \frakP_1+\sum_{i=1}^r \frakP_i +A_\lambda.$$
On the other hand, since $\deg(\divs_{\tilde{\bfxi}}(\tilde{G}_\lambda))=2n$, one has that
$$\divs_{\bfxi'}(\tilde{G}_\lambda)=\frakP_1+\sum_{i=1}^r \frakP_i +A_\lambda.$$
This proves our claim. Now set $G_\lambda=\tilde{G}_\lambda^{\calL^{-1}}$. As $\tilde{\xi}_2=\xi_2$, one sees that
$$
    \min_j\{\nu_{\frakP_i}(\tilde{\xi}_j)\}=\nu_{\frakP_i}(\tilde{\xi}_2)=\nu_{\frakP_i}(\xi_2)=\min_j\{\nu_{\frakP_i}(\xi_j)\}.
$$
This implies that
\begin{align*}
 \ord_{\frakP_i}(G_\lambda(\bfxi))&=\nu_{\frakP_i}\left(G_\lambda(\bfxi)\right)-2\min_i \{\nu_{\frakP_i}(\xi_i)\}\\
 &=\nu_{\frakP_i}\left(\tilde{G}_\lambda^{\calL^{-1}}(\bfxi)\right)-2\min_i \{\nu_{\frakP_i}(\tilde{\xi}_i)\}\\
 &=\nu_{\frakP_i}\left(\tilde{G}_\lambda(\tilde{\bfxi})\right)-2\min_i \{\nu_{\frakP_i}(\tilde{\xi}_i)\}=\ord_{\frakP_i}(\tilde{G}_\lambda(\tilde{\bfxi})).
\end{align*}
Therefore $\divs_\bfxi(G_\lambda)=\frakP_1+\sum_{i=1}^r \frakP_i +A_1$. Note that we can choose $\lambda\in k$. For such $\lambda$, one has that
$$
    T(G_\lambda)\leq T(\tilde{G}_\lambda)\leq 2T(\bfc)+T((\alpha_0,\alpha_1,0))\leq 2T(\bfc)+T(\tilde{F}_r).
$$
Since $\deg(\tilde{F}_r)=r\geq 2$, one sees that $T(\tilde{F}_r)\leq T(F)+(n-2)T(\bfc)$. This implies that $T(G_\lambda)\leq T(F)+nT(\bfc)$
and $$T(A_\lambda)\leq 2(2T(F)+nT(G_\lambda))\leq (2n+4)T(F)+2n^2T(\bfc).$$
Now applying $1.$ to the case that $D=\sum_{i=1}^r \frakP_i$ and $D'=A_\lambda$, one gets a linear homogeneous polynomial $L_1$ such that
$\divs_\bfxi(L_1)=\sum_{i=1}^r \frakP_i+C$, where $C$ is a very simple divisor satisfying that $\supp(C)\cap \{\supp(A_\lambda)\cup \{\frakP_1,\dots,\frakP_r\}\}=\emptyset$. Moreover $T(L_1)\leq T_\bfxi(\frakP_1)=T(\bfc)$. Let $L_2$ be a linear homogeneous polynomial in $k[x_0,x_1,x_2]$ such that $L_2=0$ intersects $F=0$ in $n$ distinct points other than the points in $T_\bfxi(A_\lambda+C)$.   For such $L_2$, one has that $\divs_\bfxi(L_2)$ is a very simple divisor satisfying that  $\supp(\divs_\bfxi(L_2))\cap \supp(A_1+C+D')=\emptyset$. Set $H=L_1L_2$ and $A=A_1-C-\divs_\bfxi(L_2)$. Then $T(H)\leq T(\bfc)$ and we obtain two polynomials $G,H$ as required. Note that $T_\bfxi(C)\leq 2(T(F)+nT_\bfxi(D))$ and $T_\bfxi(\divs_\bfxi(L_2))\leq 2T(F)$. Hence
$
    T(G_\lambda), T(H)\leq T(F)+nT(\bfc) \,\,\mbox{and}\,\,T_\bfxi(A) \leq (2n+4)T(F)+2n^2T(\bfc).
$
\end{proof}

\begin{define}
\label{def:adjoint}
Suppose that $F$ has only ordinary singularities, say $\bfq_1,\dots,\bfq_\ell$, and $r_i$ is the multiplicity of $\bfq_i$. Suppose further that for each $i=1,\dots,\ell$, $\frakQ_{i,1}, \dots, \frakQ_{i,r_i}$ are all places of $\calR$ with $\bfq_i$ as center with respect to $\bfxi$.
Set
$$
E_\bfxi=\sum_{i=1}^\ell (r_i-1)\sum_{j=1}^{r_i}\frakQ_{i,j}.
$$
A homogeneous polynomial $G$ such that $\divs_\bfxi(G)\geq E_\bfxi$ is called an {\em adjoint} of $F$.
\end{define}

We have the following two corollaries of Proposition~\ref{prop:simplification}.
\begin{corollary}
\label{cor:simplification1}
Suppose that $D$ is a simple and effective divisor in $\calR$. Let $D'$ be a divisor in $\calR$. Then there is a homogeneous polynomial $G$ of degree  not greater than $\deg(D)+(n-1)^2/2$ such that
$$
   \divs_\bfxi(G)=D+E_\bfxi+A
$$
where $A$ is a very simple and effective divisor of degree not greater than
$$\deg(D)(n-1)+n(n-1)^2/2$$
such that $\supp(A)\cap (\supp(D+E_\bfxi)\cup \supp(D'))=\emptyset$ and $T_\bfxi(A)\leq 2(T(F)+nT_\bfxi(D+E_\bfxi))$. Moreover
$$T(G)\leq \left(\deg(D)+(n-1)^2/2\right)T_\bfxi(D+E_\bfxi).$$
\end{corollary}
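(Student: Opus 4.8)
The plan is to write $G$ explicitly as a product of linear forms and read everything off from Proposition~\ref{prop:simplification}(1) together with the genus bound for curves with ordinary singularities. Write $D=\frakP_1+\cdots+\frakP_s$ with $s=\deg(D)$ and the $\frakP_k$ pairwise distinct; let $\bfq_1,\dots,\bfq_\ell$ be the (ordinary) singular points of $F$, of multiplicities $r_1,\dots,r_\ell\ge 2$, and $\frakQ_{i,1},\dots,\frakQ_{i,r_i}$ the places centered at $\bfq_i$, so that $E_\bfxi=\sum_{i=1}^\ell (r_i-1)\sum_{j=1}^{r_i}\frakQ_{i,j}$. Since $D$ is simple, the center of each $\frakP_k$ is a simple point of $F$, hence carries a unique place.

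First I would invoke Proposition~\ref{prop:simplification}(1) repeatedly, each time with $D'$ taken to be the union of $\supp(D)$, $\supp(E_\bfxi)$, $\supp(D')$ and all auxiliary places produced in the previous applications: once for each $\frakP_k$ (a one‑place effective divisor whose center has multiplicity $1$), obtaining a line $M_k$ with $\divs_\bfxi(M_k)=\frakP_k+B_k$; and $r_i-1$ times for each divisor $\sum_{j=1}^{r_i}\frakQ_{i,j}$ (whose common center $\bfq_i$ has multiplicity $r_i$), obtaining lines $N_{i,1},\dots,N_{i,r_i-1}$ with $\divs_\bfxi(N_{i,p})=\sum_{j=1}^{r_i}\frakQ_{i,j}+C_{i,p}$. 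By that proposition each $B_k$ is very simple, effective, of degree $n-1$, with $T_\bfxi(B_k)\le 2(T(F)+nT_\bfxi(D))$; each $C_{i,p}$ is very simple, effective, of degree $n-r_i$, with $T_\bfxi(C_{i,p})\le 2(T(F)+nT_\bfxi(E_\bfxi))$; the running choice of $D'$ guarantees that all the $B_k$ and $C_{i,p}$ are pairwise disjoint and disjoint from $\supp(D+E_\bfxi)\cup\supp(D')$; and $T(M_k)\le T_\bfxi(D)$, $T(N_{i,p})\le T_\bfxi(E_\bfxi)$.

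Now set
$$
G=\Bigl(\prod_{k=1}^s M_k\Bigr)\prod_{i=1}^\ell\prod_{p=1}^{r_i-1}N_{i,p},\qquad
A=\sum_{k=1}^s B_k+\sum_{i=1}^\ell\sum_{p=1}^{r_i-1}C_{i,p}.
$$
By additivity of $\divs_\bfxi$ on products (Lemma~\ref{lm:cycle}(1)), $\divs_\bfxi(G)=\sum_k(\frakP_k+B_k)+\sum_{i,p}\bigl(\sum_j\frakQ_{i,j}+C_{i,p}\bigr)=D+E_\bfxi+A$; by construction $A$ is effective, is a disjoint union of very simple divisors, and satisfies $\supp(A)\cap(\supp(D+E_\bfxi)\cup\supp(D'))=\emptyset$, hence is very simple and effective. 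Since every linear factor of $G$ has height at most $T_\bfxi(D+E_\bfxi)$, Proposition~\ref{prop:height2}(1) gives $T(G)\le(\deg G)\,T_\bfxi(D+E_\bfxi)$, and $T_\bfxi(A)\le 2(T(F)+nT_\bfxi(D+E_\bfxi))$ is simply the maximum of the bounds on $T_\bfxi(B_k)$ and $T_\bfxi(C_{i,p})$.

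It remains to bound $\deg(G)=\deg(D)+\sum_{i=1}^\ell(r_i-1)$, which is where the hypothesis of ordinary singularities is used: the genus of $\calR$ equals $\binom{n-1}{2}-\sum_{i=1}^\ell\binom{r_i}{2}\ge 0$, so $\sum_i\binom{r_i}{2}\le\binom{n-1}{2}$, and since $r_i-1\le\binom{r_i}{2}$ for $r_i\ge 2$ we get $\sum_i(r_i-1)\le\binom{n-1}{2}=\tfrac{(n-1)(n-2)}{2}\le\tfrac{(n-1)^2}{2}$. Thus $\deg(G)\le\deg(D)+\tfrac{(n-1)^2}{2}$, and Lemma~\ref{lm:cycle}(2) then gives $\deg(A)=n\deg(G)-\deg(D)-\deg(E_\bfxi)\le(n-1)\deg(D)+\tfrac{n(n-1)^2}{2}$. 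The one point needing care is the simultaneous genericity: the order of the $s+\sum_i(r_i-1)$ applications of Proposition~\ref{prop:simplification}(1) must be fixed in advance and the forbidden set $D'$ enlarged at each step, so that no two of the $B_k$, $C_{i,p}$ collide and none meets $\supp(D+E_\bfxi)\cup\supp(D')$; since each step excludes only finitely many candidate lines, a valid choice always exists. (If the notion of simple divisor permits a place of $D$ centered at a singular point of $F$, that place cannot be reached by a single line without forcing surplus intersections along the other branches, i.e.\ inside $\supp(E_\bfxi)$, and must instead be absorbed using Proposition~\ref{prop:simplification}(2); the above degree count is unaffected, and the height bound would have to be revisited accordingly.)
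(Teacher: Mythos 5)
Your proof is correct and follows essentially the same route as the paper: decompose $D+E_\bfxi$ into the places of $D$ together with $r_i-1$ copies of $\sum_j\frakQ_{i,j}$ for each ordinary singularity, apply Proposition~\ref{prop:simplification}(1) successively with a growing forbidden set $D'$, and take the product of the resulting lines. You additionally supply the genus-formula justification for $\sum_i(r_i-1)\le(n-1)^2/2$ and correctly read $G$ as a product $\prod L_i$ (the paper's $G=\sum L_i$ is a typo); your closing caveat about places of $D$ centered at singular points is moot, since a ``simple'' divisor is by convention one whose centers are simple points of $F=0$, exactly as the paper's decomposition records.
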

\begin{proof}
Denote $\mu=\deg(D)+\sum_{i=1}^\ell (r_i-1)$ where $r_i$ is given as in Definition~\ref{def:adjoint}. Note that $\sum_{i=1}^\ell (r_i-1)\leq (n-1)^2/2$, $\mu\leq \deg(D)+(n-1)^2/2$.
Write
$$D+E_\bfxi=\sum_{i=1}^{\deg(D)}\frakP_i +\sum_{s=\deg(D)+1}^{\mu} D_s$$
where the centers of $\frakP_i$ is a simple point of $F=0$,  $D_s=\sum_{j=1}^{r_i}\frakQ_{i,j}$ for some $1\leq i \leq \ell$. Applying succesively Proposition~\ref{prop:simplification} to $\frakP_i$ and $D_s$, one obtains $\mu$ linear homogeneous polynomials $L_1,\dots, L_\mu$ such that $\divs_\bfxi(L_i)=\frakP_i+A_i$ if $i\leq m$, or $\divs_\bfxi(L_i)=D_i+A_i$ if $i>m$, where $A_i$ is a very simple and effective divisor such that $$\supp(A_i)\cap (\supp(D')\cup \supp(D+E_\bfxi+A_1+\dots+A_{i-1}))=\emptyset.$$
Set $G=\sum_{i=1}^\mu L_i$ and $A=\sum_{i=1}^\mu A_i$. Then one has that
$$\divs_\bfxi(G)=D+E_\bfxi+A.$$
Moreover by Proposition~\ref{prop:simplification}, $T(L_i)\leq T_\bfxi(D+E_\bfxi)$ for all $i=1,\dots,\mu$ and then Proposition~\ref{prop:height2} implies that $T(G)\leq \mu T_\bfxi(D+E_\bfxi)$. It is obvious that $T_\bfxi(A)$ is not greater than $2(T(F)+nT_\bfxi(D+E_\bfxi))$ because so is $T_\bfxi(A_i)$ for all $i=1,\dots,\mu$.
\end{proof}
\begin{corollary}
\label{cor:simplification2}
Suppose that $D, D'$ are two divisors in $\calR$. Then there are two homogeneous polynomials $G,H$ of the same degree  $\leq 2\deg(D^{+}+D^{-})$ such that
$
  \hat{D} =\divs_\bfxi(G/H)+D
$
is very simple and $\supp(\divs_\bfxi(G/H)+D)\cap \supp(D')=\emptyset$. Moreover  $\deg(\hat{D}^{+}), \deg(\hat{D}^{-})\leq  2n(\deg(D^{+}+D^{-}))$ and
\begin{align*}
 T_\bfxi(\divs_\bfxi(G/H)+D)& \leq (2n+4)T(F)+2n^2T_\bfxi(D)\\
T(G), T(H)&\leq \deg(D^{+}+D^{-})(T(F)+nT_\bfxi(D)),
\end{align*}
where $n=\deg(F)$.
\end{corollary}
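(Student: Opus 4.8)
The plan is to reduce the statement to Proposition~\ref{prop:simplification} applied one place at a time. Write $D=D^{+}-D^{-}$ and split each of $D^{+},D^{-}$ according to the position of the relevant centers with respect to $\bfxi$: say $D^{+}=D^{+}_{\rm sm}+D^{+}_{\rm sing}$ and $D^{-}=D^{-}_{\rm sm}+D^{-}_{\rm sing}$, where the ${\rm sm}$-part collects the places of $\supp(D^{\pm})$ whose centers are simple points of $F=0$ and the ${\rm sing}$-part those whose centers are singular points of $F=0$. We build $G$ and $H$ as products of homogeneous forms of degree $1$ or $2$, processing the places of $\supp(D^{+}+D^{-})$ one after another. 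At each step, when we invoke Proposition~\ref{prop:simplification}, we enlarge the given $D'$ so that its support also contains $\supp(D)$ and the supports of all pieces produced so far; the conclusion of that proposition then guarantees that every new ``residual'' divisor it returns is very simple and has support disjoint from everything already constructed.

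Concretely, let $\frakP$ have multiplicity $m$ in $D^{+}$ or in $D^{-}$. If $\frakP\in\supp(D^{-}_{\rm sm})$ we apply part~(1) of Proposition~\ref{prop:simplification} with $r=1$ to $\frakP$ a total of $m$ times and multiply the resulting $m$ linear forms into the numerator $G$; this adds $m\frakP$ plus a very simple effective divisor (of degree $m(n-1)$) to $\divs_\bfxi(G)$. If $\frakP\in\supp(D^{+}_{\rm sm})$ we do the same but multiply the $m$ linear forms into the denominator $H$. If $\frakP\in\supp(D^{-}_{\rm sing})$ we apply part~(2) of Proposition~\ref{prop:simplification} to $\frakP$ a total of $m$ times, multiplying the resulting numerators into $G$ and denominators into $H$; since part~(2) involves only the single branch $\frakP$, the sibling branches at the same ordinary singular point are left untouched, and these steps add $m\frakP$ plus a very simple divisor to $\divs_\bfxi(G/H)$. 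If $\frakP\in\supp(D^{+}_{\rm sing})$ we proceed as in the previous case but interchange numerator and denominator, adding $-m\frakP$ plus a very simple divisor. After all places are processed the terms $\pm m\frakP$ just described cancel $D^{+}-D^{-}$ exactly, so $\divs_\bfxi(G/H)+D$ is a signed sum of pairwise disjoint very simple divisors; this sum $\hat D$ is therefore itself very simple, and $\supp(\hat D)\cap\supp(D')=\emptyset$ by the growing-$D'$ device. Finally, if $\deg G\neq\deg H$, multiply the one of smaller degree by enough general linear forms with coefficients in $k$, each chosen to meet $F=0$ in $n$ distinct simple points lying outside every point already occurring; this equalizes the degrees and only appends further very simple effective divisors to $\hat D$.

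It remains to read off the bounds. A count of the degrees of the factors gives $\deg G=\deg H\le 2\deg(D^{+}+D^{-})$: each place appearing in $D$ with multiplicity $m$ at a singular center contributes $2m$ to each of $\deg G,\deg H$, each place at a simple center contributes its multiplicity to exactly one of them, and the padding adds at most $|\deg D^{+}_{\rm sm}-\deg D^{-}_{\rm sm}|$ more. Using that the $\pm m\frakP$-terms cancel, so that $\hat D$ is exactly the signed sum of the residual divisors, a bookkeeping of their degrees yields $\deg(\hat D^{+}),\deg(\hat D^{-})\le 2n\deg(D^{+}+D^{-})$. For the heights: a linear form used at a simple center $\bfc$ has height at most $T(\bfc)\le T_\bfxi(D)$, a quadratic factor coming from part~(2) of Proposition~\ref{prop:simplification} has height at most $T(F)+nT_\bfxi(D)$, and the padding forms have height $0$; summing these per-factor bounds with Proposition~\ref{prop:height2} and using $T_\bfxi(D)\le T(F)+nT_\bfxi(D)$ gives $T(G),T(H)\le \deg(D^{+}+D^{-})\,(T(F)+nT_\bfxi(D))$. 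Finally $T_\bfxi(\hat D)$ equals the largest $T_\bfxi$ of a residual divisor, which by parts~(1) and~(2) of Proposition~\ref{prop:simplification} and by the second assertion of Proposition~\ref{prop:intersection} (applied to the padding forms) is at most $(2n+4)T(F)+2n^{2}T_\bfxi(D)$. If $G$ and $H$ are also wanted coprime, divide out $\gcd(G,H)$ at the end; by Corollary~\ref{cor:factor} this cannot increase any degree or height and it leaves $\divs_\bfxi(G/H)$ unchanged.

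I expect the work to lie entirely in the bookkeeping rather than in any isolated hard step: one must check carefully that part~(2) of Proposition~\ref{prop:simplification} disturbs only the single branch $\frakP$ and not its siblings at an ordinary singular point, that the ever-growing auxiliary divisor genuinely keeps all residual pieces pairwise disjoint and clear of $\supp(D')$ and $\supp(D)$, and that the degree mismatch repaired by the padding forms together with the per-factor height estimates reproduces precisely the stated constants --- in particular the factor $2$ that appears in the degree bounds for $G,H$ and for $\hat D^{\pm}$ but not in the height bound for $G,H$.
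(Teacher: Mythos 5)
Your proposal is correct and follows essentially the same route as the paper's proof. The paper first treats the case where $-D$ is effective by processing its places one at a time through Proposition~\ref{prop:simplification} (linear forms for simple centers, degree-two pairs for singular centers) with a generic product $L_1\cdots L_s$ in $k[x_0,x_1,x_2]$ playing the role of your padding forms, and then handles the general case by applying the effective case to $D^{+}$ and $D^{-}$ separately and taking $G=\tilde G_2\tilde H_1$, $H=\tilde G_1\tilde H_2$; your version merges these two passes into a single four-way bookkeeping by sign and by center type, but the key ingredients (iterated application of Proposition~\ref{prop:simplification} with an ever-growing auxiliary divisor to keep residuals disjoint, generic padding lines, per-factor height sums via Proposition~\ref{prop:height2}) and the final constants agree.
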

\begin{proof}
We first show the case that $-D$ is effective. Denote $\mu=\deg(-D)$ and write
$$
    -D=\sum_{i=1}^s \frakP_i + \sum_{i=s+1}^\mu \frakQ_i
$$
where the center of $\frakP_i$ (resp. $\frakQ_j$) with respect to $\bfxi$ is a simple (resp. singular) point of $F=0$.  Applying Proposition~\ref{prop:simplification} to $\sum_{i=1}^s \frakP_i$ yields a homogenenous polynomial $G_0$ of degree $s$ such that
$\divs_\bfxi(G_0)=\sum_{i=1}^s \frakP_i +A_0$ where $A_0$ is a very simple and effective divisor such that $\supp(A_0)\cap (\supp(D)\cup \supp(D'))=\emptyset$.  Moreover $\deg(A_0)=ns-\deg(\sum_{i=1}^s\frakP_i)$. Construct $s$ linear homogeneous polynomials $L_1,\dots,L_s$ in $k[x_0,x_1,x_2]$ such that $\divs_\bfxi(L_1\cdots L_s)$ is very simple and $\supp(\divs_\bfxi(L_1\cdots L_s))\cap (\supp(\divs_\bfxi(G_0)\cup \supp(D'))=\emptyset$.  It is easy to see that $\deg(\divs_\bfxi(L_1\cdots L_s))=ns$. Set $H_0=L_1\cdots L_s$. By Proposition~\ref{prop:simplification} again, one obtains  $\mu-s$ pairs $(G_1, H_1), \dots, (G_{\mu-s}, H_{\mu-s})$ of homogeneous polynomials of degree two such that $\divs_\bfxi(G_i/H_i)=\frakQ_i+A_i$ where $A_i$ is a very simple divisor such that
$$\supp(A_i)\cap (\supp(D')\cup \supp(D+A_0+\cdots+A_{i-1}))=\emptyset$$
and $\deg(A_i^{+})=2n-\deg(\frakQ_i)$, $\deg(A_i^{-})=2n$.
Set $\tilde{G}=G_0G_1\cdots G_{\mu-s}$ and $\tilde{H}=H_0H_1\cdots H_{\mu-s}$. Then
$$\hat{D}=\divs_\bfxi(\tilde{G}/\tilde{H})+D=A_0+A_1+\dots+A_{\mu-s}$$
which is very simple. It is clear that $\deg(\tilde{G})=\deg(\tilde{H})\leq 2\deg(-D)$, and by Proposition~\ref{prop:simplification}
$$T(\tilde{G})\leq T(G_0)+(\mu-s)T(G_i)\leq \mu (T(F)+nT_\bfxi(D)).$$
Similarly, $T(\tilde{H})\leq \mu (T(F)+nT_\bfxi(D))$. Furthermore, one has that
$$T_\bfxi\left(\divs_\bfxi(\tilde{G}/\tilde{H})+D\right)\leq (2n+4)T(F)+2n^2T_\bfxi(D)$$
and
$$
   \deg(\hat{D}^{+})=(2\mu-s)n-\deg(-D)\leq 2n\mu, \, \deg(\hat{D}^{-})=(2\mu-s)n\leq 2n\mu.
$$
For the general case, write $D=D^{+}-D^{-}$. The previous discussion implies that we can obtain $\tilde{G}_i, \tilde{H}_i$ such that
$
   \divs_\bfxi(\tilde{G}_1/\tilde{H_1})-D^{+}
$ and $\divs_\bfxi(\tilde{G}_2/\tilde{H_2})-D^{-}$ are very simple. Moreover
\begin{align*}
    \supp\left(\divs_\bfxi (\tilde{G}_1/\tilde{H_1})-D^{+}\right)\bigcap \supp\left(\divs_\bfxi(\tilde{G}_2/\tilde{H_2})-D^{-}\right)&=
\emptyset,\\
   \supp\left(\divs_\bfxi(\tilde{G}_2/\tilde{H_2})-D^{-}+\divs_\bfxi(\tilde{G}_1/\tilde{H_1})-D^{+}\right)\bigcap \left(\supp(D)\cup \supp(D')\right)&=\emptyset.
\end{align*}
Set $G=\tilde{G}_2\tilde{H}_1$ and $H=\tilde{G}_1\tilde{H}_2$. Then $\divs_\bfxi(G/H)+D$ satisfies the required condition. Furthermore
$\deg(\hat{D}^{+}), \deg(\hat{D}^{-})\leq 2n(\deg(D^{+}+D^{-}))$ and
\begin{align*}
  T(G), T(H)&\leq \deg(D^{+}+D^{-})(T(F)+nT_\bfxi(D)),\\
   T_\bfxi(\divs_\bfxi(G/H)+D)&\leq (2n+4)T(F)+2n^2T_\bfxi(D).
\end{align*}
\end{proof}
Now we are ready to pove the main results of this section. Let us start with two lemmas.
\begin{lemma}
\label{lm:riemannrochspace}
Suppose that $F=0$ has only ordinary singularities and $D$ is a divisor in $\calR$. Let $H$ be a homogeneous polynomial in $\overkt[x_0,x_1,x_2]$ such that $\divs_\bfxi(H)=D^{+}+E_\bfxi+A$, where $A$ is an effective divisor. Then
\begin{equation}
\label{eq:riemannrochspace}
    \frakL(D)=\left\{\left.\frac{G(\bfxi)}{H(\bfxi)} \,\right |  \,\begin{array}{c} \mbox{$G$ are homogeneous polynomials of $\deg(H)$}\\
                                                \mbox{with $\divs_\bfxi(G)\geq D^{-}+E_\bfxi+A$}
                                                \end{array}\right\}.
\end{equation}
\end{lemma}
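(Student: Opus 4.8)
The plan is to prove the two inclusions ``$\supseteq$'' and ``$\subseteq$'' separately. The first is a short divisor computation; the second reduces, via Max Noether's fundamental theorem, to a local statement about the curve $F=0$ at the points where an auxiliary form meets it, and this is exactly where the hypothesis that $F=0$ has only ordinary singularities is used. For ``$\supseteq$'', take a homogeneous $G$ of degree $\deg(H)$ with $\divs_\bfxi(G)\ge D^{-}+E_\bfxi+A$ and put $h=G(\bfxi)/H(\bfxi)\in\calR$ (well defined since $H(\bfxi)\neq0$, and $h=0\in\frakL(D)$ in the degenerate case $G(\bfxi)=0$). By Lemma~\ref{lm:cycle}, $\divs(h)=\divs_\bfxi(G)-\divs_\bfxi(H)=\divs_\bfxi(G)-(D^{+}+E_\bfxi+A)$, hence $\divs(h)+D=\divs_\bfxi(G)-D^{-}-E_\bfxi-A\ge 0$, i.e.\ $h\in\frakL(D)$.

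For ``$\subseteq$'', let $h\in\frakL(D)$; if $h=0$ take $G=0$, so assume $h\neq0$ and write $h=P(\bfxi)/Q(\bfxi)$ with $P,Q$ homogeneous of a common degree $e$ and $Q(\bfxi)\neq0$. After replacing $(P,Q)$ by $(x_2^kP,x_2^kQ)$ for $k$ large we may also assume $e+d\ge n$, where $d=\deg(H)$ and $n=\deg(F)$. Set $g=hH(\bfxi)\in\calR$. The key reduction is that it suffices to produce a homogeneous polynomial $G$ of degree $d$ with $G(\bfxi)=g$: once this is done, Lemma~\ref{lm:cycle} yields $\divs_\bfxi(G)=\divs_\bfxi(G/H)+\divs_\bfxi(H)=\divs(h)+D^{+}+E_\bfxi+A\ge D^{-}+E_\bfxi+A$ (using $\divs(h)\ge -D$), so that $h=G(\bfxi)/H(\bfxi)$ lies in the right-hand set. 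To build $G$ I invoke Noether's $AF+BG$ theorem (see \cite{fulton}) with the forms $F$ and $Q$ (which have no common component, since $Q(\bfxi)\neq0$ forces $F\nmid Q$) and the form $PH$: it produces a homogeneous $G$ of degree $d$ and a homogeneous $C$ of degree $d+e-n\ge 0$ with $PH=GQ+CF$, \emph{provided} that at each point $\bfc\in\bV(F)\cap\bV(Q)$ the germ of $PH$ lies in the ideal generated by $F$ and $Q$ in the local ring of $\bP^2(\overkt)$ at $\bfc$. Evaluating $PH=GQ+CF$ at $\bfxi$ and using $F(\bfxi)=0$ then gives $G(\bfxi)Q(\bfxi)=g\,Q(\bfxi)$, hence $G(\bfxi)=g$.

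The remaining point, and the one I expect to be the main obstacle, is the verification of that local condition at a fixed $\bfc\in\bV(F)\cap\bV(Q)$. Fix an index $i_0$ with the $i_0$-th coordinate of $\bfc$ nonzero; working in the corresponding affine chart, the condition is equivalent to $g/\xi_{i_0}^{\,d}$ lying in the local ring of the curve $F=0$ at $\bfc$. Let $\frakP$ range over the places of $\calR$ whose center with respect to $\bfxi$ is $\bfc$; for these one has $\nu_\frakP(\xi_{i_0})=\min_i\{\nu_\frakP(\xi_i)\}$, so Definition~\ref{def:intersection} together with $\divs_\bfxi(H)=D^{+}+E_\bfxi+A$ gives, writing $m_\frakP,m^{+}_\frakP,e_\frakP,a_\frakP$ for the coefficients of $\frakP$ in $D,D^{+},E_\bfxi,A$ and using $\nu_\frakP(h)\ge -m_\frakP$ (valid since $h\in\frakL(D)$),
\[
\nu_\frakP\!\left(g/\xi_{i_0}^{\,d}\right)=\nu_\frakP(h)+m^{+}_\frakP+e_\frakP+a_\frakP\ \ge\ -m_\frakP+m^{+}_\frakP+e_\frakP+a_\frakP\ \ge\ e_\frakP .
\]
If $\bfc$ is a simple point of $F$ there is a unique such $\frakP$ and $e_\frakP=0$, so $g/\xi_{i_0}^{\,d}$ is regular at $\bfc$. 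If $\bfc$ is an ordinary $r$-fold point, there are $r$ such places and $e_\frakP=r-1$ for each, so $g/\xi_{i_0}^{\,d}$ is regular on the normalization near $\bfc$ and vanishes to order at least $r-1$ along every branch at $\bfc$; the classical description of the conductor of an \emph{ordinary} $r$-fold point --- precisely what the adjoint divisor $E_\bfxi$ of Definition~\ref{def:adjoint} records, cf.\ \cite{fulton} --- then shows that such a function lies in the local ring of the plane curve at $\bfc$. This is the only place the ordinary-singularity hypothesis is needed: at a non-ordinary point the conductor strictly dominates $E_\bfxi$, and the local membership can fail. Once this has been checked at every $\bfc$, Noether's theorem supplies $G$ and the proof is complete; the preliminary normalization $e+d\ge n$ is used only to keep $\deg(C)=e+d-n$ nonnegative, so that $G$ has degree exactly $d$.
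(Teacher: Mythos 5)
Your proof is correct, and it follows a genuinely different (if ultimately equivalent) route from the paper's. The ``$\supseteq$'' direction is identical. For ``$\subseteq$'', the paper cites the Residue Theorem of Brill--Noether (see the chapter of \cite{fulton} on resolving singularities and Riemann--Roch): since $\divs_\bfxi(H)=D^{+}+E_\bfxi+A$ and $D^{+}+A\equiv D^{-}+D'+A$ with $D'=\divs(h)+D$ effective, the Residue Theorem immediately produces an adjoint $G$ of degree $\deg(H)$ with $\divs_\bfxi(G)=D^{-}+D'+E_\bfxi+A$, and then $hH(\bfxi)/G(\bfxi)$ has trivial divisor, hence is a constant. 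What you do instead is unroll the proof of that Residue Theorem: you reduce to constructing a form $G$ with $G(\bfxi)=hH(\bfxi)$, invoke Noether's $AF+BG$ theorem with the pair $(F,Q)$ and the form $PH$, and check Noether's conditions at every $\bfc\in\bV(F)\cap\bV(Q)$. Your verification is exactly right, and your estimate
$\nu_\frakP(g/\xi_{i_0}^{\,d})=\nu_\frakP(h)+\ord_\frakP(H)\ge m^{-}_\frakP+e_\frakP+a_\frakP\ge e_\frakP$
(using $-m_\frakP+m^{+}_\frakP=m^{-}_\frakP\ge 0$) is precisely the criterion in Fulton for Noether's conditions to hold at a simple point ($e_\frakP=0$) or at an ordinary $r$-fold point ($e_\frakP=r-1$). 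You could cite that proposition directly instead of detouring through the conductor of an ordinary singularity, but your conductor argument is also valid and correctly isolates where the ordinary-singularity hypothesis enters. The trade-off between the two presentations: the paper's is shorter and delegates the analytic content to a named theorem; yours is more self-contained, makes explicit how the adjoint divisor $E_\bfxi$ feeds into Noether's criterion, and is arguably the better choice if one wants the proof to carry degree and height bookkeeping, since every intermediate form ($P$, $Q$, $PH$, $G$, $C$) is concretely constructed.
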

\begin{proof}
Note that $\divs(G(\bfxi)/H(\bfxi))=\divs_\bfxi(G)-\divs_\bfxi(H)$.
It is obvious that the right hand side of (\ref{eq:riemannrochspace}) is a subspace of $\frakL(D)$.
Suppose that $h\in \frakL(D)\setminus\{0\}$, i.e. $D'=\divs(h)+D$ is effective. Then
$$\divs_\bfxi(H)+\divs(h)=D^{-}+D'+E_\bfxi+A.$$
By the Residuce Theorem (see page of \cite{fulton}), there is a homogeneous polynomial $G$ of degree $\deg(H)$ such that
$$
     \divs_\bfxi(G)=D^{-}+D'+E_\bfxi+A\geq D^{-}+E+A.
$$
One sees that
$$
   \divs(hH(\bfxi)/G(\bfxi))=\divs(h)+\divs_\bfxi(H/G)=\divs(h)+\divs_\bfxi(H)-\divs_\bfxi(G)=0.
$$
Thus $hH(\bfxi)/G(\bfxi)\in \overkt$, i.e. $h$ belongs to the right hand side of (\ref{eq:riemannrochspace}).
\end{proof}

\begin{lemma}
\label{lm:linearsystem}
Assume that $M=(a_{i,j})$ is an $l \times m $ matrix with $a_{i,j}\in \overkt$.  Assume further that for each place $\frakp$ of $k(t, a_{1,1},\dots,a_{l,m})$,
$
   -\nu_\frakp(a_{i,j})\leq m_\frakp
$
where $m_\frakp\geq 0$ and for all but finite number of $\frakp$, $m_\frakp\neq 0$.
Then there is a basis $B$ of the solution space of $MY=0$ satisfying that
$$T(\bfb)\leq \frac{ \min \{l,m\} \sum_{\frakp} m_\frakp}{[k(t,a_{1,1},\dots,a_{l,m}):k(t)]}$$
for all $\bfb\in B$.
\end{lemma}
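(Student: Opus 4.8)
The plan is to exhibit the solution space of $MY=0$ over $\overkt$ as the solution space of a linear system defined over $k(t)$ after clearing denominators, and then to apply Cramer's rule to extract a basis whose entries have controlled heights. First I would set $R=k(t,a_{1,1},\dots,a_{l,m})$. By reordering rows and columns we may assume the first $\rho\times\rho$ block of $M$ is invertible, where $\rho=\rank(M)\le\min\{l,m\}$; the solution space then has dimension $m-\rho$, and for each free index $k\in\{\rho+1,\dots,m\}$ there is a unique solution $\bfb^{(k)}=(b^{(k)}_1,\dots,b^{(k)}_m)$ with $b^{(k)}_k=1$, $b^{(k)}_j=0$ for the other free indices $j$, and the remaining coordinates $b^{(k)}_1,\dots,b^{(k)}_\rho$ determined by solving the square system $M_0 (b^{(k)}_1,\dots,b^{(k)}_\rho)^{\mathrm t}=-(\text{$k$-th column of the top block})$. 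These $m-\rho$ vectors form a basis $B$.

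Next I would estimate the heights of the $\bfb^{(k)}$ using Cramer's rule: each $b^{(k)}_i$ ($1\le i\le\rho$) equals a ratio of two $\rho\times\rho$ determinants whose entries are among the $a_{i,j}$. Since $b^{(k)}_k=1$, up to scaling by the common denominator determinant $\det(M_0)$, the vector $\bfb^{(k)}$ has all coordinates equal to $\pm$ determinants of $\rho\times\rho$ submatrices of $M$ (or to $0$ or $\det(M_0)$ itself). Clearing this common factor, $\bfb^{(k)}$ is represented by a tuple all of whose entries are $\rho\times\rho$ minors of $M$. For any place $\frakp$ of $R$, expanding such a minor as a signed sum of products of $\rho$ entries gives
$$
  -\nu_\frakp(\det) \le \max_{\sigma}\Bigl\{-\sum_{i=1}^\rho \nu_\frakp(a_{i,\sigma(i)})\Bigr\}\le \sum_{i=1}^{\rho}\max_{i,j}\{-\nu_\frakp(a_{i,j})\}\le \rho\, m_\frakp,
$$
so $\max$ over all coordinates of $\bfb^{(k)}$ of $\{-\nu_\frakp(\cdot)\}$ is at most $\rho\, m_\frakp$. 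Summing over all places of $R$ and dividing by $[R:k(t)]$, and using $\rho\le\min\{l,m\}$, yields $T(\bfb^{(k)})\le \min\{l,m\}\sum_\frakp m_\frakp / [R:k(t)]$ by Remark~\ref{rem:heights}, since the number of nonzero $m_\frakp$ being finite guarantees the sum is finite; note the representing tuple is not necessarily coprime, but the height computed from any homogeneous representative only decreases under taking the gcd out, so the bound still holds.

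The main obstacle is the bookkeeping in identifying the Cramer representative as a tuple of minors and confirming it is a valid (if not coprime) homogeneous coordinate vector for $\bfb^{(k)}$ — in particular checking that $\det(M_0)\ne 0$ so the scaling is legitimate, and that zero coordinates cause no trouble in the max estimate. A secondary subtlety is that $B$ is required to be an honest basis of the solution space and not merely a spanning set; this is automatic from the triangular pattern of the free coordinates, so once the rank normalization is set up the linear-algebra part is routine. Everything else is a direct application of the order-function inequalities $\nu_\frakp(ab)=\nu_\frakp(a)+\nu_\frakp(b)$ and $\nu_\frakp(a+b)\ge\min\{\nu_\frakp(a),\nu_\frakp(b)\}$ together with the definition of $T$ and Remark~\ref{rem:heights}.
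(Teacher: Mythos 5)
Your proof is correct and is essentially the same as the paper's: both pass to a maximal invertible $\rho\times\rho$ submatrix, apply Cramer's rule to produce a basis whose homogeneous representatives consist of $\rho\times\rho$ minors (together with $\det$ of the pivot block and zeros), and bound $-\nu_\frakp$ of each minor by $\rho\,m_\frakp\le\min\{l,m\}\,m_\frakp$ via the Leibniz expansion. The only cosmetic difference is that the paper fixes the pivot block as the first $r$ rows/columns after relabeling while you phrase it in terms of free-index unit vectors, but the resulting basis vectors and the resulting height estimate are identical.
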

\begin{proof}
Assume that $r=\rank(M)$. Then $r\leq \min\{l,m\}$. Without loss of generality, we may assume the first $r$-rows of $M$ are linearly independent and denote by $\tilde{M}$ the matrix formed by them.
Then the solution space of $\tilde{M}Y=0$ is the same as that of $MY=0$.  Hence it suffices to consider the system $\tilde{M}Y=0$. We may further assume that the matrix  $\tilde{M}_1$ formed by the first $r$-columns of $\tilde{M}$ is invertible. For every $i=1,\dots,r$ and $j=r+1,\dots,m$, set $d_{i,j}$ to be the determinant of the matrix obtained from $\tilde{M}_1$ by replacing the $i$-th column of $\tilde{M}_1$ by the $j$-th column of $\tilde{M}$. For each $j=r+1,\cdots, m$, denote
$$
   \bfc_j=(d_{1,j},\dots,d_{r,j}, 0,\dots, 0,\underbrace{\det(M_1)}_{j},0,\dots,0)^t
$$
where $(\cdot)^t$ denotes the transpose of a vector.
 Then by Cramer's rule, the $\bfc_j$ are solutions of $\tilde{M}Y=0$ and thus they form a basis of the solution space of $\tilde{M}Y=0$. Note that $d_{i,j}$ as well as $\det(\tilde{M}_1)$ is an integer combination of the monomials in the entries of $\tilde{M}$ of total degree $r$. So for all $i=1,\dots,r$ and $j=r+1,\dots,m$,
 $$
    -\nu_\frakp(\tilde{M}_1), -\nu_\frakp(d_{i,j})\leq  r m_\frakp \leq \min\{l,m\} m_\frakp $$
 where $\frakp$ is a place of $k(t,a_{1,1},\dots,a_{l,m})$. This together Remark~\ref{rem:heights} implies the lemma.
\end{proof}

\begin{theorem}
\label{thm:riemann-roch1}
Suppose that $F=0$ has only ordinary singularities.
Let $D$ be a divisor in $\calR$. Denote $\mu=\deg(D^{+}+D^{-})$ and
$N=\max\{T_\bfxi(D), T(F)\}.$ Then there is a $\overkt$-basis $B$ of $\frakL(D)$ such that every element of $B$ can be represented by $G(\bfxi)/H(\bfxi)$ where $G,H$ are two homogeneous polynomials of the same degree not greater than $\leq 2(n+1)\mu+(n-1)^2/2$ and
$$T(G), T(H)\leq  4n^5(n+1)^3(2\mu+(n-1)/2)^3 N.$$
\end{theorem}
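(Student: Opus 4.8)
\medskip\par\noindent\textbf{Proof plan.}\quad
The plan is to strip $\frakL(D)$ down, in two normalization steps, to the Riemann--Roch space of a very simple divisor; to describe that space by Lemma~\ref{lm:riemannrochspace} as the numerators $G$ over one fixed denominator $H$ subject to an adjoint--incidence condition; to turn that condition into a homogeneous linear system over $\overkt$ whose matrix entries are monomials in the coordinates of finitely many points of $F=0$; and finally to solve it with Lemma~\ref{lm:linearsystem}, propagating the degree and height bounds through each step. If $\mu=0$ then $D^{+}=D^{-}=0$, so $\frakL(D)=\overkt$ and $B=\{1\}$ (represented by $(1,1)$) works; so assume $\mu\geq 1$ and set $\kappa:=2\mu+(n-1)/2$. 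First I would apply Corollary~\ref{cor:simplification2} to $D$ to obtain homogeneous $G_1,H_1$ of a common degree $\leq 2\mu$ with $\hat D:=\divs_\bfxi(G_1/H_1)+D$ very simple, $\deg(\hat D^{+}),\deg(\hat D^{-})\leq 2n\mu$, $T(G_1),T(H_1)\leq (n+1)\mu N$, and $T_\bfxi(\hat D)\leq(2n^{2}+2n+4)N$. By Lemma~\ref{lm:cycle} we have $\hat D=\divs(G_1(\bfxi)/H_1(\bfxi))+D$, whence $\frakL(D)=\frac{G_1(\bfxi)}{H_1(\bfxi)}\,\frakL(\hat D)$, so it is enough to produce a suitable basis of $\frakL(\hat D)$ consisting of elements $G(\bfxi)/H(\bfxi)$ and then multiply numerator and denominator by $G_1$ and $H_1$.

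Next I would apply Corollary~\ref{cor:simplification1} to the simple effective divisor $\hat D^{+}$, taking $D'=\hat D^{-}$, to get the denominator $H$, homogeneous of degree $\leq\deg(\hat D^{+})+(n-1)^{2}/2\leq 2n\mu+(n-1)^{2}/2\leq n\kappa$, with $\divs_\bfxi(H)=\hat D^{+}+E_\bfxi+A$, where $A$ is very simple and effective, $\deg A\leq\deg(\hat D^{+})(n-1)+n(n-1)^{2}/2$, and $\supp(A)$ is disjoint from $\supp(\hat D^{+}+E_\bfxi)\cup\supp(\hat D^{-})$; bounding $T_\bfxi(E_\bfxi)\leq 2(2n-1)T(F)$ by Corollary~\ref{cor:singularity}, one also gets $T(H)\leq(2n\mu+(n-1)^{2}/2)(2n^{2}+2n+4)N$ and $T_\bfxi(A)\leq(4n^{3}+4n^{2}+8n+2)N$. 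By Lemma~\ref{lm:riemannrochspace}, $\frakL(\hat D)$ is then exactly the set of $G(\bfxi)/H(\bfxi)$ with $G$ homogeneous of degree $\deg H$ and $\divs_\bfxi(G)\geq\mathcal{D}:=\hat D^{-}+E_\bfxi+A$.

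The core step is to turn $\divs_\bfxi(G)\geq\mathcal{D}$ into a linear system. Because $\hat D^{-}$ and $A$ are very simple and $E_\bfxi$ is supported over the ordinary singular points of $F=0$, this condition says that $G$ passes through each (simple) center of $\hat D^{-}+A$ and has multiplicity $\geq r_i-1$ at each ordinary $r_i$-fold point $\bfq_i$ of $F=0$; after dehomogenizing, these amount to $l\leq\deg(\hat D^{-})+\deg A+\sum_i\binom{r_i}{2}\leq n^{2}\kappa$ linear equations in the $m=\binom{\deg H+2}{2}$ coefficients of $G$, each equation having coefficients that are integer multiples of monomials of degree $\leq\deg H\leq n\kappa$ in the coordinates of one of the $t\leq n^{2}\kappa$ distinct centers appearing in $\supp(\mathcal{D})$. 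Over a finite extension $R$ of $k(t)$ containing those coordinates, each matrix entry $a$ satisfies $-\nu_\frakp(a)\leq\deg(H)\max_{s,j}\{-\nu_\frakp(c^{(s)}_j)\}=:m_\frakp$, and summing over places one verifies $[R:k(t)]^{-1}\sum_\frakp m_\frakp\leq\deg(H)\,t\,T_\bfxi(\mathcal{D})$. Lemma~\ref{lm:linearsystem} then gives a basis of the solution space with $T(\bfb)\leq\min\{l,m\}\,\deg(H)\,t\,T_\bfxi(\mathcal{D})\leq n^{2}\kappa\cdot n\kappa\cdot n^{2}\kappa\cdot(4n^{3}+4n^{2}+8n+2)N$ for every $\bfb$. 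The images $G(\bfxi)/H(\bfxi)$ of such a basis span $\frakL(\hat D)$ — the map $G\mapsto G(\bfxi)/H(\bfxi)$ kills exactly the multiples of $F$ lying in the solution space — so I can extract a basis of $\frakL(\hat D)$ from them without increasing heights, and multiplying by $G_1(\bfxi)/H_1(\bfxi)$ produces a basis $B$ of $\frakL(D)$ whose elements are represented by $(G_1G,H_1H)$.

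It then remains to collect the estimates. Since $\deg G=\deg H$ and $\deg G_1=\deg H_1$, the common degree of $G_1G$ and $H_1H$ is $\deg(H_1)+\deg(H)\leq 2\mu+2n\mu+(n-1)^{2}/2=2(n+1)\mu+(n-1)^{2}/2$; and $T(G_1G)\leq T(G_1)+T(G)$, $T(H_1H)\leq T(H_1)+T(H)$ by Proposition~\ref{prop:height2}. As $n^{5}\kappa^{3}(4n^{3}+4n^{2}+8n+2)=(4n^{8}+4n^{7}+8n^{6}+2n^{5})\kappa^{3}$ while $4n^{5}(n+1)^{3}\kappa^{3}=(4n^{8}+12n^{7}+12n^{6}+4n^{5})\kappa^{3}$, the leftover contributions $T(G_1)\leq(n+1)\mu N$ and $T(H)\leq(2n\mu+(n-1)^{2}/2)(2n^{2}+2n+4)N$ are comfortably absorbed (using $\kappa\geq 2\mu\geq 2$), which yields $T(G_1G),T(H_1H)\leq 4n^{5}(n+1)^{3}\kappa^{3}N=4n^{5}(n+1)^{3}(2\mu+(n-1)/2)^{3}N$, as claimed. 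I expect the main obstacle to be this last paragraph together with the linear-system step: checking carefully that the adjoint conditions at the ordinary singularities and the incidence conditions really assemble into a system linear in the coefficients of $G$ with monomial coefficients of degree $\leq\deg H$, and keeping the numerical constants small enough — notably estimating $\min\{l,m\}$ by $l$ rather than by $m$ — to fit inside $4n^{5}(n+1)^{3}(2\mu+(n-1)/2)^{3}N$.
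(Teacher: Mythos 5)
Your proposal is correct and follows essentially the same route as the paper's proof: reduce via Corollary~\ref{cor:simplification2} to the very simple divisor $\hat D$, fix the denominator by Corollary~\ref{cor:simplification1} applied to $\hat D^{+}$, describe $\frakL(\hat D)$ via Lemma~\ref{lm:riemannrochspace} as a linear system in the numerator coefficients, solve it with Lemma~\ref{lm:linearsystem}, and multiply back by $G_1/H_1$. The small differences — taking $m_\frakp$ as a max rather than a sum over the support, spelling out the identity $\frakL(D)=\frac{G_1(\bfxi)}{H_1(\bfxi)}\frakL(\hat D)$, and remarking that the evaluation $G\mapsto G(\bfxi)/H(\bfxi)$ kills only multiples of $F$ so that a basis can be extracted — are expository refinements, not a different argument, and your numerical bookkeeping lands on the same $n^5\kappa^3$ factor as the paper's $n^2d^3$ bound.
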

\begin{proof}
By Corollary~\ref{cor:simplification2}, there are two homogeneous polynomials $G_1,H_1$ of the same degree $\leq 2\mu$ such that $\hat{D}=\divs_\bfxi(G_1/H_1)+D$ is very simple. Moreover
\begin{align*}
     T(G_1), T(H_1) & \leq \mu (n+1)N,\\
     T_\bfxi(\hat{D}) \leq (2n+4)T(F)+2n^2T_\bfxi(D) &\leq (2n^2+2n+4)N,\\
      \deg(\hat{D}^{+}), \deg(\hat{D}^{-})&\leq 2n\mu.
\end{align*}
Due to Corollary~\ref{cor:simplification1}, there is a homogeneous polynomial $G_2$ of degree not greater than
$2n\mu+(n-1)^2/2$
such that $\divs_\bfxi(G_2)=\hat{D}^{+}+E_\bfxi+A$, where $A$ is a very simple and effective divisor and $\supp(A)\cap \supp(\hat{D}^{-})=\emptyset$.
Moreover
$$T(G_2)\leq (2n\mu+(n-1)^2/2) T_\bfxi(\hat{D}^{+}+E_\bfxi)\leq (2n\mu+(n-1)^2/2)(2n^2+2n+4)N$$
and
$$
    T_\bfxi(A)\leq 2(T(F)+nT_\bfxi(\hat{D}^{+}+E_\bfxi))\leq 2(2n^3+2n^2+4n+1)N.
$$
Denote $d=\deg(G_2)$. By Lemma~\ref{lm:riemannrochspace}, to compute $\frakL(D)$, it suffices to compute all homogeneous polynomials $H_2$ of degree $d$ satisfying that 
$$
      \divs_\frakP(H_2)\geq \hat{D}^{-}+E_\bfxi+A.
$$
Assume that
$$
      H_2=\sum_{i=0}^d \sum_{j=0}^{d-i} c_{i,j} x_0^i x_1^j x_2^{d-i-j}
$$
where $c_{i,j}$ are indeterminates. There are $(d+1)(d+2)/2$ indeterminates in total. For each $\frakP\in \supp(\hat{D}^{-}+A)$, $\divs_\bfxi(H_2)\geq \frakP$ if and only if the center of $\frakP$ with respect to $\bfxi$ is a zero of $H_2$. This imposes $\deg(\hat{D}^{-}+A)$ linear constraints on $H_2$. At the same time, $\divs_\bfxi(H_2)\geq (r_i-1)\sum_{j=1}^{r_i}\frakQ_{i,j}$ if and only if the center of $\frakQ_{i,1}$ with respect to $\bfxi$ is a common zero of
$$\frac{\partial^{j_0+j_1+j_2}(H_2)}{\partial x_0^{j_0}x_1^{j_1}x_2^{j_2}}$$
for all nonnegative integers $j_0,j_1,j_2$ satisfying that $j_0+j_1+j_2=r_i-2$, where $\frakQ_{i,j}$ is as in Definition~\ref{def:adjoint}. This imposes $r_i(r_i-1)/2$ linear constraints on $H_2$. So
there are totally $\deg(\hat{D}^{-}+A)+\deg(E_\bfxi)/2$ linear constraints on $H_2$. The problem of finding $H_2$ is reduced to that of solving the system
$M Y=0$, where $Y$ is a vector with indeterminates entries and $M$ is a   $(\deg(\hat{D}^{-}+A)+\deg(E_\bfxi)/2)\times (d+1)(d+2)/2$ matrix. Denote by $\bfc_\frakP=(c_{0,\frakP}, c_{1,\frakP}, c_{2,\frakP})$ the center of $\frakP$ in $\supp(\hat{D}^{-}+E_\bfxi+A)$.
Then the entries in the same row of $M$ are monomials of total degree $\leq d$ in $c_{0,\frakP}, c_{1,\frakP}, c_{2,\frakP}$  for some $\frakP$ in $\supp(\hat{D}^{-}+E_\bfxi+A)$. Without loss of generality, we may assume that one of $c_{0,\frakP}, c_{1,\frakP}, c_{2,\frakP}$ is 1.  Let $R$ be a finite extension of $k(t)$ containing all $c_{i,\frakP}$. For each place $\frakp$ of $R$, set
$$
    m_\frakp=d \sum_{\frakP\in \supp(\hat{D}^{-}+E_\bfxi+A)}\max\{-\nu_\frakp(c_{0,\frakP}), -\nu_\frakp(c_{1,\frakP}), -\nu_\frakp(c_{2,\frakP})\}.
$$
Since $\max\{-\nu_\frakp(c_{0,\frakP}), -\nu_\frakp (c_{1,\frakP}), -\nu_\frakp(c_{2,\frakP})\}\geq 0$  for all $\frakP$, $m_\frakp\geq 0$ and
\begin{align*}
     -\nu_\frakp(a_{i,j})\leq d \max_{\frakP\in\supp(\hat{D}^{-}+E_\bfxi+A)}\max\{-\nu_\frakp(c_{0,\frakP}), -\nu_\frakp(c_{1,\frakP}), -\nu_\frakp(c_{2,\frakP})\} \leq m_\frakp
\end{align*}
where $M=(a_{i,j})$. Note that
$$
   \deg(\hat{D}^{-}+E_\bfxi+A)\leq \deg(\divs_\bfxi(H_2))=nd.
$$
Applying Lemma~\ref{lm:linearsystem} to $M$ yields that
\begin{align*}
T(H_2)& \leq \deg(\hat{D}^{-}+E_\bfxi+A)\sum_{\frakp}\frac{m_\frakp}{[R:k(t)]}\\
& \leq nd^2\sum_{\frakP} \sum_{\frakp}\frac{\max\{-\nu_\frakp(c_{0,\frakP}), -\nu_\frakp(c_{1,\frakP}), -\nu_\frakp(c_{2,\frakP})\}}{[R:k(t)]}\\
&\leq nd^2\sum_{\frakP} T(\bfc_\frakP) \leq nd^2\deg(\hat{D}^{-}+E_\bfxi+A) \max_{\frakP} T(\bfc_\frakP)\\
&\leq n^2d^3 T_\bfxi(\hat{D}^{-}+E_\bfxi+A)\leq 2n^2d^3(2n^3+2n^2+4n+1)N \\
&\leq 2n^5(2\mu+(n-1)/2)^3(2n^3+2n^2+4n+1)N.
\end{align*}
The last inequality holds because
$$d\leq 2n\mu+(n-1)^2/2\leq n(2\mu+(n-1)/2).$$
Set $G=H_2G_1$ and $H=G_2H_1$. Then
\begin{align*}
\deg(G)&=\deg(H)\leq 2(n+1)\mu+(n-1)^2/2,\\
     T(G), T(H)&\leq T(H_2)+T(G_1)<T(H_2) + \mu (n+1)N\\
       &\leq 2n^5(2\mu+(n-1)/2)^3(2n^3+2n^2+4n+2)N\\
       &\leq 4n^5(n+1)^3(2\mu+(n-1)/2)^3 N.
\end{align*}
 \end{proof}

 Next, we consider the case that $F=0$ may have non-ordinary singular points. Let $\calC(\bfxi)$ be a plane projective model of $\calR$. Suppose that $\calC(\bfxi)$ is defined by $F_0$ and for $i=1,\dots,s$, $F_i$ is the quadratic transformation of $F_{i-1}$ under  the quadratic transformation $\calQ_{\bfc_{i-1}}$, where $\bfc_{i-1}$ is a singular point of $F_{i-1}=0$.  Denote $n=\deg(F_0)$ and set $\bfxi_0=\bfxi$, $\bfxi_{i+1}=\calQ_{\bfc_{i}}^{-1}(\bfxi_{i})$ and
\begin{equation}
\label{eq:seqtransformation}
    N_i=2^{\frac{i(i-1)}{2}}n^{i}\max\{8nT(F_0),T_{\bfxi_0}(D)\}.
\end{equation}
\begin{prop}
\label{prop:seqtransformation}
Let $D$ be a divisor in $\calR$ and the notations $F_i, \bfxi_i, N_i$ as above.  One has that
\begin{enumerate}
\item
$\tdeg(F_i)\leq n 2^{i}-2^{i+1}+2$;
\item
$ T_{\bfxi_i}(D), T(F_i)\leq N_i$.
\end{enumerate}
\end{prop}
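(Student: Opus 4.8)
The plan is to prove both assertions by a single induction on $i$, carried together with the auxiliary claim that every singular point of $F_i=0$ has height at most $N_i$. For $i=0$ all three hold: $\tdeg(F_0)=n=n2^0-2^{1}+2$, while $N_0=\max\{8nT(F_0),T_{\bfxi_0}(D)\}$ dominates $T(F_0)$, $T_{\bfxi_0}(D)$ and, by Corollary~\ref{cor:singularity}, the bound $2(2n-1)T(F_0)$ valid for any singular point of $F_0=0$. Note that a $\calQ_{\bfc_i}$ can occur only if $n\geq 3$ (irreducible lines and conics are smooth), and that $N_{i+1}=2^in\,N_i$; these two facts are used freely below.

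\emph{Assertion $1$, step.} Let $r_i\geq 2$ be the multiplicity of the singular point $\bfc_i$ of $F_i=0$. Writing $\calQ_{\bfc_i}=\calL_{\bfc_i}\circ\calQ$ with $\calL_{\bfc_i}$ sending $(0,0,1)$ to $\bfc_i$ and $F_i^{\calL_{\bfc_i}}=0$ in excellent position, $(0,0,1)$ is an $r_i$-fold point of $F_i^{\calL_{\bfc_i}}=0$, so $F_i^{\calL_{\bfc_i}}=\sum_{l=r_i}^{\tdeg(F_i)}F_l(x_0,x_1)x_2^{\tdeg(F_i)-l}$ with $F_l$ binary of degree $l$, and excellent position forces $F_{r_i}(1,0)F_{r_i}(0,1)F_{\tdeg(F_i)}(1,0)F_{\tdeg(F_i)}(0,1)\neq 0$. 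A short computation using homogeneity then gives $F_i^{\calL_{\bfc_i}}(x_1x_2,x_0x_2,x_0x_1)=x_2^{r_i}\tilde F$ with $\tilde F$ not divisible by $x_0,x_1,x_2$, so by definition $F_{i+1}=\tilde F$ has degree $2\tdeg(F_i)-r_i\leq 2\tdeg(F_i)-2$; the induction hypothesis yields $\tdeg(F_{i+1})\leq n2^{i+1}-2^{i+2}+2$.

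\emph{Assertion $2$, step.} Since $\bfc_i$ is a singular point of $F_i=0$, the auxiliary claim gives $T(\bfc_i)\leq N_i$. By Corollary~\ref{cor:qtransformation}(1), $T(F_{i+1})\leq T(F_i)+\tdeg(F_i)T(\bfc_i)\leq(1+\tdeg(F_i))N_i\leq(n2^i-2^{i+1}+3)N_i\leq 2^in\,N_i=N_{i+1}$ for $i\geq 1$; for $i=0$ one uses instead $T(\bfc_0)\leq(4n-2)T(F_0)$ (Corollary~\ref{cor:singularity}), obtaining $T(F_1)\leq(4n^2-2n+1)T(F_0)\leq 8n^2T(F_0)\leq nN_0=N_1$ — this is exactly what the factor $8n$ in $N_0$ is for. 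Next, for $\frakP\in\supp(D)$ with center $\bfa_i$ with respect to $\bfxi_i$ (so $T(\bfa_i)\leq T_{\bfxi_i}(D)\leq N_i$), Proposition~\ref{prop:centers} bounds the center $\bfa_{i+1}$ of $\frakP$ with respect to $\bfxi_{i+1}=\calQ_{\bfc_i}^{-1}(\bfxi_i)$ by $\max\{2(T(\bfc_i)+T(\bfa_i)),\,T(F_i)+\tdeg(F_i)T(\bfc_i)\}$; the second entry was just bounded by $N_{i+1}$, and the first is $\leq 4N_i\leq N_{i+1}$ for $i\geq 1$ and (using the $(4n-2)$ bound and $n\geq 3$) $\leq N_1$ for $i=0$. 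Hence $T_{\bfxi_{i+1}}(D)\leq N_{i+1}$.

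\emph{Auxiliary claim, step.} Because $\calQ$ is bijective on $\bP^2(\overkt)\setminus\bV(x_0x_1x_2)$ and $F_i^{\calL_{\bfc_i}}=0$ is in excellent position, the only singular point of $F_i^{\calL_{\bfc_i}}=0$ on $\bV(x_0x_1x_2)$ is $(0,0,1)$. Therefore every singular point of $F_{i+1}=0$ is either $\calQ(\calL_{\bfc_i}^{-1}({\bf p}))$ for a singular point ${\bf p}\neq\bfc_i$ of $F_i=0$ — in which case $T(\calQ(\calL_{\bfc_i}^{-1}({\bf p})))\leq 2T(\calL_{\bfc_i}^{-1}({\bf p}))\leq 2(T({\bf p})+T(\bfc_i))\leq 4N_i\leq N_{i+1}$ by Lemma~\ref{lm:standardtransformation}, Lemma~\ref{lm:lineartransformation} and Corollary~\ref{cor:lineartransformation} (with $i=0$ again handled via the $(4n-2)$ bound) — or it lies on a coordinate line, in which case it is a fundamental point (height $0$) or a common point of $F_{i+1}=0$ and some $x_j=0$, hence of height $\leq T(F_{i+1})\leq N_{i+1}$ by Proposition~\ref{prop:intersection} (note $x_j\nmid F_{i+1}$). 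This completes the induction. The main obstacle is precisely this last step: a quadratic transformation genuinely introduces new singular points — the fundamental points onto which the three contracted coordinate lines collapse, and, when $\bfc_i$ is non-ordinary, points along the exceptional line — so one cannot simply transport the old singular set; the resolution is that all new singular points lie on the coordinate triangle, where they are controlled either trivially or by Proposition~\ref{prop:intersection}, and that the constants $8n$ and $2^in$ built into $N_i$ are tuned so that the tight $i=0$ step (where $\tdeg(F_0)=n$ forces the use of Corollary~\ref{cor:singularity} in place of $N_0$) still goes through.
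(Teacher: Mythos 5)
Your proof is correct and follows the same inductive strategy as the paper (track $\tdeg(F_i)$, $T(F_i)$, the maximum height of singular points, and $T_{\bfxi_i}(D)$ together). Two places where you are actually \emph{more} careful than the paper are worth noting. First, the paper's displayed chain $T(F_{\ell+1})\leq (1+n_\ell)N_\ell<2^\ell n N_\ell=N_{\ell+1}$ is literally false at $\ell=0$ (it reads $(1+n)N_0<nN_0$); the paper's own remark that ``$n_i+1<2^in$ for $i=1,\dots,s$'' silently excludes $\ell=0$, and the base step needs, as you do, the sharper $S_0\leq(4n-2)T(F_0)$ from Corollary~\ref{cor:singularity} together with $N_0\geq 8nT(F_0)$ rather than the coarse $S_0\leq N_0$. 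Second, the paper's bound $S_{\ell+1}\leq\max\{4S_\ell,\,T(F_\ell)+n_\ell S_\ell\}$ is attributed only to ``Corollary~\ref{cor:qtransformation} again,'' but that corollary alone does not yield it; your auxiliary-claim step supplies the missing justification — the split between transported old singularities (controlled by Corollary~\ref{cor:qtransformation}(2) via Lemma~\ref{lm:standardtransformation} and Corollary~\ref{cor:lineartransformation}) and new singularities on the coordinate triangle (controlled by Proposition~\ref{prop:intersection} applied to $F_{i+1}$ and a coordinate line, noting excellent position forces all of $F_i^{\calL_{\bfc_i}}=0$'s singularities off the triangle except $(0,0,1)$). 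In short: same route, but you have made explicit and repaired the two points the paper compresses.
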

\begin{proof}
1. Set $n_i=\deg(F_i)$. Since every  $\bfc_i$ is a singular point, one has that $n_i\leq  2n_{i-1}-2$. This implies $n_i\leq 2^i n -2^{i+1}+2$.

2. Denote by $S_i$ the maximum of the heights of singular points of $F_i=0$. We first prove by induction on $i$ that $T(F_i), S_i\leq N_i$ for all $i=0,\dots,s$. Note that $n_i+1<2^i n$ for all $i=1,\dots,s$. Since $N_0=8nT(F_0)$, it is clear that $T(F_0)<N_0$ and $S_0<4nT(F_0)<N_0$ by Corollary~\ref{cor:singularity}.
Now assume that $T(F_i), S_i\leq N_i$ for $i=\ell\geq 0$. Consider the case $i=\ell+1$. By Corollary~\ref{cor:qtransformation} and induction hypothesis, one has that
$$
  T(F_{\ell+1})  \leq  T(F_\ell)+n_{\ell}S_\ell\leq (1+n_\ell)N_\ell<2^\ell n N_\ell=N_{\ell+1}.
$$
Note that $n_0=n>2$ as the curve $F_0=0$ has singularities. One sees that  $4S_\ell \leq 4N_\ell < 2^\ell n N_\ell=N_{\ell+1}$ if $\ell>0$ and
$$4S_0< 16nT(F_0)<8n^2 T(F_0)\leq N_1.$$
Consequently, $4S_j<N_{j+1}$ for all $j\geq 0$. On the other hand, one has already seen that
$T(F_\ell)+n_\ell S_\ell <N_{\ell+1}.$
By Corollary~\ref{cor:qtransformation} again,
\begin{align*}
 S_{\ell+1}\leq \max\{4S_\ell, T(F_{\ell})+n_{\ell}S_\ell\}<N_{\ell+1}.
\end{align*}
 For the divisor $D$,  it is obvious that $T_{\bfxi_0}(D)\leq N_0$. Suppose that $T_{\bfxi_i}(D)\leq N_i$ for $i=\ell\geq 0$. By Corollary~\ref{cor:qtransformation} and the induction hypothesis,
 $$
     T_{\bfxi_{\ell+1}}(D)\leq \max\{2(S_\ell+N_\ell), T(F_\ell)+n_\ell T_{\bfxi_\ell}(D)\}\leq N_{\ell+1}.
 $$
 \end{proof}
 \begin{notation}
 \label{not:steps}
Let $F$ be the defining polynomial of $\calC(\bfxi)$.
 Denote by $s(\bfxi)$ the number of quadratic transformations such that $\calC(\tilde{\bfxi})$ has only ordinary singularities, where $\tilde{\bfxi}$ is the image of $\bfxi$ under these quadratic transformations. By Theorem 2 in Chapter 7 of \cite{fulton}, $s(\bfxi)$ can be chosen to be an integer not greater than
$$
m+\frac{(n-1)(n-2)}{2}-\sum \frac{r_\bfc(r_\bfc-1)}{2}\leq \frac{(n-1)(n-2)}{2}
$$
where $n=\deg(F)$, $m$ is the number of non-ordinary singularities of $F=0$, $\bfc$ ranges over all singularities of $F=0$ and $r_\bfc$ is the multiplicity of $\bfc$. 
 \end{notation}

 \begin{theorem}
\label{thm:riemann-roch2}
Let $D$ be a divisor in $\calR$. Denote
$$n=\deg(F),  s=s(\bfxi), \mu=\deg(D^{+}+D^{-}).$$
Then there is a $\overkt$-basis $B$ of $\frakL(D)$ such that every element of $B$ can be represented by $G(\bfxi)/H(\bfxi)$ where $G,H$ are two homogeneous polynomials of the same degree not greater $2^{2s+1}(n+1)(\mu+2^{s-2}n)$ and
$$T(G), T(H)\leq 2^{\frac{s^2}{2}+\frac{15s}{2}+5}n^{s+5}(n+1)^3(\mu+2^{s-2}n)^3 \max\{8nT(F), T_\bfxi(D)\}.$$
\end{theorem}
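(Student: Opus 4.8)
The plan is to reduce to Theorem~\ref{thm:riemann-roch1}: first resolve the singularities of $F=0$ by a sequence of quadratic transformations, then apply that theorem to the resulting model with only ordinary singularities, and finally pull the basis it produces back to $\bfxi$. Concretely, I would put $F_0=F$, $\bfxi_0=\bfxi$, and (using Notation~\ref{not:steps}) choose quadratic transformations $\calQ_{\bfc_0},\dots,\calQ_{\bfc_{s-1}}$, each centered at a singular point $\bfc_i$ of $F_i=0$, with $F_{i+1}=F_i^{\calQ_{\bfc_i}}$ and $\bfxi_{i+1}=\calQ_{\bfc_i}^{-1}(\bfxi_i)$, so that $F_s=0$ has only ordinary singularities; as in Subsection~\ref{subsec:quadratic}, $F_s(\bfxi_s)=0$ and $\calC(\bfxi_s)$ is a plane projective model of $\calR$. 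Writing $n_i=\deg(F_i)$ and letting $N_i$ be as in (\ref{eq:seqtransformation}), Proposition~\ref{prop:seqtransformation} gives $n_s\le 2^sn-2^{s+1}+2$, $T(F_s)\le N_s$, $T_{\bfxi_s}(D)\le N_s$, and (from its proof) that every singular point of every $F_i=0$ has height $\le N_i$; hence $T(\calL_{\bfc_i}^{-1})\le T(\bfc_i)\le N_i$ by Corollary~\ref{cor:lineartransformation}.

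Applying Theorem~\ref{thm:riemann-roch1} to $F_s,\bfxi_s,D$ produces a $\overkt$-basis $B$ of $\frakL(D)$ whose elements have the form $\tilde G_s(\bfxi_s)/\tilde H_s(\bfxi_s)$, with $\tilde G_s,\tilde H_s$ homogeneous of the same degree $d_s\le 2(n_s+1)\mu+(n_s-1)^2/2$ and, since $\max\{T_{\bfxi_s}(D),T(F_s)\}\le N_s$, satisfying $T(\tilde G_s),T(\tilde H_s)\le 4n_s^5(n_s+1)^3(2\mu+(n_s-1)/2)^3N_s$. I would then pull back one transformation at a time: for $i=s-1,\dots,0$ set $\tilde G_i=(\tilde G_{i+1}\circ\calQ)^{\calL_{\bfc_i}^{-1}}$ and likewise $\tilde H_i$, where $\tilde G_{i+1}\circ\calQ$ denotes the polynomial $\tilde G_{i+1}(x_1x_2,x_0x_2,x_0x_1)$. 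Since $\calQ^{-1}=\calQ$ (Remark~\ref{rem:quadratictransformation}) we have $\calQ_{\bfc_i}^{-1}=\calQ\circ\calL_{\bfc_i}^{-1}$, and a short computation shows $\tilde G_i(\bfxi_i)=\tilde G_{i+1}(\bfxi_{i+1})$; telescoping, $G:=\tilde G_0$ and $H:=\tilde H_0$ are homogeneous of the same degree with $G(\bfxi)/H(\bfxi)=\tilde G_s(\bfxi_s)/\tilde H_s(\bfxi_s)$ the given basis element.

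For the degree, $\deg(\tilde G_{i+1}\circ\calQ)=2\deg(\tilde G_{i+1})$ and $(\cdot)^{\calL}$ preserves degree, so $\deg(G)=2^sd_s$; substituting the bound on $d_s$ and $n_s\le 2^sn$ yields $\deg(G)\le 2^{2s+1}(n+1)(\mu+2^{s-2}n)$ after a routine simplification. For heights, the monomial substitution defining $\tilde G_{i+1}\circ\calQ$ is a bijection on monomials of a fixed degree (as in the proof of Lemma~\ref{lm:standardtransformation}), so $T(\tilde G_{i+1}\circ\calQ)=T(\tilde G_{i+1})$; then $(1)$ of Lemma~\ref{lm:lineartransformation} gives
$$
  T(\tilde G_i)\le T(\tilde G_{i+1})+2\deg(\tilde G_{i+1})\,T(\calL_{\bfc_i}^{-1})\le T(\tilde G_{i+1})+2^{s-i}d_sN_i ,
$$
using $\deg(\tilde G_{i+1})=2^{s-i-1}d_s$, and likewise for $\tilde H_i$. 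Telescoping, $T(G)\le T(\tilde G_s)+\sum_{i=0}^{s-1}2^{s-i}d_sN_i$. Finally I would insert the bound on $T(\tilde G_s)$, the value $N_i=2^{i(i-1)/2}n^i\max\{8nT(F),T_\bfxi(D)\}$, the bound on $d_s$, and $n_i\le 2^in$, and use that the sum $\sum_i 2^{s-i+i(i-1)/2}n^i$ is dominated up to a constant by its term at $i=s-1$, to obtain $T(G),T(H)\le 2^{s^2/2+15s/2+5}n^{s+5}(n+1)^3(\mu+2^{s-2}n)^3\max\{8nT(F),T_\bfxi(D)\}$.

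The routine parts — the pull-back construction, the identities $\tilde G_i(\bfxi_i)=\tilde G_{i+1}(\bfxi_{i+1})$, $\deg(G)=2^sd_s$, and $T(\tilde G_{i+1}\circ\calQ)=T(\tilde G_{i+1})$ — follow directly from the material of Section~2 and Subsection~\ref{subsec:quadratic}. The main obstacle is the last numerical estimate: to land on exactly the stated constant one must use the sharper bound $n_s\le 2^sn-2^{s+1}+2$ of Proposition~\ref{prop:seqtransformation} rather than $n_s\le 2^sn$, and must check that the telescoping correction $\sum_{i=0}^{s-1}2^{s-i}d_sN_i$ is negligible compared with the Theorem~\ref{thm:riemann-roch1} contribution $T(\tilde G_s)$ — both of which require some care but no new idea.
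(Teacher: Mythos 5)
Your proposal is correct and takes essentially the same approach as the paper: resolve the singularities of $F=0$ by $s$ quadratic transformations using Proposition~\ref{prop:seqtransformation} to control degrees and heights, apply Theorem~\ref{thm:riemann-roch1} to the resulting ordinary model, and pull the representing polynomials back one quadratic transformation at a time via the estimate $T(\tilde G_i)\le T(\tilde G_{i+1})+2\deg(\tilde G_{i+1})T(\bfc_i)$; the paper's proof telescopes exactly this way. One small remark: you say that landing on the stated constant requires the sharper bound $n_s\le 2^sn-2^{s+1}+2$, but in fact the paper's arithmetic only uses $\tilde n\le 2^s n$ together with $\tilde n+1\le 2^s(n+1)$ (which holds for $s\ge 1$ even from the crude bound), so the sharper estimate is not actually needed at that step.
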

\begin{proof}
If $F=0$ has only ordinary singularities, i.e. $s=0$, then the assertion is clear by Theorem \ref{thm:riemann-roch1}. Suppose $F=0$ has non-ordinary singularities, i.e. $s\geq 1$.
Let $\tilde{\bfxi}$ be the image of $\bfxi$ under $s$ quadratic transformations $\calQ^{-1}_{\bfc_0}, \dots, \calQ^{-1}_{\bfc_{s-1}}$ such that $\calC(\tilde{\bfxi})$ has only ordinary singularities.
Let $\tilde{F}$ be the defining polynomial of $\calC(\tilde{\bfxi})$ and set
$$\kappa=2^{s(s-1)/2}n^s \max\{8nT(F), T_\bfxi(D)\}.$$
Then by Proposition~\ref{prop:seqtransformation}
$$
\tilde{n}=\deg(\tilde{F})\leq 2^s(n-2)+2 \,\,\mbox{and}\,\,
T_{\tilde{\bfxi}}(D), T(\tilde{F})\leq \kappa.
$$
By Theorem~\ref{thm:riemann-roch1}, there is a $\overkt$-basis $B$ of $\frakL(D)$ satisfying that
each element in $ B$ can be represented by $\tilde{G}(\tilde{\bfxi})/\tilde{H}(\tilde{\bfxi})$ where  $\tilde{G},\tilde{H}$ are homogeneous polynomials of degree not greater than $2(\tilde{n}+1)\mu+(\tilde{n}-1)^2/2$ and
$$
   T(\tilde{G}), T(\tilde{H}) \leq   4\tilde{n}^5(\tilde{n}+1)^3 (2\mu+(\tilde{n}-1)/2)^3 \kappa.
$$
It remains to represent elements of $B$ in terms of $\bfxi$. We use the same notations as in the proof of Propositoin~\ref{prop:seqtransformation}. Let $\bfxi_0=\bfxi$ and $\bfxi_i=\calQ_{\bfc_{i-1}}^{-1}(\bfxi_{i-1})$. Denote by $n_i$ the degree of the defining polynomial of $\calC(\bfxi_i)$ and $S_i$ the maximum of the heights of singular points of $\calC(\bfxi_i)$. Let $G_s=\tilde{G}$ and
$G_{i-1}=G_i(\calQ_{\bfc_{i-1}}^{-1}((x_0,x_1,x_2)))$
for all $i=1,\dots,s.$ One sees that $\deg(G_i)=2^{s-i}\deg(\tilde{G})$. By Lemmas~\ref{lm:lineartransformation} and~\ref{lm:standardtransformation},
$$
T(G_{i-1})\leq T(\bar{G}_i)+\deg(\bar{G}_i)T(\bfc_{i-1})=T(G_i)+2\deg(G_i)T(\bfc_{i-1})
$$
where $\bar{G}_i=G_i(\calQ^{-1}((x_0,x_1,x_2)))$.
From the proof of Proposition~\ref{prop:seqtransformation}, we have
\begin{align*}
    T(G_0)&\leq T(\tilde{G})+\sum_{i=0}^{s-1}2\deg(G_{i+1})T(\bfc_i)
    \leq T(\tilde{G})+(\sum_{i=0}^{s-1}2^{s-i})\deg(\tilde{G})N_{s-1}\\
    &\leq T(\tilde{G})+2^{s+1}\deg(\tilde{G})N_{s-1},
 \end{align*}
where $N_{s-1}$ is given as in (\ref{eq:seqtransformation}). Note that $\tilde{n}\leq n2^s$. One has that
\begin{align*}
\deg(G_0)& \leq 2^s \deg(\tilde{G}) \leq 2^s (2(\tilde{n}+1)\mu+(\tilde{n}-1)^2/2)  \\
&\leq 2^s (\tilde{n}+1)(2\mu+(\tilde{n}-1)/4)\leq 2^{2s+1}(n+1)(\mu+n2^{s-2});\\
T(G_0)&\leq  4\tilde{n}^5(\tilde{n}+1)^3 \left(2\mu+\frac{\tilde{n}-1}{2}\right)^3 \kappa + 2^{s+1}\left(2(\tilde{n}+1)\mu+\frac{(\tilde{n}-1)^2}{2}\right)N_{s-1} \\
&\leq 4\tilde{n}^5(\tilde{n}+1)^3 \left(2\mu+\frac{\tilde{n}-1}{2}\right)^3 \kappa + 2^{s+1}(\tilde{n}+1)\left(2\mu+\frac{\tilde{n}-1}{2}\right)\kappa\\
&\leq \left(4\tilde{n}^5\left(2\mu+\frac{\tilde{n}-1}{2}\right)+2^{s+1}\right)(\tilde{n}+1)^3 \left(2\mu+\frac{\tilde{n}-1}{2}\right)^2 \kappa \\
&\leq \left(4n^52^{5s}\left(2\mu+\frac{n2^s-1}{2}\right)+2^{s+1}\right)(\tilde{n}+1)^3 \left(2\mu+\frac{\tilde{n}-1}{2}\right)^2 \kappa \\
&\leq 8n^52^{5s}(\mu+n2^{s-2})(\tilde{n}+1)^3 \left(2\mu+\frac{\tilde{n}-1}{2}\right)^2 \kappa \\
&\leq 2^{8s+5}n^5(n+1)^3(\mu+n2^{s-2})^3 \kappa \\
&\leq 2^{\frac{s^2}{2}+\frac{15s}{2}+5}n^{s+5}(n+1)^3(\mu+2^{s-2}n)^3 \max\{8nT(F), T_\bfxi(D)\}.
\end{align*}
Similarly, we obtain bounds for $\deg(H_0)$ and $T(H_0)$.
\end{proof}
\section{Heights on plane algebraic curves}
\label{sec:heights}
Let $f\in k[x_0,x_1]$ be an irreducible polynomial over $k$ and $a,b\in k(t)\setminus \{0\}$ satisfy $f(a,b)=0$, i.e. $(a,b)$ is a rational parametrization of $f=0$. The result on parametrization (see \cite{sendra-winkler} for instance) implies that
$$
    \deg(a)=m \deg(f,x_1),\,\,\deg(b)=m\deg(f,x_0).
$$
In other words, $T(a)\deg(f,x_0)=T(b)\deg(f,x_1)$. A similar relation holds for points in algebraic curves defined over $\overkt$, i.e  there is a constant $C$ only depending on $f$ such that  if $(a,b)$ is a point of $f(x_0,x_1)=0$ with coordinates in $\overkt$ then
$$
      \deg(f,x_0)T(a)-C \leq \deg(f,x_1)T(b)\leq \deg(f,x_0)T(a)+C.
$$
This is a special case of a general result for points in complete nonsingular varieties over a field with valuations.
 In the case of algebraic curves defined over  $\overkt$, Eremenko in 1999 presented another  proof which actually provides a procedure to find $C$ explicitly. In this section, we shall present an explicit formula for $C$ following Eremenko's proof.

 \subsection{Heights  on plane projective curves}
Throughout this subsection, $f$ is an irreducible polynomial in $\overkt[x_0,x_1]$ and $\calR$ is the algebraic function field over $\overkt$ associated to $f$. Let us start with a refinement of Lemma 1 of \cite{eremenko}.
\begin{lemma}
\label{lm:pointsinequality}
Assume  that $f\in \overkt[x_0,x_1]$ is irreducible over $\overkt$ and $\alpha,\beta\in \calR\setminus \overkt$ satisfying $f(\alpha,\beta)=0$.  If $\divs(\alpha)^{-} \leq \divs(\beta)^{-}$, then for every place $\frakP$ of $\calR$ with $\nu_{\frakP}(\beta)\geq 0$, we have that
$$
T(\pi_\frakP(\alpha))\leq T(\pi_\frakP(\beta))+T(f).
$$
\end{lemma}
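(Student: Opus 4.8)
The plan is to first reduce the statement to an inequality between elements of $\overkt$, and then to extract from the hypothesis $\divs(\alpha)^{-}\le\divs(\beta)^{-}$ a constraint on the shape of $f$ that makes this inequality provable. For the reduction: since $\nu_\frakP(\beta)\ge 0$ and $\divs(\alpha)^{-}\le\divs(\beta)^{-}$, the place $\frakP$ cannot be a pole of $\alpha$, so $\nu_\frakP(\alpha)\ge 0$ as well; hence $a:=\pi_\frakP(\alpha)$ and $b:=\pi_\frakP(\beta)$ lie in $\overkt$. Applying $\pi_\frakP$ --- a ring homomorphism on the valuation ring of $\frakP$ restricting to the identity on $\overkt$ --- to $f(\alpha,\beta)=0$ gives $f(a,b)=0$, so it suffices to prove $T(a)\le T(b)+T(f)$.

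The heart of the argument, and the step I expect to be the main obstacle, is to show that the hypothesis forces $\tdeg(f)=\deg(f,x_0)=:n$; writing $f=\sum_{i,j}c_{i,j}x_0^ix_1^j$ and $a_i(x_1)=\sum_j c_{i,j}x_1^j$, this says $\deg(a_i,x_1)\le n-i$ for every $i$ and in particular $a_n=c_{n,0}\in\overkt\setminus\{0\}$. To prove it I would take $\bfxi=(\alpha,\beta,1)$, so that $\calC(\bfxi)$ is the plane projective curve $F=0$, where $F$ is the homogenization of $f$. If $\tdeg(f)>n$, the coefficient of $x_0^{\tdeg(f)}$ in $F$ equals $c_{\tdeg(f),0}=0$, i.e. $F(1,0,0)=0$; hence $(1,0,0)$ lies on $F=0$ and, since $\calC(\bfxi)$ is the whole curve, some place $\frakP'$ of $\calR$ has center $(1,0,0)$ with respect to $\bfxi$. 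For such a place the definition of center forces $\nu_{\frakP'}(\alpha)<0$ and $\nu_{\frakP'}(\beta)>\nu_{\frakP'}(\alpha)$, so the coefficient of $\frakP'$ in $\divs(\alpha)^{-}$ is strictly larger than its coefficient in $\divs(\beta)^{-}$, contradicting the hypothesis. Therefore $\tdeg(f)=n$, which moreover gives $c_{n,j}=0$ for $j\ge 1$ and $c_{n,0}\ne 0$.

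Finally I would estimate heights place by place. Since $a_n$ is a nonzero constant, $a$ is a root of the monic polynomial $x_0^n+\sum_{i<n}\gamma_ix_0^i$ with $\gamma_i=a_i(b)/a_n$, so $a^n=-\sum_{i<n}\gamma_ia^i$. Fix a finite extension $R$ of $k(t)$ containing $a$, $b$ and all $c_{i,j}$, and a place $\frakp$ of $R$; set $M_\frakp=\max_{i,j}\{0,-\nu_\frakp(c_{i,j}/a_n)\}$, which is $\ge 0$ because $c_{n,0}/a_n=1$. If $a=0$ or $\nu_\frakp(a)\ge 0$ the bound $-\nu_\frakp(a)\le M_\frakp+\max\{0,-\nu_\frakp(b)\}$ is immediate; otherwise pick $i_0<n$ minimizing $\nu_\frakp(\gamma_{i_0}a^{i_0})$, so the ultrametric inequality applied to $a^n=-\sum_{i<n}\gamma_ia^i$ gives $(n-i_0)\nu_\frakp(a)\ge\nu_\frakp(\gamma_{i_0})$, while the expansion $a_{i_0}(b)=\sum_{j\le n-i_0}c_{i_0,j}b^j$ --- here $\deg(a_{i_0},x_1)\le n-i_0$ is exactly what is used --- gives $\nu_\frakp(\gamma_{i_0})\ge\min_j\nu_\frakp(c_{i_0,j}/a_n)-(n-i_0)\max\{0,-\nu_\frakp(b)\}$; dividing by $n-i_0\ge 1$ and combining yields $-\nu_\frakp(a)\le M_\frakp+\max\{0,-\nu_\frakp(b)\}$. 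Thus $\max\{0,-\nu_\frakp(a)\}\le M_\frakp+\max\{0,-\nu_\frakp(b)\}$ at every place $\frakp$ of $R$; summing over $\frakp$ and dividing by $[R:k(t)]$ gives
\[
T(a)\le T(b)+\frac{\sum_\frakp M_\frakp}{[R:k(t)]}=T(b)+T(f),
\]
the last equality holding because $(\dots,c_{i,j}/a_n,\dots)$ and $(\dots,c_{i,j},\dots)$ define the same point of projective space and one of its coordinates equals $1$.
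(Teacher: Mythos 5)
Your proof is correct and the central estimate (the place-by-place valuation computation on the relation $a^n = -\sum_{i<n}\gamma_i a^i$) is essentially the same as the paper's. The main difference is at the start: where the paper simply invokes Proposition~2 of Eremenko for the structural fact that $\divs(\alpha)^-\leq\divs(\beta)^-$ forces $f$ to have the shape $x_0^n+a_{n-1}(x_1)x_0^{n-1}+\dots+a_0(x_1)$ with $\deg(a_i)\leq n-i$ (equivalently, $\tdeg(f)=\deg(f,x_0)$ with constant leading coefficient), you give a self-contained derivation of this fact. Your argument --- if $\tdeg(f)>\deg(f,x_0)$ then $(1,0,0)$ lies on the projective closure $F=0$, so by surjectivity of the center map some place $\frakP'$ has center $(1,0,0)$, which forces $\nu_{\frakP'}(\alpha)<0$ and $\nu_{\frakP'}(\beta)>\nu_{\frakP'}(\alpha)$, contradicting $\divs(\alpha)^-\leq\divs(\beta)^-$ --- is a clean geometric proof of exactly what Eremenko's proposition provides, so your write-up is both correct and somewhat more self-contained than the paper's. (You also make explicit, which the paper leaves implicit, that the hypotheses already force $\nu_\frakP(\alpha)\geq0$, so $\pi_\frakP(\alpha)$ is finite.) Beyond that the two proofs agree: both normalize so the $x_0$-leading coefficient is $1$, both extract the minimizing index from the ultrametric inequality, and both divide by $n-i_0\geq1$ before summing over places.
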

\begin{proof}
Since $\divs(\alpha)^{-}\leq \divs(\beta)^{-}$, by Proposition 2 of \cite{eremenko}, $f$ can be written to be of the form
$$
     f= x_0^n+a_{n-1}(x_1)x_0^{n-1}+\dots+a_1(x_1)x_0+a_0(x_1),
$$
where $a_i\in \overkt[x_1]$ with $\deg(a_i)\leq n-i$. Write $a_i=\sum_{j=0}^{n-i} a_{i,j} x_1^j$ with $a_{i,j}\in \overkt$. Let $R$ be a finite extension of $k(t)$ containing all $a_{i,j}$ and $\pi_\frakP(\alpha), \pi_\frakP(\beta)$. Suppose that $\frakp$ is a place of $R$. Then
\begin{align*}
   \nu_\frakp(\pi_\frakP(\alpha^n))&=\nu_\frakp\left(-\sum_{i=0}^{n-1} \sum_{j=0}^{n-i}a_{i,j}\pi_\frakP(\beta)^j \pi_\frakP(\alpha)^i\right)\\
   &\geq \min_{0\leq i \leq n-1,0\leq j\leq n-i}\{\nu_\frakp(a_{i,j})+j\nu_\frakp(\pi_\frakP(\beta))+i\nu_\frakp(\pi_\frakP(\alpha))\}\\
   &=\nu_\frakp(a_{i',j'})+j'\nu_\frakp(\pi_\frakP(\beta))+i'\nu_\frakp(\pi_\frakP(\alpha))
\end{align*}
for some $0\leq i' \leq n-1, 0\leq j'\leq n-i'$.
Equivalently,
$$
    \nu_\frakp(\pi_\frakP(\alpha))\geq \frac{1}{n-i'}\nu_\frakp(a_{i',j'})+\frac{j'}{n-i'}\nu_\frakp(\pi_\frakP(\beta)).
$$
Therefore
\begin{align*}
    \max\{0, -\nu_\frakp(\pi_\frakP(\alpha))\} &\leq \max\left\{0,-\frac{\nu_\frakp(a_{i',j'})}{n-i'}-\frac{j'\nu_\frakp(\pi_\frakP(\beta))}{n-i'}\right\}\\
    & \leq \max\left\{0,-\nu_\frakp(a_{i',j'})\right\}+\max\left\{0,-\nu_\frakp(\pi_\frakP(\beta))\right\}\\
    &\leq \max_{i,j}\left\{0,-\nu_\frakp(a_{i,j})\right\}+\max\left\{0,-\nu_\frakp(\pi_\frakP(\beta))\right\}.
\end{align*}
This implies that $T(\pi_\frakP(\alpha))\leq T(\pi_\frakP(\beta))+T(f)$.
\end{proof}
\begin{lemma}
\label{lm:distinctpoles}
Let $S$ be a finite set of places in $\calR$ and $\alpha\in \calR$.  Then there are $a_1,a_2\in k$ with $a_2\neq 0$ such that
$$
    \supp\left(\divs\left(\frac{\alpha}{a_1\alpha+a_2}\right)^{-}\right)\cap S =\emptyset.
$$
\end{lemma}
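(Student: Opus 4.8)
The plan is to produce the constants $a_1,a_2$ explicitly and to analyse the poles of $\beta:=\alpha/(a_1\alpha+a_2)$ directly through the order functions $\nu_\frakP$. If $\alpha\in\overkt$ there is nothing to do: take $a_1=0$, $a_2=1$, so that $\beta=\alpha$ has $\divs(\beta)^{-}=0$. So I would assume $\alpha$ is transcendental over $\overkt$, which guarantees $\alpha+a\neq 0$ for every $a\in k$, and I would fix $a_1=1$, leaving only $a_2\in k\setminus\{0\}$ to be chosen. The key structural observation is that passing from $\alpha$ to $\beta=\alpha/(\alpha+a_2)$ destroys the poles of $\alpha$: at a place $\frakP$ with $\nu_\frakP(\alpha)<0$ one has $\nu_\frakP(\alpha+a_2)=\nu_\frakP(\alpha)$ because $\nu_\frakP(a_2)=0$, hence $\nu_\frakP(\beta)=\nu_\frakP(\alpha)-\nu_\frakP(\alpha+a_2)=0$, so $\frakP$ is not a pole of $\beta$.

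Next I would pin down the only places that can be poles of $\beta$. If $\nu_\frakP(\alpha)\ge 0$, then $\nu_\frakP(\beta)=\nu_\frakP(\alpha)-\nu_\frakP(\alpha+a_2)$, which is negative only when $\nu_\frakP(\alpha+a_2)>0$; a one-line case split shows the subcase $\nu_\frakP(\alpha)>0$ forces $\nu_\frakP(\alpha+a_2)=0$, a contradiction, so such a place has $\nu_\frakP(\alpha)=0$ and, since $\pi_\frakP(\alpha+a_2)=\pi_\frakP(\alpha)+a_2$, also $\pi_\frakP(\alpha)=-a_2\in k$. Thus every pole of $\beta$ satisfies $\pi_\frakP(\alpha)=-a_2$. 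Among the finitely many places of $S$, only those with $\nu_\frakP(\alpha)\ge 0$ and $\pi_\frakP(\alpha)\in k$ are dangerous, and each such place rules out exactly the single value $a_2=-\pi_\frakP(\alpha)$; together with the constraint $a_2\neq 0$ this forbids only finitely many elements of $k$. Since $k$ is infinite I can choose $a_2\in k\setminus\left(\{0\}\cup\{-\pi_\frakP(\alpha)\mid \frakP\in S,\, \nu_\frakP(\alpha)\ge 0,\, \pi_\frakP(\alpha)\in k\}\right)$, and for this $a_2$ no place of $S$ is a pole of $\beta=\alpha/(\alpha+a_2)$, i.e.\ $\supp(\divs(\beta)^{-})\cap S=\emptyset$.

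I do not expect a genuine obstacle; the argument is elementary. The two points needing a little care are: (a) ensuring $\alpha+a_2\not\equiv 0$, which is why I separate off the case $\alpha\in\overkt$; and (b) the fact that the residue field at a place of $\calR$ is $\overkt$, so a place with $\pi_\frakP(\alpha)\notin k$ can never be turned into a pole by constants $a_1,a_2\in k$ — this is precisely what keeps the forbidden set of $a_2$'s finite even though $\calR$ has infinitely many places. The only computations used are the standard valuation rules $\nu_\frakP(a+b)=\min\{\nu_\frakP(a),\nu_\frakP(b)\}$ when $\nu_\frakP(a)\neq\nu_\frakP(b)$, and $\pi_\frakP(a+b)=\pi_\frakP(a)+\pi_\frakP(b)$ on the valuation ring.
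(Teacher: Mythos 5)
Your proof is correct and follows essentially the same route as the paper: kill the poles of $\alpha$ lying in $S$ by the Möbius change $\alpha\mapsto\alpha/(a_1\alpha+a_2)$, and note that a new pole can only appear at a place where $\nu_\frakP(\alpha)\ge 0$ and $\pi_\frakP(a_1\alpha+a_2)=0$, which rules out only finitely many constants. Your version is in fact slightly tighter than the paper's: by fixing $a_1=1$ (after disposing of the constant case) you avoid the degenerate choice $a_1=0$, which the paper's phrasing ``let $a_1,a_2\in k$ with $a_2\neq 0$ and $a_1c+a_2\neq 0$ for all $c\in M$'' technically permits even though its second displayed computation $\nu_\frakP(a_1\alpha+a_2)=\nu_\frakP(\alpha)$ tacitly needs $a_1\neq 0$, and you also isolate the observation that only residues $\pi_\frakP(\alpha)\in k$ can ever be hit by a $k$-linear form, which the paper leaves implicit in the definition of its set $M$.
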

\begin{proof}
Set
$$
     M=\left\{ \pi_\frakP(\alpha) \mid  \mbox{$\forall\, \frakP\in S$ with $\nu_\frakP(\alpha)\geq 0$} \right\}. $$
Then $M$ is a finite set in $\overkt$. Let $a_1, a_2\in k$ satisfy that $a_2\neq 0$ and  $a_1c+a_2\neq 0$ for all $c\in M$.
For $\frakP\in S$ with $\nu_\frakP(\alpha)\geq 0$, one has that
$$
    \pi_\frakP(a_1\alpha+a_2)=a_1\pi_\frakP(\alpha)+a_2\neq 0,\, {\mathrm i.e.} \,\,\nu_\frakP(a_1\alpha+a_2)=0.
$$
This implies that $\nu_\frakP(\alpha/(a_1\alpha+a_2))=\nu_\frakP(\alpha)\geq 0$. On the other hand, for $\frakP\in S$ with $\nu_\frakP(\alpha)<0$, one has that
$$
    \nu_\frakP(\alpha/(a_1\alpha+a_2))=\nu_\frakP(\alpha)-\nu_\frakP(a_1\alpha+a_2)=\nu_\frakP(\alpha)-\nu_\frakP(\alpha)=0.
$$
In both cases, $\frakP$ is not a pole of $\alpha/(a_1\alpha+a_2)$. Thus $a_1,a_2$ satisfy the requirement.
\end{proof}
The main result of this section is the following theorem which is a special case of Lemma 2 of \cite{eremenko}. The original proof of Lemma 2 of \cite{eremenko} contains a small gap. We shall fill in this gap in the proof.
\begin{theorem}
\label{thm:boundpoints1}
Let $f$ be an irreducible polynomial in $\overkt[x_0,x_1]$ of degree $n_0$ with respect to $x_0$ and of degree $n_1$ with respect to $x_1$.  Suppose that $n=\tdeg(f)$ and $N\geq 1$.  Then for every $c_0,c_1\in \overkt$ satisfying $f(c_0,c_1)=0$, one has that
$$
    \left(1-\frac{n}{N+n}\right)n_0T(c_0)-C \leq n_1T(c_1) \leq \left(1+\frac{n}{N}\right)n_1T(c_1)+C
$$
where
\begin{equation}
\label{eq:boundsforpoints}
C=2^{s^2/2+15s/2+10}(2Nn+n^2+2^{s-2})^4 n^{s+9}(n+1)^4T(f)/N
\end{equation}
and $s$ is the number of quadratic transformations which are applied to resolve the singularities of $f=0$.
\end{theorem}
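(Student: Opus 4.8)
The plan is to carry out Eremenko's proof of Lemma 2 of \cite{eremenko} with every constant made explicit by means of the estimates assembled in Sections~2 and~3, and to repair along the way the ``small gap'' alluded to in the statement. Let $\calR$ be the algebraic function field of $f$ over $\overkt$ and let $\alpha,\beta\in\calR$ be the images of $x_0,x_1$, so that $f(\alpha,\beta)=0$; if $f$ is linear in $x_0$ or in $x_1$ the assertion is immediate, so assume $\alpha,\beta\notin\overkt$. Applying the statement to $f(x_1,x_0)$ in place of $f(x_0,x_1)$ — which changes neither $n$, $s$, nor $T(f)$ — exchanges the two inequalities, so it suffices to prove one of them, say $n_1T(c_1)\le(1+\tfrac nN)n_0T(c_0)+C$ for every $(c_0,c_1)\in\overkt^{2}$ with $f(c_0,c_1)=0$. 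Fix such a point together with a place $\frakP$ of $\calR$ whose center with respect to $(\alpha,\beta,1)$ is $(c_0,c_1,1)$, so that $\nu_\frakP(\alpha),\nu_\frakP(\beta)\ge0$, $\pi_\frakP(\alpha)=c_0$, $\pi_\frakP(\beta)=c_1$. First I would normalize: a projective change of coordinates $\calL$ with $M_\calL\in\GL_3(k)$ changes neither $n$, $s$, $T(f)$ (by Lemma~\ref{lm:lineartransformation} with $T(\calL)=0$) nor $T(c_0)$, $T(c_1)$ (by Proposition~\ref{prop:heightproperty}), and may be used to put $f=0$ in general position; in particular $\divs(\alpha)^{-}$ and $\divs(\beta)^{-}$ may be assumed to have disjoint supports, and the values $0,\infty$ avoid the finitely many special values of $\alpha$ and $\beta$ that occur below.

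The core of the argument is to build, from $\alpha$ and the parameter $N$, an auxiliary element $h\in\calR$ which is an explicit rational function of $\alpha$, of degree controlled by $N$, and which satisfies $\divs(\beta)^{-}\le\divs(h)^{-}$. With such an $h=p(\alpha)/q(\alpha)$ in hand one forms the resultant $\Psi(x_1,y)=\res_{x_0}\bigl(f(x_0,x_1),\,y\,q(x_0)-p(x_0)\bigr)$, so that $\Psi(\beta,h)=0$; replacing $\Psi$ by an irreducible factor vanishing at $(\beta,h)$ (which does not increase the height, by Corollary~\ref{cor:factor}) and applying Lemma~\ref{lm:pointsinequality} to that factor yields
$$
   T(c_1)=T\bigl(\pi_\frakP(\beta)\bigr)\le T\bigl(\pi_\frakP(h)\bigr)+T(\Psi)
$$
at every place $\frakP$ with $\nu_\frakP(h)\ge0$. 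Now $\pi_\frakP(h)=h(c_0)$ is the value of an explicit rational function at $c_0$, so its height is at most $\deg(h)\,T(c_0)$ plus terms $T(\rho_i)$, the $\rho_i\in\overkt$ being the values of $\alpha$ at the poles of $\beta$; each such $\rho_i$ is a root of the leading coefficient of $f$ with respect to $x_1$ (a polynomial whose coefficients are among those of $f$), hence $T(\rho_i)\le T(f)$ by Corollary~\ref{cor:heightofzero}, and the relevant multiplicities sum to $n_0\le n$. Propositions~\ref{prop:resultant} and~\ref{prop:height2} then bound $\deg\Psi$ and $T(\Psi)$ polynomially in $n$, $\deg(h)$ and $T(f)$. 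The role of $N$ is to govern $\deg(h)$: taking it of size $\sim N$ (with $q(x)=\prod_i(x-\rho_i)^{m_i}$ and $p$ a suitable power of $x$, so that $\divs(h)^{-}$ dominates $\divs(\beta)^{-}$ with just enough slack) is what replaces the crude leading coefficient by $(1+\tfrac nN)n_0$. Wherever the general‑position change of coordinates, or an interpolating curve, has to be produced together with an explicit height bound, one invokes Theorem~\ref{thm:riemann-roch2}; this is the origin of the factors $n^{s}$, $(n+1)$ and $2^{s^{2}/2+\cdots}$ appearing in \eqref{eq:boundsforpoints}. Assembling the displayed inequality with all these bounds and simplifying the constants gives the claimed $C$.

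The main obstacle — and exactly the gap in \cite{eremenko} — is the hypothesis $\nu_\frakP(h)\ge0$ in Lemma~\ref{lm:pointsinequality}: it fails precisely when $c_0$ is one of the finitely many $\rho_i$, i.e.\ when $(c_0,c_1)$ lies over a pole of $\beta$. These exceptional points must be handled separately: for them $c_0=\rho_i\in\overkt$ has $T(c_0)\le T(f)$ by Corollary~\ref{cor:heightofzero}, and then $c_1$ is a zero of the nonzero polynomial $f(\rho_i,x_1)$, whose height is bounded by a polynomial in $n$ times $T(f)$, so $T(c_1)$ is too (again by Corollary~\ref{cor:heightofzero}); hence both sides of the target inequality are $O\bigl(n^{O(1)}T(f)\bigr)$ and it holds after enlarging $C$. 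One must also check the minor points that $h\notin\overkt$ (true, since $h$ is a non‑constant rational function of $\alpha\notin\overkt$) and that $\Psi\ne0$ (true, since $f$ is irreducible and $y\,q(x_0)-p(x_0)$ is not divisible by $f$), so that Lemma~\ref{lm:pointsinequality} genuinely applies. I expect the conceptual work to be the choice of $h$ that controls the interplay, through $N$, between $\divs(\beta)^{-}$ and the pole divisor of the rational function of $\alpha$ so as to get the \emph{sharp} leading coefficient $(1+\tfrac nN)n_0$, and the bulk of the labour to be propagating the constants of Theorem~\ref{thm:riemann-roch2} through the resultant estimates of Propositions~\ref{prop:resultant}, \ref{prop:height2} and~\ref{prop:heightofzero}.
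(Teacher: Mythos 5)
Your proposal departs from the paper's argument in a crucial way, and the departure creates a genuine gap. The paper takes a nonzero element $\gamma$ of the Riemann--Roch space $\frakL(D)$ with $D=(N+n)n_1\divs(\beta)^{-}-Nn_0\divs(\bar\alpha)^{-}$ (where $\bar\alpha=\alpha/(a_1\alpha+a_2)$ is a fractional linear change of $\alpha$ alone, making $\supp\divs(\bar\alpha)^{-}$ disjoint from $\supp\divs(\beta)^{-}$), and applies Lemma~\ref{lm:pointsinequality} \emph{twice}, once to the pair $(\gamma,\beta^{(N+n)n_1})$ and once to the pair $(\bar\alpha^{Nn_0},\gamma^{-1})$, with $\gamma$ serving as the intermediate that links $\alpha$ and $\beta$. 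Theorem~\ref{thm:riemann-roch2} is invoked precisely to get an explicit $(G,H)$-representation of this $\gamma$ in the coordinates $(\alpha,\beta,1)$. You instead try to replace $\gamma$ by an \emph{explicit} rational function $h=p(\alpha)/q(\alpha)$ of $\alpha$ alone with $\divs(\beta)^{-}\le\divs(h)^{-}$, and run a single application of Lemma~\ref{lm:pointsinequality} on the pair $(\beta,h)$. This cannot produce the sharp leading coefficient, and here is why.

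The pole divisor of any rational function of $\alpha$ alone is a $\Z_{\ge0}$-combination of the full fibres $\alpha^{*}(\rho)=\sum_{\frakQ\to\rho}g_\frakQ\,\frakQ$, where $g_\frakQ$ is the ramification index. If over a single value $\rho$ of $\alpha$ there are two places $\frakQ,\frakQ'$ with unequal ratios $e_\frakQ/g_\frakQ\neq e_{\frakQ'}/g_{\frakQ'}$ (where $e_\frakQ=\nu_\frakQ(\beta)^{-}$), then forcing $\divs(\beta^M)^{-}\le\divs(h)^{-}$ forces $\deg_x(h)\ge M\sum_\rho\max_{\frakQ\to\rho} e_\frakQ/g_\frakQ$, and one always has $n_1\sum_\rho\max_\frakQ e_\frakQ/g_\frakQ\ge n_0$, with \emph{strict} inequality whenever such a mismatch occurs. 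The resulting chain $MT(c_1)\le\deg_x(h)\,T(c_0)+O(1)$ then gives $n_1T(c_1)\le(n_1\sum_\rho\max_\frakQ e_\frakQ/g_\frakQ)\,T(c_0)+O(1)$ with a leading coefficient \emph{strictly larger} than $n_0$ and independent of $N$, so the factor $(1+n/N)$ cannot absorb it for $N$ large. Concretely, take $f=x_0^2x_1^3-x_1+1$ (irreducible, $n_0=2$, $n_1=3$, $n=5$): $\beta$ has two simple poles $\frakQ_2,\frakQ_3$, both lying over $\alpha=0$, while the third place $\frakQ_1$ over $\alpha=0$ has $\nu_{\frakQ_1}(\beta)\ge 0$. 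Any $h=p(\bar\alpha)/q(\bar\alpha)$ with $\divs(\beta^M)^{-}\le\divs(h)^{-}$ must pick up a pole of order $\ge M$ also at $\frakQ_1$, giving $\deg_x(h)\ge M$ and $n_1\deg_x(h)/M\ge 3>2=n_0$. Your method therefore yields at best $3T(c_1)\le 3T(c_0)+C'$, whereas the theorem demands $3T(c_1)\le 2(1+5/N)T(c_0)+C$, false for $N>10$ on a height-unbounded family of points $(c_0,c_1)$. The Riemann--Roch theorem is what gives a function $\gamma$ whose pole divisor is supported only on $\{\frakQ_2,\frakQ_3\}$ (and not on the conjugate $\frakQ_1$), which no rational function of $\alpha$ can do; and this existence is exactly where the hypothesis $\deg D\ge\mathrm{genus}$ and hence the parameter $N$ enter.

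Two further points. First, a projective change of coordinates $\calL$ of $\bP^2$ (which transforms $\alpha=x_0/x_2$ and $\beta=x_1/x_2$ simultaneously) generically makes $\divs(\alpha)^{-}$ and $\divs(\beta)^{-}$ \emph{equal} (both are supported at $x_2=0$), not disjoint; the disjointness in the paper is achieved by a fractional linear substitution in $\alpha$ \emph{alone}, which is not induced by any $\calL\in\GL_3(k)$ that fixes $\beta$. Second, your observation that the ``small gap'' in \cite{eremenko} concerns the places where $\nu_\frakP(h)<0$ misses what the paper actually repairs: since $\gamma\in\frakL(D)$ and $\frakP$ is neither a pole of $\beta$ nor of $\bar\alpha$, one has $\nu_\frakP(\gamma)\ge0$ automatically, and the case that must be handled separately is $\nu_\frakP(\gamma)>0$ (i.e.\ $(c_0,c_1,1)$ lies on $G=0$), which the paper treats via Proposition~\ref{prop:intersection}. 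Your patch for your own exceptional case is fine as far as it goes, but it addresses a different degeneracy than the one the paper flags.
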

\begin{proof}
If one of $c_i$ is in $k$ then the height of the other one is not greater than $T(f)$. The inequalities then obviously hold. In the following, we assume that  neither $c_0$ nor $c_1$ is in $k$.

Let $\calR$ be the algebraic function field associated to $f$ and $\alpha,\beta\in \calR\setminus \overkt$ satisfy that $f(\alpha,\beta)=0$.
Choose $a_1,a_2\in k$ such that $a_2\neq 0$ and
$$\supp(\divs(\alpha/(a_1\alpha+a_2))^{-})\cap \divs(\beta)^{-}=\emptyset.$$
Such $a_1,a_2$ exist due to Lemma~\ref{lm:distinctpoles}. Set $\bar{\alpha}=\alpha/(a_1\alpha+a_2)$. Consider the divisor
$$D=(N+n)n_1\divs(\beta)^{-} -Nn_0\divs(\bar{\alpha})^{-}.$$
Note that $\deg(\divs(\bar{\alpha})^{-})=n_1$, $\deg(\divs(\beta)^{-})=n_0$ and
$$n_0n_1\geq n_0+n_1-1\geq n-1. $$
 So
 $$\deg(D)=nn_0n_1\geq n(n-1). $$
 This implies that $\deg(D)$ is greater than the genus of $f=0$ and thus $\frakL(D)\neq \{0\}$. Denote $\bfxi=(\alpha,\beta,1)$ and by $F(x_0,x_1,x_2)$ the homogenization of $f$ . We claim that $T_\bfxi(D)\leq T(f)$. Note that $T_\bfxi(D)=\max\{T_\bfxi(\divs(\bar{\alpha})^{-}), T_\bfxi(\divs(\beta)^{-})\}.$ For each $\bfa\in \calS_\bfxi(\divs(\beta)^{-})$, $\bfa$ is of the form $(b_0,b_1,0)$ where $b_0,b_1$ satisfies that $F(b_0,b_1,0)=0$. So $T(\bfa)\leq T(F)=T(f)$ and then $T_\bfxi(\divs(\beta)^{-})\leq T(f)$.  If $a_1=0$ then each point in $\calS_\bfxi(\divs(\bar{\alpha})^{-})$ is of the form $(b_0,b_1,0)$ too and so $T_\bfxi(\divs(\bar{\alpha})^{-})\leq T(f)$. Otherwise, for each $\frakP\in \divs(\bar{\alpha})^{-}$, one has that
$$\nu_\frakP(\alpha)=\nu_\frakP(a_2\bar{\alpha}/(1-a_1\bar{\alpha}))=\nu_\frakP(\bar{\alpha})-\nu_\frakP(1-a_1\bar{\alpha})=0.$$
Moreover $\pi_\frakP(\alpha)=-a_2/a_1$. This implies that each point of $\calS_\bfxi(\divs(\bar{\alpha})^{-})$ is of the form $(-a_2/a_1, b, 1)$ whose height is not greater than $T(f)$. Thus $T_\bfxi(\divs(\bar{\alpha})^{-})\leq T(f)$.  Our claim is proved. Note that
$$\deg(D^{+}+D^{-})=2Nn_0n_1+n n_0n_1\leq (2N+n)n^2.$$
Suppose that $\gamma\in \frakL(D)\setminus\{0\}$. Due to Theorem~\ref{thm:riemann-roch2}, $\gamma=G(\bfxi)/H(\bfxi)$ where $G,H$ are two homogeneous polynomials of degree not greater than 
$$
  2^{2s+1}(n+1)\left(\deg(D^{+}+D^{-})+2^{s-2}n\right)\leq 2^{2s+1}n(n+1)(2Nn+n^2+2^{s-2})
$$
and 
\begin{align*}
   T(G), T(H) &\leq 2^{s^2/2+15s/2+8}n^{s+6}(n+1)^3\left(\deg(D^{+}+D^{-})+2^{s-2}n\right)^3 T(f)\\
   &\leq 2^{s^2/2+15s/2+8}n^{s+9}(n+1)^3\left(2Nn+n^2+2^{s-2}\right)^3 T(f).
\end{align*}
Set 
$$
   \tilde{C}=2^{s^2/2+15s/2+9} n^{s+9}(n+1)^4(2Nn+n^2+2^{s-2})^4T(f).   
$$ 
Without loss of generality, we assume that $G(x_0,x_1,1)$ and $H(x_0,x_1,1)$ have no common factor. Otherwise, by Corollary~\ref{cor:factor}, we may replace $G$ and $H$ by $G/W$ and $H/W$ where $W$ is the greatest common factor of $G$ and $H$. Moreover, multiplying by suitable elements in $\overkt$ if necessary, we can assume that both $G(x_0,x_1,1)$ and $H(x_0,x_1,1)$ have 1 as a coefficient.
Let $\frakP$ be a place of $\calR$ containing $\alpha-c_0$ and $\beta-c_1$. Then $\nu_\frakP(\alpha)=0$ and $\nu_\frakP(\beta)=0$. As $\gamma\in \frakL(D)$, $\nu_\frakP(\gamma)\geq 0$. If $\nu_\frakP(\gamma)>0$, then $\nu_\frakP(G(\alpha,\beta,1))>0$ and so $G(c_0,c_1,1)=0$. Consequently, $(c_0,c_1)$ is a common point of $G(x_0,x_1,1)=0$ and $f(x_0,x_1)=0$. Proposition~\ref{prop:intersection} implies that $T(c_i)\leq \deg(G)T(f)+nT(G)$. It is easy to verify that in this case $T(c_0), T(c_1)$ satisfy the required inequalities. Therefore we only need to prove the case $\nu_\frakP(\gamma)=0$.

Set
\begin{align*}
  h_1(x_1,y)&=\res_{x_0}(f(x_0,x_1),H(x_0,x_1,1)y-G(x_0,x_1,1)\\
  h_2(x_2,y)&=\res_{x_1}(h_1(x_1,y),  x_2-x_1^{(N+n)n_1})
\end{align*}
 where $\res_{x_0}(f,g)$ denotes the resultant of $f$ and $g$ with respect to $x_0$. Note that $h_1(x_1,y)\neq 0$, because $G(x_0,x_1,1)$ and $H(x_0,x_1,1)$ have no common factor. As $D$ is not effective, $\gamma\notin \overkt$. Furthermore, as $h_1(x_1,\gamma)=0$, $\deg(h_1,x_1)>0$. It is easy to see that $h_2\neq 0$ and $h_2(\beta^{(N+n)n_1},\gamma)=0$. Let $\tilde{h}_2$ be an irreducible factor of $h_2$ in $\overkt[x_2,y]$ such that $\tilde{h}_2(\beta^{(N+n)n_1},\gamma)=0$.
Propositions~\ref{prop:resultant} and~\ref{prop:height2}  imply that
\begin{align*}
   T(\tilde{h}_2)&\leq T(h_2)\leq  (N+n)n_1 T(h_1)\\
   & \leq (N+n)n_1\left(\deg(H)T(f)+n(T(G)+T(H))\right)\\
   &\leq (N+n)n_1(2^{2s+1}n(n+1)(2Nn+n^2+2^{s-2})T(f) \\
     &   +2n2^{s^2/2+15s/2+8}n^{s+9}(n+1)^3\left(2Nn+n^2+2^{s-2}\right)^3 T(f) )\\
   &\leq (N+n)n_1 2^{s^2/2+15s/2+9}n^{s+9}(n+1)^4\left(2Nn+n^2+2^{s-2}\right)^3 T(f)\\
   &\leq  2^{s^2/2+15s/2+9}n^{s+9}(n+1)^4\left(2Nn+n^2+2^{s-2}\right)^4 T(f)= \tilde{C}.
\end{align*}
Remark that
$$
    \divs(\beta^{(N+n)n_1})^{-}=(N+n)n_1\divs(\beta)^{-}\geq D^{+}\geq \divs(\gamma)^{-}.
$$
Note that $\pi_\frakP(\beta)=c_1$.
By Lemma~\ref{lm:pointsinequality},
\begin{align}
\label{eq:betagamma}
  T(\pi_\frakP(\gamma)) &\leq T\left(\pi_\frakP(\beta)^{(N+n)n_1}\right)+T(\tilde{h}_2) \\ \notag
  & \leq (N+n) n_1T\left(c_1\right)+\tilde{C}.
\end{align}
Similarly, let $r_1(x_0,y)=\res_{x_1}(f(x_0,x_1), G(x_0,x_1,1)y-H(x_0,x_1,1))$ and
$$r_2(x_2,y)=\res_{x_0}\left(r_1(x_0,y), (a_1x_0+a_2)^{Nn_0}x_2-x_0^{Nn_0}\right).$$ Then $r_2\neq 0$ and $r_2(\bar{\alpha}^{Nn_0},\gamma^{-1})=0$. Let $\tilde{r}_2$ be an irreducible factor of $r_2$ in $\overkt[x_2,y]$ such that $\tilde{r}_2(\bar{\alpha}^{Nn_0},\gamma^{-1})=0$. Applying Propositions~\ref{prop:resultant} and~\ref{prop:height2} again yields that
\begin{align*}
T(\tilde{r})\leq  T(r_2)&\leq Nn_0T(r_1)\leq Nn_0\left( \deg(G)T(f)+n(T(H)+T(G))\right).
\end{align*}
An argument similar to the above implies that $T(\tilde{r})\leq \tilde{C}$.
Since $\supp(\divs(\bar{\alpha})^{-})\cap \supp(\divs(\beta)^{-})=\emptyset$, one has that $Nn_0\delta(\bar{\alpha})^{-}=D^{-}$ and thus
$$
     \divs(\bar{\alpha}^{Nn_0})^{-}=Nn_0\divs(\bar{\alpha})^{-}=D^{-}\leq \divs(\gamma)^{+}=\divs(\gamma^{-1})^{-}.
$$
Furthermore since $a_1c_0+a_2\neq 0$, $\pi_\frakP(\bar{\alpha})=c_0/(a_1c_0+a_2)$.
By Lemma~\ref{lm:pointsinequality},
\begin{align*}
\label{eq:alphagamma}
Nn_0 T\left(\frac{c_0}{a_1c_0+a_2}\right) &=T(\pi_\frakP(\bar{\alpha})^{Nn_0})\leq T(\pi_\frakP(\gamma^{-1}))+T(\tilde{r}_2)\\
&=T(\pi_\frakP(\gamma))+T(\tilde{r}_2) \leq  T(\pi_\frakP(\gamma))+\tilde{C}.
\end{align*}
which together with (\ref{eq:betagamma}) gives
$$
  N n_0T\left(\frac{c_0}{a_1c_0+a_2}\right)\leq (N+n) n_1T\left(c_1\right)+2\tilde{C}.
$$
Proposition~\ref{prop:heightproperty} implies that
\begin{align*}
  \left(1-\frac{n}{N+n}\right)n_0T\left(\frac{c_0}{a_1c_0+a_2}\right)-\frac{2\tilde{C}}{N+n} &\leq  \left(1-\frac{n}{N+n}\right)n_0T\left(c_0\right)-\frac{2\tilde{C}}{N+n}\\
    &\leq n_1T(c_1).
\end{align*}
To prove the inequality in the opppsite direction, consider
$$\tilde{D}=(N+n)n_0\divs(\bar{\alpha})^{-}-Nn_1\divs(\beta)^{-}.$$
Remark that $\deg(D^{+}+D^{-})=\deg(\tilde{D}^{+}+\tilde{D}^{-})$ and $T_\bfxi(D)=T_\bfxi(\tilde{D})$. We have the same bounds for elements in $\calL(\tilde{D})$. A similar argument then implies that
$$
    n_1T(c_1)\leq  \frac{N+n}{N}n_0T(c_0)+\frac{2\tilde{C}}{N}\leq \left(1+\frac{n}{N}\right)n_1T(c_0)+\frac{2\tilde{C}}{N}.
$$
Set $C=2\tilde{C}/N$. Then one gets the required inequalities.
\end{proof}

\section{Main results}
In this section, we always ssume that $f(y,y')=\sum_{i=0}^d a_i(y)y'$ is irreducible over $k(t)$ and
$$
       \ell=\msindex(f)=\max_{i=0}^d \{ \deg(a_i)-2(d-i)\}>0.
$$
Pick $c\in k$ such that $a_0(c)\neq 0$.
Set $y=(cz+1)/z$.  Then $y'=-z'/z^2$. Set
$$
     b_i(z)=a_i((cz+1)/z)z^{\ell+2d-2i} (-1)^i
$$
where $i=0,\dots,d$. Then an easy calculation yields that
\begin{align*}
   g(z,z')=\sum_{i=0}^d b_i(z)z'^i=z^{2d+\ell}f\left(\frac{cz+1}{z}, \frac{-z'}{z^2}\right).
\end{align*}
 As $a_0(c)\neq 0$,
$
   \deg(a_0(cz+1)/z)z^{\deg(a_0)})=\deg(a_0).
$
This implies that $$\deg(b_0)=\ell+2d>2d.$$
Then $\tdeg(g)=2d+\ell$ because
$$
  2d+\ell\leq \tdeg(g)=\max\{ \deg(b_i)+i\} \leq \max\{2d+\ell-2i\}=2d+\ell.
$$
We claim that $g(z,z')$ is irreducible over $k(t)$. First of all, assume that $\ell=\deg(a_{i_0})-2(d-i_0)$ for some $0
\leq i_0 \leq d$. Then we have that
$$
    b_{i_0}(0)=(-1)^{i_0}\cdot\mbox{the leading coefficient of $a_{i_0}(y)$}\neq 0.
$$
If $\gcd(b_0,\dots,b_d)\neq 1$ then the $b_i(z)$ have common zeroes and none of common zeroes is zero. It is easy to see that $(c\eta+1)/\eta$ is a common zero of all $a_i(y)$ if $\eta$ is a common zero of all $b_i$. This contradicts with the fact that $\gcd(a_0,\dots,a_d)=1$.  Secondly, if $g(z,z')$ has a factor with positive degree in $z'$ then $f(y,y')$ will have a factor with positive degree in $y'$, a contradiction. This proves our claim. Remark that $r(t)$ is a nontrivial rational solution of $g(z,z')=0$ if and only if $(cr(t)+1)/r(t)$ is a nontrivial rational solution of $f(y,y')=0$.  The main result of this paper is the following theorem.
\begin{theorem}
\label{thm:boundforsols}
Assume that $f(y,y')=0$ is a first order AODE with positive $\msindex$ and assume further that $f(y,y')$ is irreducible over $k(t)$. Then if $r(t)$ is a rational solution of $f(y,y')=0$ then
$$
   \deg(r(t))\leq (54n^3+9n^2+2^{5n^2})^4 n^{5n^2+12}2^{11n^4+43n^2+34}T(f).
$$
where $n=\tdeg(f)$.
\end{theorem}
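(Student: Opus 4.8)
The plan is to reduce the statement to the transformed equation $g(z,z')=0$, to apply Theorem~\ref{thm:boundpoints1} to an irreducible factor of $g$ over $\overkt$, and to play that estimate against the elementary inequality $T(r')\le 2T(r)$, which holds for every $r\in k(t)$ because $\deg(r')\le 2\deg(r)$ and $T$ coincides with $\deg$ on $k(t)$.

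\emph{Reduction.} If a rational solution $\rho(t)$ of $f(y,y')=0$ lies in $k$, then $\deg(\rho)=0$ and there is nothing to prove. Otherwise $\rho\notin k$, so $\rho\ne c$ and $r(t):=1/(\rho(t)-c)$ is a non-constant rational function; by the remark preceding the theorem it is a rational solution of $g(z,z')=0$, and by Proposition~\ref{prop:heightproperty}(1) together with Remark~\ref{rem:heights}(4) we have $\deg(r)=T(r)=T(\rho)=\deg(\rho)$. Thus it suffices to bound $T(r)$. Also $T(g)\le T(f)$: since $c\in k$, every coefficient of $g$ viewed as a polynomial in $z,z'$ is a $k$-linear combination of the coefficients of $f$, so this follows from Remark~\ref{rem:heights}(2). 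Recall from the construction that $g$ is irreducible over $k(t)$, that $\deg(g,z)=2d+\ell$ while $\deg(g,z')=d$, whence $\deg(g,z)=2\deg(g,z')+\ell$, and that $\tdeg(g)=2d+\ell=\max_i\{\deg(a_i,y)+2i\}\le n+d\le 2n$.

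\emph{Passing to a factor over $\overkt$.} The obstacle is that $g$ need not be irreducible over $\overkt$ (for instance $g=z^2(z')^2-t(cz+1)^6$ splits over $\overkt$), so Theorem~\ref{thm:boundpoints1} does not apply to $g$ directly. I would factor $g=\lambda\prod_{i=1}^{m}g_i$ with $g_i\in\overkt[z,z']$ irreducible over $\overkt$. Since $g$ is irreducible over $k(t)$, the automorphisms of $\overkt$ fixing $k(t)$ permute the $g_i$ transitively; as they fix $z$ and $z'$, all $g_i$ have the same monomial support, in particular the same bidegree $\bigl(\deg(g_i,z),\deg(g_i,z')\bigr)=\bigl((2d+\ell)/m,\,d/m\bigr)$. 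Hence $m\mid\ell$ and $\ell_0:=\deg(g_i,z)-2\deg(g_i,z')=\ell/m$ is a \emph{positive} integer; this is the single point at which the hypothesis $\msindex(f)=\ell>0$ enters. Moreover $\tdeg(g_i)\le\tdeg(g)\le 2n$, and by Corollary~\ref{cor:factor} $T(g_i)\le T(g)\le T(f)$. As $(r,r')$ is a point of $g=0$, it is a point of $g_{i_0}=0$ for some $i_0$.

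\emph{Pinching.} Write $n_0=\deg(g_{i_0},z)$, $n_1=\deg(g_{i_0},z')$, $\tilde n=\tdeg(g_{i_0})$, so $n_0=2n_1+\ell_0$ with $\ell_0\ge 1$, $n_1\le d\le n$, $\tilde n\le 2n$ and $s(g_{i_0})\le(\tilde n-1)(\tilde n-2)/2\le 2n^2$. Take $N:=2n_1\tilde n+1\le 4n^2+1$. Theorem~\ref{thm:boundpoints1} applied to $g_{i_0}$ at $(r,r')$ gives
$$\Bigl(1-\frac{\tilde n}{N+\tilde n}\Bigr)n_0\,T(r)-C_{i_0}\ \le\ n_1\,T(r'),$$
with $C_{i_0}$ the constant~(\ref{eq:boundsforpoints}) having $n,s,T(f)$ replaced by $\tilde n$, $s(g_{i_0})$ and $T(g_{i_0})\le T(f)$. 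Combining with $T(r')\le 2T(r)$ yields
$$\frac{N\ell_0-2n_1\tilde n}{N+\tilde n}\,T(r)=\Bigl(\frac{N}{N+\tilde n}n_0-2n_1\Bigr)T(r)\ \le\ C_{i_0},$$
and since $\ell_0\ge1$ the numerator on the left is at least $N-2n_1\tilde n=1$, so $T(r)\le(N+\tilde n)C_{i_0}$. Substituting $\tilde n\le 2n$, $s(g_{i_0})\le 2n^2$, $N\le 4n^2+1$ and $T(g_{i_0})\le T(f)$ into $C_{i_0}$ and simplifying yields the asserted bound on $\deg(\rho)=T(r)$. The main obstacle is the factoring step: one must verify that the ``$\msindex>0$'' geometry of $g$, namely $\deg(g,z)>2\deg(g,z')$, descends to every $\overkt$-irreducible factor, which is exactly what forces $\ell_0\ge1$ and hence makes the coefficient $\frac{N}{N+\tilde n}n_0-2n_1$ strictly positive; everything else is the routine, if lengthy, estimation of~(\ref{eq:boundsforpoints}).
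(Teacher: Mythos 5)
Your proof is correct and follows the paper's proof essentially step for step: transform $f$ into $g$, factor $g$ over $\overkt$ into conjugate irreducible factors of equal bidegree (so that $\ell_0=\ell/m\ge 1$), apply Theorem~\ref{thm:boundpoints1} to the factor vanishing at $(r,r')$, and pinch the resulting inequality against $T(r')\le 2T(r)$. The only deviations are cosmetic — you take $N=2n_1\tilde n+1$ where the paper takes $N=\tilde n^2$, and you record the sharper estimate $\tdeg(g)=2d+\ell\le n+d\le 2n$ where the paper uses the looser $\le 3n$ — and because every parameter you feed into~(\ref{eq:boundsforpoints}) is no larger than the paper's, the ``simplifying'' step you leave implicit does indeed land within the stated constant.
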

\begin{proof}
We shall use the notations as above. Due to the above discussion, we only need to consider the differential equation $g(z,z')=0$. Denote $n=\tdeg(f)$ and $d=\deg(f,y')$. One sees that $T(g)\leq T(f), d=\deg(g,z')$ and
$$
  \deg(g,z)=2d+\ell=\tdeg(g)\leq 3n.
 $$
 Suppose that
$$
   g=h_1h_1\dots h_m
$$
where $h_i$ is irreducible over $\overkt$. Since $g$ is irreducible over $k(t)$, one has that all $h_i$ are conjugate to each other and then
\begin{align*}
       \deg(h_i, z)=\deg(g,z)/m,\,\,\deg(h_i, z')=\deg(g,z')/m=d/m.
\end{align*}
By Corollary~\ref{cor:factor}, $T(h_i)\leq T(g)\leq T(f)$. Assume that $r(t)$ is a rational solution of $g(z,z')=0$ then $r(t)$ is a rational solution of all $h_i=0$. In particular, $h_1(r(t),r'(t))=0$. Denote $\tilde{n}=\tdeg(h_1)$ and $\tilde{d}=\deg(h_1,z')$. Then
\begin{align*}
    \tilde{n}& =\tdeg(g)/m=(2d+\ell)/m\leq 3n/m \\
    \tilde{d}&=\deg(g,z')/m=d/m.
\end{align*}
Set $N=\tilde{n}^2$. By Theorem~\ref{thm:boundpoints1} and Remark~\ref{rem:heights}, one has that
$$
\frac{N}{N+\tilde{n}}\frac{\deg(g,z)}{\deg(g,z')} \deg(r(t))-\frac{C}{\tilde{d}}\leq \deg(r'(t))
$$
where  
\begin{align*}
C=(2\tilde{n}^3+\tilde{n}^2+2^{s-2})^4\tilde{n}^{s+7}(\tilde{n}+1)^4 2^{\frac{s^2}{2}+\frac{15s}{2}+10}T(f).
\end{align*}
Note that $s$ is the number of quadratic transformations applied to transfer $h_1=0$ to an algebraic curve with only ordinary singularities. Due to Theorem 2 in Chapter 7 of \cite{fulton}, $s$ can be chosen to be an integer not greater than 
$$(\tilde{n}-1)(\tilde{n}-2)/2\leq (3n-1)(3n-2)/2\leq 9n^2/2.$$
Remark that $\deg(r'(t))\leq 2\deg(r(t))$. Thus
\begin{equation}
\label{eq:leftinequality}
    \left(\frac{N}{N+\tilde{n}}\frac{\deg(g,z)}{\deg(g,z')}-2\right)\deg(r(t))\leq \frac{C}{\tilde{d}}.
\end{equation}
As $m$ divides both $\deg(g,z)$ and $\deg(g,z')$, $m$ divides $\ell$. Set $\ell=m\bar{\ell}$. Then
\begin{align*}
    \frac{N}{N+\tilde{n}}\frac{\deg(g,z)}{\deg(g,z')}-2=\frac{\tilde{n}^2(2d+\ell)}{(\tilde{n}^2+\tilde{n})d}-2
    =\frac{\tilde{n}\ell-2d}{(\tilde{n}+1)d}&\geq \frac{\bar{\ell}\deg(g,z)-2d}{(\tilde{n}+1)d}\\
    &\geq \frac{1}{(\tilde{n}+1)d}.
\end{align*}
This together with (\ref{eq:leftinequality}) implies that 
\begin{align*}
    \deg(r(t))&\leq m(\tilde{n}+1) C \\
    &\leq 4n(2\tilde{n}^3+\tilde{n}^2+2^{s-2})^4\tilde{n}^{s+7}(\tilde{n}+1)^42^{\frac{s^2}{2}+\frac{15s}{2}+10}T(f)\\
    &\leq n(54n^3+9n^2+2^{s-2})^4 (4n)^{s+11}2^{\frac{s^2}{2}+\frac{15s}{2}+12}T(f)\\
    &\leq (54n^3+9n^2+2^{s-2})^4 n^{s+12} 2^{\frac{s^2}{2}+\frac{19s}{2}+34}T(f)\\
    &\leq (54n^3+9n^2+2^{5n^2})^4 n^{5n^2+12}2^{11n^4+43n^2+34}T(f).
\end{align*}
The second inequality holds because $ m(\tilde{n}+1)\leq 3n+m\leq 4n$.
\end{proof}
\begin{remark}
Theorem~\ref{thm:boundforsols} implies that an autonomous first order AODE $f=0$ with positive $\msindex$ has no nontrival rational solutions, because $T(f)=0$. In fact, suppose that $f=0$ has a nontrival rational solution. Then it will have infinitely many rational solutions. By Corollary 4.6 of \cite{feng-feng}, $f=0$ has no movable singularities. However, as $f=0$ has positive $\msindex$, Fuchs' criterion implies that $f=0$ has movable singularities, a contradiction.
\end{remark}

In \cite{vo-grasegger-winkler1}, the authors developed two algorithms to compute rational solutions of maximally comparable first-order AODEs and first order quasi-linear AODEs respectively. Let us first recall the definition of maximally comparable first order AODEs. Suppose that $f=\sum_{i,j}a_{i,j}y^iy'^j$ is a differential polynomial over $k(t)$. Denote
$$
S(f)=\{(i,j)\in \N^2 \mid a_{i,j}\neq 0\}.
$$
If there is $(i_0,j_0)\in S(f)$ satisfying that $i_0+j_0\geq i+j$ and $i_0+2j_0>i+2j$ for every $(i,j)\in S(f)$, then we say that $f$ is maximally comparable. The following examples shows that their algorithms can not deal with all first order AODEs with positive $\msindex$.
\begin{example}
Let
$$
f=yy'^m+y^{2m+1}+t
$$
where $m\geq 1$. Then $S(f)=\{(1,m),(2m+1,0),(0,0)\}$. Since $2m+1+0\geq 1+m$ but $2m+1+2\cdot 0=1+2\cdot m$, $f$ is not maximally comparable. 
On the other hand, we have that 
$
   \msindex(f)=1>0.
$
\end{example}

\section*{References}
\bibliographystyle{plain}
\bibliography{symsolutions}

\begin{thebibliography}{10}

\bibitem{aroca-cano-feng-gao}
J.~M. Aroca, J.~Cano, R.~Feng, and X.~S. Gao.
\newblock Algebraic general solutions of algebraic ordinary differential
  equations.
\newblock In {\em I{SSAC}'05}, pages 29--36. ACM, New York, 2005.

\bibitem{barkatou}
Moulay~A. Barkatou.
\newblock On rational solutions of systems of linear differential equations.
\newblock volume~28, pages 547--567. 1999.
\newblock Differential algebra and differential equations.

\bibitem{chau-winkler}
L.~X. Ch\^{a}u~Ng\^{o} and Franz Winkler.
\newblock Rational general solutions of first order non-autonomous
  parametrizable {ODE}s.
\newblock {\em J. Symbolic Comput.}, 45(12):1426--1441, 2010.

\bibitem{chevalley}
Claude Chevalley.
\newblock {\em Introduction to the theory of algebraic functions of one
  variable}.
\newblock Mathematical Surveys, No. VI. American Mathematical Society,
  Providence, R.I., 1963.

\bibitem{eremenko}
A.~Eremenko.
\newblock Rational solutions of first-order differential equations.
\newblock {\em Ann. Acad. Sci. Fenn. Math.}, 23(1):181--190, 1998.

\bibitem{feng-gao}
Ruyong Feng and Xiao-Shan Gao.
\newblock A polynomial time algorithm for finding rational general solutions of
  first order autonomous {ODE}s.
\newblock {\em J. Symbolic Comput.}, 41(7):739--762, 2006.

\bibitem{feng-feng}
Shuang Feng and Ruyong Feng.
\newblock Descent of ordinary differential equations with rational general
  solutions.
\newblock {\em J. Syst. Sci. Complex.}, 2020.

\bibitem{freitag-moosa}
James Freitag and Rahim Moosa.
\newblock Finiteness theorems on hypersurfaces in partial
  differential-algebraic geometry.
\newblock {\em Adv. Math.}, 314:726--755, 2017.

\bibitem{fulton}
William Fulton.
\newblock {\em Algebraic curves}.
\newblock Advanced Book Classics. Addison-Wesley Publishing Company, Advanced
  Book Program, Redwood City, CA, 1989.
\newblock An introduction to algebraic geometry, Notes written with the
  collaboration of Richard Weiss, Reprint of 1969 original.

\bibitem{hess}
F.~Hess.
\newblock Computing {R}iemann-{R}och spaces in algebraic function fields and
  related topics.
\newblock {\em J. Symbolic Comput.}, 33(4):425--445, 2002.

\bibitem{huang-ierardi}
Ming-Deh Huang and Doug Ierardi.
\newblock Efficient algorithms for the {R}iemann-{R}och problem and for
  addition in the {J}acobian of a curve.
\newblock {\em J. Symbolic Comput.}, 18(6):519--539, 1994.

\bibitem{kovacic}
Jerald~J. Kovacic.
\newblock An algorithm for solving second order linear homogeneous differential
  equations.
\newblock {\em J. Symbolic Comput.}, 2(1):3--43, 1986.

\bibitem{lang}
Serge Lang.
\newblock {\em Fundamentals of {D}iophantine geometry}.
\newblock Springer-Verlag, New York, 1983.

\bibitem{matsuda}
Michihiko Matsuda.
\newblock {\em First-order algebraic differential equations}, volume 804 of
  {\em Lecture Notes in Mathematics}.
\newblock Springer, Berlin, 1980.
\newblock A differential algebraic approach.

\bibitem{ritt}
Joseph~Fels Ritt.
\newblock {\em Differential algebra}.
\newblock Dover Publications, Inc., New York, 1966.

\bibitem{sendra-winkler}
J.~Rafael Sendra and Franz Winkler.
\newblock Tracing index of rational curve parametrizations.
\newblock {\em Comput. Aided Geom. Design}, 18(8):771--795, 2001.

\bibitem{singer1}
Michael~F. Singer.
\newblock Liouvillian solutions of {$n$}th order homogeneous linear
  differential equations.
\newblock {\em Amer. J. Math.}, 103(4):661--682, 1981.

\bibitem{vanderput-singer}
Marius van~der Put and Michael~F. Singer.
\newblock {\em Galois theory of linear differential equations}, volume 328 of
  {\em Grundlehren der Mathematischen Wissenschaften [Fundamental Principles of
  Mathematical Sciences]}.
\newblock Springer-Verlag, Berlin, 2003.

\bibitem{vanhoeij-ragot-ulmer-weil}
Mark van Hoeij, Jean-Fran\c{c}ois Ragot, Felix Ulmer, and Jacques-Arthur Weil.
\newblock Liouvillian solutions of linear differential equations of order three
  and higher.
\newblock volume~28, pages 589--609. 1999.
\newblock Differential algebra and differential equations.

\bibitem{vo-grasegger-winkler2}
N.~Thieu Vo, Georg Grasegger, and Franz Winkler.
\newblock Deciding the existence of rational general solutions for first-order
  algebraic {ODE}s.
\newblock {\em J. Symbolic Comput.}, 87:127--139, 2018.

\bibitem{vo-grasegger-winkler1}
Thieu~N. Vo, Georg Grasegger, and Franz Winkler.
\newblock Computation of all rational solutions of first-order algebraic
  {ODE}s.
\newblock {\em Adv. in Appl. Math.}, 98:1--24, 2018.

\bibitem{winkler}
Franz Winkler.
\newblock The algebro-geometric method for solving algebraic differential
  equations---a survey.
\newblock {\em J. Syst. Sci. Complex.}, 32(1):256--270, 2019.

\end{thebibliography}
\end{document}